\theoremstyle{plain}
\def\net{\mathcal{N}}
\def\calN{\mathcal{N}}
\def\calG{\mathcal{G}}
\def\calC{\mathcal{C}}
\def\calD{\mathcal{D}}
\def\calR{\mathcal{R}}
\def\calL{\mathcal{L}}
\def\calI{\mathcal{I}}
\def\calF{\mathcal{F}}
\def\calU{\mathcal{U}}
\def\calS{\mathcal{S}}
\def\bbN{\mathbb{N}}
\def\bbR{\mathbb{R}}
\def\bbQ{\mathbb{Q}}
\def\bbL{\mathbb{L}}
\def\bbZ{\mathbb{Z}}
\def\xy{\overline{\mathfrak{m}}}
\def\bfxy{\mathbf{xy}}
\def\bfC{\textbf{C}}
\def\bfA{\textbf{A}}
\def\bfK{\textbf{K}}
\def\bfB{\textbf{B}}
\def\bfS{\textbf{S}}
\def\bfE{\textbf{E}}
\def\bfW{\textbf{W}}
\def\bfw{\textbf{w}}
\def\bfv{\textbf{v}}
\def\bfG{\textbf{G}}
\def\bfP{\textbf{P}}
\def\bfU{\textbf{U}}
\def\bfZ{\textbf{Z}}
\def\bfD{\textbf{D}}
\def\nat{\mathbb{N}}
\def\pideux{$\mathsf{\Pi}^2$}
\def\pitrois{$\mathsf{\Pi}^3$}
\def\wit{\mathit{wit}}
\def\ext{\mathsf{ext}}
\def\pot{\mathsf{pot}}
\def\cin{\mathsf{cin}}
\def\Live{\mathsf{Live}}
\def\POT{\mathsf{POT}}
\def\Inv{\mathsf{Inv}}
\newcommand{\hatt}[1]{\hat{\vphantom{\rule{1pt}{5.5pt}}\smash{\hat{#1}}}}
\title{Unbounded product-form Petri nets\footnote{This
    work has been supported by ERC project EQualIS
    (FP7-308087).}}
\author[1]{Patricia Bouyer}
\author[2]{Serge Haddad}
\author[1]{Vincent Jug\'e}
\affil[1]{LSV, CNRS, ENS Paris-Saclay, Universit\'e Paris-Saclay,
  France} 
\affil[2]{LSV, CNRS, ENS Paris-Saclay, Inria, Universit\'e
  Paris-Saclay, France}
\subjclass{D.2.2 -- Petri nets}
\keywords{Performance evaluation, infinite-state systems, Petri nets,
 steady-state distribution}
\begin{document}

\maketitle

\begin{abstract}
  Computing steady-state distributions in infinite-state stochastic
  systems is in general a very difficult task. Product-form Petri nets
  are those Petri nets for which the steady-state distribution can be
  described as a natural product corresponding, up to a normalising
  constant, to an exponentiation of the markings. However, even though
  some classes of nets are known to have a product-form distribution,
  computing the normalising constant can be hard. The class of
  (closed) $\mathsf{\Pi}^3$-nets has been proposed in an earlier work, for
  which it is shown that one can compute the steady-state distribution
  efficiently. However these nets are bounded. In this paper, we
  generalise queuing Markovian networks and closed $\mathsf{\Pi}^3$-nets to
  obtain the class of open $\mathsf{\Pi}^3$-nets, that generate
  infinite-state systems. We show interesting properties of these nets:
  (1) we prove that liveness can be decided in polynomial time, and
  that reachability in live $\mathsf{\Pi}^3$-nets can be decided in
  polynomial time; (2) we show that we can decide ergodicity of such
  nets in polynomial time as well; (3) we provide a pseudo-polynomial
  time algorithm to compute the normalising constant.
\end{abstract}

\section{Introduction}

\subparagraph*{Quantitative analysis of stochastic infinite-state systems.}
Performance
measures of stochastic systems can be roughly classified in two
categories: those related to the transient behaviour as expressed by a
temporal logic formula and those related to the long-run behaviour
whose main measure is the steady-state distribution (when it exists).

There are different relevant questions concerning
the steady-state distribution of infinite-state systems: (1) given a
state and a threshold, one can ask whether the steady-state
probability of this state is above (or below) the threshold; (2) given
a state, one can compute the steady-state probability of this state,
either in an exact way, or in an accurate approximate way; and (3) one
can give a symbolic representation of both the set of reachable states
and its associated distribution (or an accurate approximation
thereof).

Clearly the last question is the most difficult one, and the first
breakthrough in that direction has been obtained in the framework of
open queuing Markovian networks: in those systems,
the measure of a state is obtained as the product of terms, where each
term is related to the parameters of a queue (service and visit rate)
and the number of clients in the queue~\cite{jack63}.  In order to get
a probability distribution over the set of reachable states, this
product is normalised by a constant, whose computation is easy when
the service rates of the queues do not depend on the number of
clients. This work has been adapted to closed networks, and
the main contribution in~\cite{Gordon67} consists in computing the
normalising constant without enumerating the (finite) reachability
set, leading to an algorithm which runs in polynomial-time w.r.t. the
size of the network and the number of clients.  Later, Markov chains
generated by a stochastic Petri net with a single unbounded place
(that is, quasi-birth death processes) have been
investigated~\cite{FLORIN1986}, and an algorithm which approximates up
to arbitrary precision
the steady-state distribution has been proposed; however the
complexity of the algorithm is very high, since it requires the
computation of the finite reachability sets of some subnets, whose
size may be non primitive recursive. More recently, the abstract
framework of (infinite-state) Markov chains with a finite eager
attractor (e.g. probabilistic lossy channel systems) has been used to
develop an algorithm which approximates up to arbitrary precision the
steady-state distribution as well~\cite{AbdullaHMS06}, but there is no
complexity bound for the algorithm.

\subparagraph*{Product-form Petri nets.}  While queuing
networks are very interesting since they allow for an explicit
representation of the steady-state distribution, they
lack two important features available in Petri nets~\cite{gspnbook,molloy},
which are very relevant for modelling concurrent systems:
resource competition and process synchronisation.  So very soon
researchers have tried to get the best of the two formalisms and they
have defined subclasses of Petri nets for which one can establish
product-form steady-state distributions. 
Historically, solutions have been based on
purely behavioural properties (i.e. by an analysis of the reachability
graph) like in~\cite{lazar-robert91}, and then progressively have
moved to more and more structural
characterisations~\cite{li_geor92,CHT}.  Building on the work
of~\cite{CHT}, \cite{HMSS-pe59(4)} has established the first
purely structural condition for which a product-form steady-state
distribution exists, and designed a polynomial-time algorithm to check
for the condition (see also \cite{MairesseNguyen10} for an alternative
characterisation). These nets are called $\mathsf{\Pi}^2$-nets.  However the
computation of the normalising constant remains a difficult open issue
since a naive approach in the case of finite-state $\mathsf{\Pi}^2$-nets
would require to enumerate the potentially huge reachability
state-space.  Furthermore, the lower bounds shown in~\cite{HMN-fi13}
for behavioural properties of $\mathsf{\Pi}^2$-nets
strongly suggest that the computation of the normalising constant
can probably not be done in an efficient way.
In~\cite{CHT,SB}, the authors introduce semantical classes
of product-form Petri nets for which this constant is computed
in pseudo-polynomial time. However their approach suffers
two important drawbacks: (1) checking whether a net fufills
this condition as least as hard as the reachability problem,
and (2) the only syntactical class for which this condition
is fulfilled boils down to queuing networks.

To overcome this problem, the model of $\mathsf{\Pi}^3$-nets is defined
in~\cite{HMN-fi13} as
a subclass of $\mathsf{\Pi}^2$-nets obtained by structuring the
synchronisation between concurrent activity flows in layers. This
model strictly generalises closed product-form queuing networks (in
which there is a single activity flow).  Two interesting properties of
those nets is that liveness for $\mathsf{\Pi}^3$-nets and reachability for
live $\mathsf{\Pi}^3$-nets can both be checked in polynomial time.
Furthermore, from a quantitative point-of-view, the normalising
constant of the 
steady-state distribution can be
efficiently computed using an elaborated dynamic programming
algorithm.

Product-form Petri nets have been applied for the specification and
analysis of complex systems. From a modelling point-of-view,
compositional approaches have been proposed~\cite{Balsamobis,BM2009}
as well as hierarchical ones~\cite{Harrison2011}. Application fields
have also been identified, for instance, hardware design and more
particularly RAID storage~\cite{Harrison2011}, or software
architectures~\cite{BalsamoMarin2011}.

\subparagraph*{Our contributions.}
Unfortunately
$\mathsf{\Pi}^3$-nets generate finite-state systems.  Here we address this
problem by introducing and studying open $\mathsf{\Pi}^3$-nets. Informally,
an open $\mathsf{\Pi}^3$-net has a main activity flow, which roughly
corresponds to an open queuing network, and has other activity flows
which are structured as in (standard, or closed) $\mathsf{\Pi}^3$-nets.
More precisely, in the case of a single activity flow this model is
exactly equivalent to an open queuing network, but the general model is
enriched with other activity flows, raising difficult computation
issues. In particular, several places may in general be unbounded.
Open $\mathsf{\Pi}^3$-nets are particularly appropriate when modelling,
in an open environment, protocols and softwares designed in layers.
In adddition, they allow to specify dynamical management of resources
where processes may produce and consume them with no \emph{a priori} upper bound
on their number.  Our results on open $\mathsf{\Pi}^3$-nets can be summarised as
follows:
\begin{itemize}
\item We first establish that the liveness problem 
  can be solved in polynomial time, and that the boundedness as well
  as the reachability problem in live nets can also be solved in
  polynomial time. On the other side, we show that the unboundedness,
  the reachability and even the covering problem become NP-hard
  without the liveness assumption.
\item Contrary to the case of closed $\mathsf{\Pi}^3$-nets, open
  $\mathsf{\Pi}^3$-nets may not be ergodic (that is, there may not exist a
  steady-state distribution). We design a polynomial-time algorithm to
  decide ergodicity of an open $\mathsf{\Pi}^3$-net.
\item Our main contribution is the computation of the normalising
  constant for ergodic live $\mathsf{\Pi}^3$-nets. Our procedure combines
  symbolic computations and dynamic programming. The time complexity
  of the algorithm is polynomial w.r.t. the size of the structure of
  the net and the maximal value of integers occuring in the
  description of the net (thus pseudo-polynomial).
  As a side result, we improve the complexity for computing the
  normalising constant of closed $\mathsf{\Pi}^3$-nets that was given
  in~\cite{HMN-fi13} (the complexity was the same, but was assuming
  that the number of activity flows is a constant).
\end{itemize}

In Section~\ref{sec:formalism}, we introduce and illustrate
product-form nets, and recall previous results.  In
Section~\ref{sec:qualitative}, we focus on qualitative behavioural
properties, while quantitative analysis is developed in
Section~\ref{sec:quantitative}.
All proofs are postponed to the appendix.

\section{Product-form Petri nets}
\label{sec:formalism}

\subparagraph*{Notations.} Let $A$ be a matrix over $I\times J$, one
denotes $A(i,j)$ the item whose row index is $i$ and column index is
$j$. When $I$ and $J$ are disjoint, $W(k)$ denotes the row
(resp. column) vector indexed by $k$ when $k \in I$ (resp. $k \in J$).
Given a real vector $\mathbf{v}$ indexed by $I$ its norm, denoted
$\|\mathbf{v}\|$, is defined by $\|\mathbf{v}\|=\sum_{i\in I}
|\mathbf{v}(i)|$.  Sometimes, one writes $\mathbf{v}_i$ for
$\mathbf{v}(i)$.  Given two vectors $\mathbf{v}$, $\mathbf{w}$ indexed
by $I$ their scalar product denoted $\mathbf{v} \cdot \mathbf{w}$ is
defined by $\sum_{i\in I} \mathbf{v}_i \mathbf{w}_i$.  Finally, if
$\mathbf{v}$ is a vector over $I$, we define its support as
$\mathsf{Supp}(\mathbf{v}) = \{i \in I \mid \mathbf{v}(i) \ne 0\}$.

We briefly recall Petri nets and stochastic Petri nets.  The state of
a Petri net, called a \emph{marking} is defined by the number of
\emph{tokens} contained in every \emph{place}.  A Petri net models
concurrent activities by \emph{transitions} whose enabling requires
tokens to be consumed in some places and then tokens to be produced in
some places.

\begin{definition}[Petri net]
  \label{def:petrinet}
  A {\em Petri net} is a tuple $\mathcal{N}= (P,T,W^-,W^+)$ where:
\begin{itemize}
  \item $P$ is a finite set of {\em places};
  \item $T$ is a finite set of {\em transitions}, disjoint from $P$;
  \item $W^-$ and $W^+$ are $P \times T$ matrices with coefficients
    in $\mathbb{N}$.
\end{itemize}
\end{definition}

$W^-$ (resp. $W^+$) is called the backward (resp. forward) incidence
matrix, $W^-(p,t)$ (resp. $W^+(p,t)$) specifies the number of tokens
consumed (resp. produced) in place $p$ by the \emph{firing} of
transition $t$, and $W^-(t)$ (resp. $W^+(t)$) is the $t$-th column of
$W^-$ (resp. $W^+$).  One assumes that for all $t\in T$, $W^-(t)\neq
W^+(t)$ (i.e. no useless transition) and for all $t'\neq t$, either
$W^-(t)\neq W^-(t')$ or $W^+(t)\neq W^+(t')$ (i.e. no duplicated
transition); this will not affect our results.

A \emph{marking} of $\mathcal{N}$ is a vector of $\mathbb{N}^P$; in the sequel
  we will often see $m$ as a multiset ($m(p)$ is then the number of
  occurrences of $p$), or as a symbolic sum $\sum_{p \in P \mid
    m(p)>0} m(p)\, p$. The symbolic sum $\sum_{p \in P'} p$
will be more concisely written $P'$.
    Transition $t$ is \emph{enabled}
by marking $m \in \mathbb{N}^P$ if for all $p\in P$, $m(p)\geqslant
W^-(p,t)$. When enabled, its firing leads to the marking $m'$ defined
by: for all $p \in P$, $m'(p)=m(p)-W^-(p,t)+W^+(p,t)$. This firing is
denoted by $m\xrightarrow{t} m'$.  The \emph{incidence matrix}
$W=W^+-W^-$ allows one to rewrite the marking evolution as $m'=m+W(t)$
if $m\geqslant W^-(t)$.
Given an initial marking $m_0 \in \mathbb{N}^P$, the \emph{reachability set}
$\mathcal{R}_\mathcal{N}(m_0)$ is the smallest set containing $m_0$ and closed
under the firing relation.  When no confusion is possible one denotes
it more concisely by $\mathcal{R}(m_0)$. Later, if $m \in \mathcal{R}(m_0)$, we
may also write $m_0 \to^* m$. We will call $(\mathcal{N},m_0)$ a
\emph{marked} Petri net

\begin{figure}[htbp]
\begin{center}
\begin{tikzpicture}[xscale=0.7,yscale=0.7]
\path (3,10) node[draw,circle,inner sep=2pt,minimum size=3mm,color=red] (p0) 
[label=below:$p_2$] {};
\path (5,8) node[draw,circle,inner sep=2pt,minimum size=3mm,color=red] (p1) 
[label=left:$p_1$] {};
\path (1,6) node[draw,circle,inner sep=2pt,minimum size=3mm,fill=gray!50,color=red] (pext) 
[label=left:$p_{\mathsf{ext}}$] {};
\path (3,4) node[draw,circle,inner sep=2pt,minimum size=3mm,color=red] (p2) 
[label=above:$p_0$] {};

\path (5,9) node[draw,rectangle,inner sep=2pt,minimum size=3mm,color=red] (t0) 
[label=left:$t_0$] {};
\path (1,7) node[draw,rectangle,inner sep=2pt,minimum size=3mm,color=red] (t1) 
[label=left:$t_1$] {};
\path (5,7) node[draw,rectangle,inner sep=2pt,minimum size=3mm,color=red] (t2) 
[label=left:$t_2$] {};
\path (1,5) node[draw,rectangle,inner sep=2pt,minimum size=3mm,color=red] (t3) 
[label=left:$t_3$] {};
\path (3,3) node[draw,rectangle,inner sep=2pt,minimum size=3mm,color=red] (t4) 
[label=right:$t_4$] {};

\draw[->,>=stealth,color=red] (p0) --(5,9.5)--(t0);
\draw[->,>=stealth,color=red] (p0) --(1,9.5)-- (t1);
\draw[->,>=stealth,color=red] (t0) -- (p1);
\draw[->,>=stealth,color=red] (p1) -- (t2);
\draw[->,>=stealth,color=red] (t1) -- (pext);
\draw[->,>=stealth,color=red] (pext) -- (t3);

\draw[->,>=stealth,color=red] (t2) --(5,4.5)-- (p2);
\draw[->,>=stealth,color=red] (t3) --(1,4.5)-- (p2);

\draw[->,>=stealth,color=red] (p2) -- (t4);
\draw[->,>=stealth,color=red] (t4) -- (-1,3)--(-1,10) --(p0);

\path (7,10) node[draw,circle,inner sep=2pt,minimum size=3mm,color=green!60!black] (q0) 
[label=below right:$q_3$] {$\bullet$};
\path (7,8) node[draw,circle,inner sep=2pt,minimum size=3mm,color=green!60!black] (q1) 
[label=right:$q_2$] {};
\path (7,6) node[draw,circle,inner sep=2pt,minimum size=3mm,color=green!60!black] (q2) 
[label=right:$q_1$] {};
\path (7,4) node[draw,circle,inner sep=2pt,minimum size=3mm,color=green!60!black] (q3) 
[label=right:$q_0$] {};

\path (7,9) node[draw,rectangle,inner sep=2pt,minimum size=3mm,color=green!60!black] (t5) 
[label=left:$t_5$] {};
\path (7,7) node[draw,rectangle,inner sep=2pt,minimum size=3mm,color=green!60!black] (t6) 
[label=left:$t_6$] {};
\path (7,5) node[draw,rectangle,inner sep=2pt,minimum size=3mm,color=green!60!black] (t7) 
[label=left:$t_7$] {};
\path (7,3) node[draw,rectangle,inner sep=2pt,minimum size=3mm,color=green!60!black] (t8) 
[label=left:$t_8$] {};

\draw[->,>=stealth,color=green!60!black] (q0) -- (t5);
\draw[->,>=stealth,color=green!60!black] (t5) -- (q1);
\draw[->,>=stealth,color=green!60!black] (q1) -- (t6);
\draw[->,>=stealth,color=green!60!black] (t6) -- (q2);
\draw[->,>=stealth,color=green!60!black] (q2) -- (t7);
\draw[->,>=stealth,color=green!60!black] (t7) -- (q3);
\draw[->,>=stealth,color=green!60!black] (q3) -- (t8);
\draw[->,>=stealth,color=green!60!black] (t8) -- (13,3)-- (13,10)--(q0);

\path (9,7) node[draw,circle,inner sep=2pt,minimum size=3mm,color=blue] (r0) 
[label=right:$r_0$] {$\bullet$};
\path (11,7) node[draw,circle,inner sep=2pt,minimum size=3mm,color=blue] (r1) 
[label=left:$r_1$] {};

\path (10,8) node[draw,rectangle,inner sep=2pt,minimum size=3mm,color=blue] (t9) 
[label=above:$t_9$] {};
\path (10,6) node[draw,rectangle,inner sep=2pt,minimum size=3mm,color=blue] (t10) 
[label=below:$t_{10}$] {};

\draw[->,>=stealth,color=blue] (r0) -- (t10);
\draw[->,>=stealth,color=blue] (t10) -- (r1);
\draw[->,>=stealth,color=blue] (r1) -- (t9);
\draw[->,>=stealth,color=blue] (t9) -- (r0);

\draw[->,>=stealth,<->,gray] (t5) -- (r0);
\draw[->,>=stealth,<->,gray] (t6) -- (r0);
\draw[->,>=stealth,gray] (r0) -- (t7);
\draw[->,>=stealth,gray] (t8) --(9,5) --(r0);

\draw[->,>=stealth,gray] (t0) -- (q1);
\draw[->,>=stealth,gray] (q1) -- (t2);

\draw[->,>=stealth,gray] (t1) -- (q2);
\draw[->,>=stealth,gray] (q2) -- (t3);

\draw[->,>=stealth,gray] (t4) -- (3,2.6)node[pos=1,right] {$3$}-- (-1.5,2.6)-- 
(-1.5,10.5)--(7,10.5)--(q0);

\draw[->,>=stealth,gray] (q0) -- (t0) node[pos=.5,below]{$3$};
\draw[->,>=stealth,gray] (q0) --(5,10) -- (t1) node[pos=.5,below]{$3$};

\end{tikzpicture}
\caption{A marked Petri net (initial marking: $q_3+r_0$).}
\label{fig:PN}
\end{center}
\end{figure}
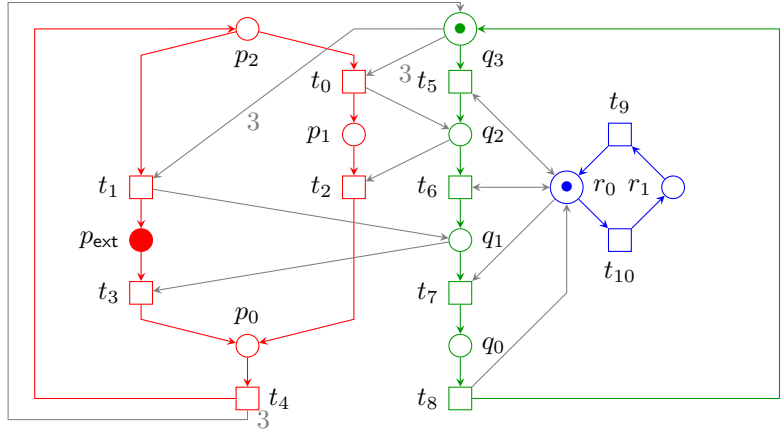

From a qualitative point of view, one is interested by several
standard relevant properties including reachability. \emph{Liveness} means that the modelled
system never loses its capacity: for all $t\in T$ and $m\in
\mathcal{R}(m_0)$, there exists $m' \in \mathcal{R}(m)$ such that $t$ is enabled
in $m'$.  \emph{Boundedness} means that the modelled system is a
finite-state system: there exists $B \in \mathbb{N}$ such that for all $m\in
\mathcal{R}(m_0)$, $\|m\|\leqslant B$. While decidable, these properties are
costly to check: (1) Reachability is
EXPSPACE-hard in general and PSPACE-complete for 1-bounded nets~\cite{EsparzaN94},
(2) using results of~\cite{Peterson81} liveness has the same complexity,
and (3) boundedness is EXPSPACE-complete~\cite{Rackoff78}. 
Furthermore there is a family of bounded nets
$\{\mathcal{N}_n\}_{n\in \mathbb{N}}$ whose size is polynomial in $n$ such that the
size of their reachability set is lower bounded by some Ackermann
function~\cite{Jantzen86}.

\begin{example} 
  An example of marked Petri net is given on Figure~\ref{fig:PN}.
  Petri nets are represented as bipartite graphs where places are
  circles containing their initial number of tokens and transitions
  are rectangles.  When $W^-(p,t)>0$ (resp. $W^+(p,t)>0$) there is an
  edge from $p$ (resp. $t$) to $t$ (resp. $p$) labelled by $W^-(p,t)$
  (resp. $W^+(p,t)$). This label is called the \emph{weight} of
  this edge, and is omitted when its value is equal to
  one.  For sake of readability, one merges edges $p
  \xrightarrow{W^-(p,t)} t$ and $t \xrightarrow{W^+(p,t)} p$ when
  $W^-(p,t)=W^+(p,t)$ leading to a pseudo-edge with two arrows, as in
  the case of $(r_0,t_5)$.

  The net of Figure~\ref{fig:PN} is not live. Indeed $t_0$, $t_1$,
  $t_2$, $t_3$ and $t_4$ will never be enabled due to the absence of
  tokens in $p_0$, $p_1$, $p_2$, $p_{\mathsf{ext}}$.  Suppose that one deletes
  the place $p_{\mathsf{ext}}$ and its input and output edges.  Consider the
  firing sequence $q_3+r_0 \xrightarrow{t_5}q_2+r_0 \xrightarrow{t_6}
  q_1+r_0 \xrightarrow{t_3}p_0+r_0\xrightarrow{t_4}p_2+3q_3+r_0$.
  Since the marking $p_2+3q_3+r_0 $ is (componentwise) larger
     than the initial marking
    $q_3+r_0$, we can iterate this sequence and generate markings with
    an arbitrarily large number of tokens in $p_2$ and $q_3$; this new
    net is unbounded. Applying technics that we will develop in this
    paper (Section~\ref{sec:qualitative}), we will realize that this
    new net is actually live.
\end{example}

\begin{definition}[Stochastic Petri net]
  \label{def:spetrinet}
  A {\em stochastic Petri net} (SPN) is a pair $(\mathcal{N},\lambda)$ where:
\begin{itemize}
\item $\mathcal{N} = (P,T,W^-,W^+)$ is a Petri net;
  \item $\lambda$ is a mapping from $T$ to $\mathbb{R}_{>0}$. 
\end{itemize}
\end{definition}

A marked stochastic Petri net $(\mathcal{N},\lambda,m_0)$ is a stochastic
Petri net equipped with an initial marking.  In a marked stochastic
Petri net, when becoming enabled a transition triggers a random delay
according to an exponential distribution with (firing) rate
$\lambda(t)$.  When several transitions are enabled, a \emph{race}
between them occurs.  Accordingly, given some initial marking $m_0$,
the stochastic process underlying
a SPN is a continuous time Markov chain (CTMC) whose (possibly infinite) set
of states is $\mathcal{R}(m_0)$ and such
that the rate $\mathbf{Q}(m,m')$ of a transition from some $m$ to some
$m'\neq m$ is equal to $\sum_{t:m\xrightarrow{t}m'} \lambda(t)$ (as
usual $\mathbf{Q}(m,m)=-\sum_{m'\neq m}\mathbf{Q}(m,m')$).  Matrix
$\mathbf{Q}$ is called the \emph{infinitesimal generator} of the
CTMC (see~\cite{Cinlar75} for more details).

From a quantitative point of view, one may be interested in studying the
long-run behaviour of the net and in particular in deciding whether
there exists a steady-state distribution and in computing it in the
positive case. When the underlying graph of the CTMC
is strongly connected (i.e. an \emph{irreducible} Markov chain) it
amounts to deciding whether there exists a non-zero distribution $\pi$
over $\mathcal{R}(m_0)$ such that $\pi \cdot \mathbf{Q} =0$.

It is in general non-trivial to decide whether there exists a
  steady-state distribution, and even when such a distribution exists,
  given some state it is hard to compute its steady-state probability
  (see the introduction). 
  Furthermore, even when the net is bounded, the
  size of the reachability set may prevent any feasible computation of
  $\pi$.

Thus one looks for subclasses of nets where the steady-state
distribution $\pi$ can be computed more easily and in particular when
$\pi$ has a \emph{product-form}, that is: there exist a constant
  vector $\mu \in \mathbb{R}_{\ge 0}^P$ only depending on $\mathcal{N}$
  such that for all $m\in
\mathcal{R}(m_0)$, $\pi(m)=G\cdot \prod_{p\in P} \mu_p^{m(p)}$ 
 where $G=\left(\sum_{m\in \mathcal{R}(m_0)}\prod_{p\in P}
  \mu_p^{m(p)} \right)^{-1}$ is the so-called \emph{normalising
  constant}~\cite{Gordon67}.

The most general known class of nets admitting a structural
product-form distribution is the class of
$\mathsf{\Pi}^2$-nets~\cite{HMSS-pe59(4)}. It
 is based on two key ingredients: 
\emph{bags} and \emph{witnesses}.
A bag 
 is a multiset
of tokens that is consumed or produced by some transition.
Considering a bag as a whole, one defines the bag graph whose vertices
are bags and, given a transition $t$, an edge goes from the bag consumed
by $t$ to the bag produced by $t$. Observe that there are at most
$2|T|$ vertices and exactly $|T|$ edges. This alternative
  representation of a net via the bag graph does not lose any
  information:
from the bag graph, one can recover the original net. Formally:

\begin{definition}[Bag graph of a Petri net]
  \label{def:baggraph}
  Let $\mathcal{N} = (P,T,W^-,W^+)$ be a Petri net. Then its bag graph is a
  labelled graph $G_{\mathcal{N}}=(V_{\mathcal{N}},E_{\mathcal{N}})$ defined by:
  \begin{itemize}
  \item $V_{\mathcal{N}}=\{W^-(t),W^+(t) \mid t \in T\}$ the finite set of
    {\em bags};
  \item $E_{\mathcal{N}}=\{W^-(t)\xrightarrow{t} W^+(t) \mid t \in T\}$.
\end{itemize}
\end{definition}

\begin{example} 
  The bag graph of the net of Figure~\ref{fig:PN} is described in
  Figure~\ref{fig:bag-witness}. The bag is written inside the vertex
  (the external label of the vertices will be explained
    later). Observe that this graph has three connected components,
  both of them being strongly connected.
\end{example}

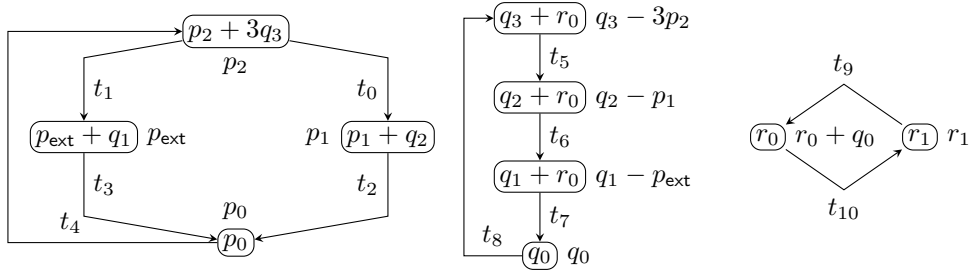
\begin{figure}[htbp]
\begin{center}
\begin{tikzpicture}[xscale=1,yscale=.7]

\path (3,9.75) node[draw,rectangle,rounded corners,inner sep=2pt,minimum size=3mm] (p0) 
[label=below:$p_2$] {$p_2+3q_3$};
\path (5,7.75) node[draw,rectangle,rounded corners,inner sep=2pt,minimum size=3mm] (p1) 
[label=left:$p_1$] {$p_1+q_2$};
\path (1,7.75) node[draw,rectangle,rounded corners,inner sep=2pt,minimum size=3mm] (pext) 
[label=right:$p_{\mathsf{ext}}$] {$p_{\mathsf{ext}}+q_1$};
\path (3,5.75) node[draw,rectangle,rounded corners,inner sep=2pt,minimum size=3mm] (p2) 
[label=above:$p_0$] {$p_0$};

\draw[->,>=stealth] (p0) --(5,9.25)--(p1)node[pos=.5,left] {$t_0$};
\draw[->,>=stealth] (p0) --(1,9.25)-- (pext)node[pos=.5,right] {$t_1$};
\draw[->,>=stealth] (p1) --(5,6.25)node[pos=.5,left] {$t_2$}-- (p2);
\draw[->,>=stealth] (pext) --(1,6.25)node[pos=.5,right] {$t_3$}-- (p2);
\draw[->,>=stealth] (p2) -- (0,5.75)node[pos=.7,above] {$t_4$}--(0,9.75) --(p0);

\path (7,10) node[draw,rectangle,rounded corners,inner sep=2pt,minimum size=3mm] (q0) 
[label= right:$q_3-3p_2$] {$q_3+r_0$};
\path (7,8.5) node[draw,rectangle,rounded corners,inner sep=2pt,minimum size=3mm] (q1) 
[label=right:$q_2-p_1$] {$q_2+r_0$};
\path (7,7) node[draw,rectangle,rounded corners,inner sep=2pt,minimum size=3mm] (q2) 
[label=right:$q_1-p_{\mathsf{ext}}$] {$q_1+r_0$};
\path (7,5.5) node[draw,rectangle,rounded corners,inner sep=2pt,minimum size=3mm] (q3) 
[label=right:$q_0$] {$q_0$};

\draw[->,>=stealth] (q0) -- (q1)node[pos=.5,right] {$t_5$};
\draw[->,>=stealth] (q1) -- (q2)node[pos=.5,right] {$t_6$};
\draw[->,>=stealth] (q2) -- (q3)node[pos=.5,right] {$t_7$};
\draw[->,>=stealth] (q3) -- (6,5.5)node[pos=.5,above] {$t_8$} -- (6,10)--(q0);

\path (10,7.75) node[draw,rectangle,rounded corners,inner sep=2pt,minimum size=3mm] (r0) 
[label=right:$r_0+q_0$] {$r_0$};
\path (12,7.75) node[draw,rectangle,rounded corners,inner sep=2pt,minimum size=3mm] (r1) 
[label=right:$r_1$] {$r_1$};

\draw[->,>=stealth] (r0) -- (11,6.75)--(r1)node[pos=0,below] {$t_{10}$};
\draw[->,>=stealth] (r1) -- (11,8.75)--(r0)node[pos=0,above] {$t_{9}$};

\end{tikzpicture}
\caption{Bags and witnesses.}
\label{fig:bag-witness}
\end{center}
\end{figure}

We now turn to the notion of witness. 
A queuing network models a single activity flow where activities are
modelled by queues and clients leave their current queue when served
and enter a new one depending on a routing probability.  In
$\mathsf{\Pi}^2$-nets there are several activity flows, one per component of
the bag graph.  So one wants to witness production and consumption of
every bag $b$ by the transition firings.
In order to
witness it,
one looks for a linear combination of the places $\mathit{wit}$ such that for
every firing of a transition $t$ that produces (resp. consumes) the
bag $b$, for every marking $m$, $m \cdot \mathit{wit}$ is increased
(resp. decreased) by one unit, and such that all other transition
firings let $m \cdot \mathit{wit}$ invariant.

\begin{definition}[Witness of a bag]
  \label{def:witness}
  Let $\mathcal{N}=(P,T,W^-,W^+)$ be a a Petri net, $b\in V_{\mathcal{N}}$ and $wit
  \in \mathbb{Q}^P$.  Then $wit$ is a witness of $b$ if:
  \[
\begin{cases}
\mathit{wit}\cdot W(t) =-1 & \mbox{if } W^{-}(t) = b;\\
\mathit{wit}\cdot W(t) =1 & \mbox{if } W^{+}(t)= b;\\
\mathit{wit}\cdot W(t) =0 & \mbox{otherwise}.
\end{cases}
\]
\end{definition}

\begin{example}
  All bags of the net of Figure~\ref{fig:PN} have (non unique)
  witnesses.  We have depicted them close to their vertices in
  Figure~\ref{fig:bag-witness}.  For instance, consider the bag
  $b=q_1+r_0$: transition $t_6$ produces $b$ while transition $t_7$
  consumes it. Let us check that $w=q_1-p_{\mathsf{ext}}$ is a witness of
  $b$. $t_6$ produces a token in $q_1$ and the marking of $p_{\mathsf{ext}}$
  is unchanged.  $t_7$ consumes a token in $q_1$ and the marking of
  $p_{\mathsf{ext}}$ is unchanged. The other transitions that change the
  marking of $q_1$ and $p_{\mathsf{ext}}$ are $t_1$ and $t_3$. However since
  they simultaneously produce or consume a token in both places, $m
  \cdot w$ is unchanged ($m$ is the current marking).
\end{example}

The definition of $\mathsf{\Pi}^2$-nets
relies
on structural properties of the net and on the existence of
witnesses.  Every connected component of the graph bag will
represent an activity flow of some set of processes
where every activity (i.e. a bag) has a witness. 

\begin{definition}[$\mathsf{\Pi}^2$-net]
  \label{def:pideuxnet}
  Let $\mathcal{N}$ be a Petri net. Then $\mathcal{N}$ is a $\mathsf{\Pi}^2$-net if:
  \begin{itemize}
   \item every connected 
  component of $G_{\mathcal{N}}$ is strongly connected;
   \item every bag $b$ of $V_{\mathcal{N}}$
   admits a witness (denoted $\mathit{wit}_b$).
  \end{itemize}
\end{definition}

Observe that the first
condition called \emph{weak reversibility}
ensures that the reachability graph is strongly connected
since the firing of any
transition $t$ can be ``withdrawn'' by the firing of transitions
occurring along a path from $W^+(t)$ to $W^-(t)$ in the bag graph.
The complexity of reachability in weakly reversible nets
is still high: EXPSPACE-complete~\cite{CardozaLM76}.

The next theorem shows the interest of $\mathsf{\Pi}^2$-nets.  Let us define
$\lambda(b)$ the \emph{firing rate} of a bag $b$ by
$\lambda(b)=\sum_{t \mid W^-(t)=b}\lambda(t)$ and the choice
probability $pr_t$ of transition $t$ by
$pr_t=\frac{\lambda(t)}{\lambda({W^-(t)})}$.  The routing matrix
$\mathbf{P}$ of bags is the stochastic matrix indexed by bags such
that for all $t$, $\mathbf{P}(W^-(t),W^+(t))=pr_t$ and
$\mathbf{P}(b,b')=0$ otherwise.  Consider $\mathbf{vis}$ some positive
solution of $\mathbf{vis}\cdot \mathbf{P}=\mathbf{vis}$. Since
$\mathbf{P}$ is a stochastic matrix such a
  vector always exists but
is not unique in general; however given $b,b'$ two bags of the same
connected component, $\frac{\mathbf{vis}(b')}{\mathbf{vis}(b)}$
measures the ratio between visits of $b'$ and $b$ in the discrete time
Markov chain induced by $\mathbf{P}$. 

\begin{theorem}[\cite{HMSS-pe59(4)}]
 \label{th:pideux}
Let $(\mathcal{N},\lambda,m_0)$ be a marked stochastic $\mathsf{\Pi}^2$-net. Then defining 
for all $m \in \mathcal{R}(m_0)$,
$\mathbf{v}(m)=\prod_{b\in V_{\mathcal{N}}} 
\left(\frac{\mathbf{vis}(b)}{\lambda(b)}\right)^{m \cdot \mathit{wit}_b }$,
the next assertions hold:
\begin{itemize}
 \item  $\mathbf{v} \cdot \mathbf{Q}=0$;
 \item $(\mathcal{N},\lambda,m_0)$ is ergodic iff $\|\mathbf{v}\| <\infty$ 
 \item when ergodic, the associated Markov chain
 admits $\|\mathbf{v}\|^{-1}\mathbf{v}$ 
 as steady-state distribution.
\end{itemize}
\end{theorem}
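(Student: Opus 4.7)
The plan is to first establish the algebraic balance equation $\mathbf{v}\cdot \mathbf{Q}=0$, then derive the other two bullets from standard CTMC theory via the weak reversibility of $\mathsf{\Pi}^2$-nets. For the balance equation, rather than attacking the global balance directly I would prove a \emph{partial balance} identity for every bag $b\in V_\calN$: for all $m\in\calR(m_0)$,
\[
\mathbf{v}(m) \sum_{\substack{t:\, W^-(t)=b \\ m\geq W^-(t)}} \lambda(t) \;=\; \sum_{\substack{t:\, W^+(t)=b \\ m\geq W^+(t)}} \mathbf{v}(m-W(t))\,\lambda(t).
\]
Summing these identities over all $b$ yields exactly the global balance, since on the left hand side every transition enabled at $m$ appears once (indexed by the bag it consumes) and on the right hand side every transition firing into $m$ appears once (indexed by the bag it produces).

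The proof of the partial balance hinges on the defining property of witnesses. For a transition $t$ with $m-W(t)\xrightarrow{t} m$, the exponent $(m-W(t))\cdot \mathit{wit}_b$ differs from $m\cdot \mathit{wit}_b$ by $-W(t)\cdot \mathit{wit}_b$, which equals $-1$ if $W^+(t)=b$, $+1$ if $W^-(t)=b$, and $0$ otherwise (crucially, bags other than $b$ contribute nothing). This gives the pointwise ratio
\[
\mathbf{v}(m-W(t)) \;=\; \mathbf{v}(m)\cdot \frac{\mathbf{vis}(W^-(t))\,\lambda(W^+(t))}{\mathbf{vis}(W^+(t))\,\lambda(W^-(t))}.
\]
Injecting this into the right hand side of the partial balance and using $\lambda(t)=pr_t\cdot \lambda(W^-(t))$, plus the fact that the enabling condition $m\geq W^+(t)=b$ is the same for all $t$ with $W^+(t)=b$, reduces the identity to $\mathbf{1}[m\geq b]\cdot \mathbf{vis}(b)=\mathbf{1}[m\geq b]\sum_{t:\,W^+(t)=b}\mathbf{vis}(W^-(t))\cdot pr_t$, which is precisely the $b$-coordinate of the traffic equation $\mathbf{vis}\cdot \mathbf{P}=\mathbf{vis}$.

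For the last two bullets, weak reversibility (Definition~\ref{def:pideuxnet}, first clause) ensures that the reachability graph of $(\calN,m_0)$ is strongly connected, so the CTMC with generator $\mathbf{Q}$ is irreducible. Since $\mathbf{v}$ is a strictly positive invariant measure, it is unique up to a scalar; classical irreducible CTMC theory then gives that the chain admits a steady-state distribution iff $\mathbf{v}$ is summable, in which case that distribution is $\|\mathbf{v}\|^{-1}\mathbf{v}$. The main technical subtlety to handle carefully is the asymmetric enabling condition on the two sides of the partial balance — $m\geq W^-(t)$ on the left, $m\geq W^+(t)$ on the right — which the argument above resolves by grouping transitions by the bag they consume/produce so that the threshold reduces to the common condition $m\geq b$.
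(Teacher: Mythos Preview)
The paper does not prove this theorem at all: it is quoted from~\cite{HMSS-pe59(4)} as a known result, with no proof or sketch given. Your argument is correct and is essentially the standard derivation of product-form invariant measures in this setting --- partial balance per bag, the witness property to compute $\mathbf{v}(m-W(t))/\mathbf{v}(m)$, and reduction to the traffic equation $\mathbf{vis}\cdot\mathbf{P}=\mathbf{vis}$ --- followed by irreducibility (via weak reversibility) and classical CTMC theory for the last two bullets.

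One small wording slip: the parenthetical ``bags other than $b$ contribute nothing'' should say ``bags other than $W^-(t)$ and $W^+(t)$ contribute nothing'' (you are iterating over all bags $b'$ in the product defining $\mathbf{v}$, and exactly two of them have nonzero exponent change). The formula you write immediately afterwards is correct, so this is cosmetic.
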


Let us discuss the computational complexity of the product-form of the
previous theorem. First deciding whether a net is a $\mathsf{\Pi}^2$-net is
straightforwardly performed in polynomial time.  The computation of
the visit ratios, the witnesses and the rate of bags can also be done 
in polynomial time.  So computing an item of vector $\mathbf{v}$ is
easy. However without additional restriction on the nets the
normalising constant $\|\mathbf{v}\|^{-1}$ requires to enumerate
all items of $\mathcal{R}_{\mathcal{N}}(m_0)$,
which can be prohibitive.

So in~\cite{HMN-fi13}, the authors introduce $\mathsf{\Pi}^3$-net, a subclass
of $\mathsf{\Pi}^2$-net which still strictly generalises closed queuing networks, 
obtained by structuring the activity flows of
the net represented by the components of the bag graph.  First there
is a bijection between places and bags such that the input
(resp. output) transitions of the bag produce (consume) one token of
this place. The other places occurring in the bag may be viewed as
resources associated with the bag and thus the \emph{potential} of the
bag is its total number of resources.  Second the components of the
graph may be ordered as $N$ layers such that the resources of a bag
occurring in layer $i$ correspond to places associated with bags of
layer $i-1$ (for $i>1$) and more precisely to those with maximal
potential. Informally a token in such a place means that it is a
resource available for the upper layer.

\begin{definition}[$\mathsf{\Pi}^3$-net]
  \label{def:pitrois}
  Let $\mathcal{N}$ be a net. Then $\mathcal{N}$ is an $N$-closed $\mathsf{\Pi}^3$-net
  if:
\begin{itemize}
  \item There is a bijection between $P$ and $V_{\mathcal{N}}$.
  Denoting $b_p$ the bag associated with place $p$ (and
  $p_b$ the place associated with bag $b$), we have $b_p(p)=1$.\\
  The potential of a place $\mathsf{pot}(p)$ is equal to $\|b_p\|-1$.
  \item $V_{\mathcal{N}}$ is partitioned into $N$ strongly connected
  components $V_1,\ldots,V_N$. 
  One denotes:\\ $P_i=\{p_b \mid b\in V_i\}$ and
  $P_i^{\max}=\mathrm{argmax}(\mathsf{pot}(p) \mid p \in P_i)$. By convention, 
  $P_0^{\max}=P_0=\emptyset$.
  \item For all $b \in V_i$ and $p\in P\setminus \{p_b\}$, 
  $b(p)>0$ implies $p\in P_{i-1}^{\max}$.
\end{itemize}
A net is an $N$-open $\mathsf{\Pi}^3$-net if it is obtained by deleting some
place $p_{\mathsf{ext}} \in P_N$ (and its input/output edges) from an
$N$-closed $\mathsf{\Pi}^3$-net.
\end{definition}

Given an open $\mathsf{\Pi}^3$-net $\mathcal{N}$, $\overline{\mathcal{N}}$ denotes the
closed net based on which $\mathcal{N}$ has been defined. For every $1 \le i
\le N$, we will later write $P_i^{\neg \max}$ for the set $P_i
\setminus P_i^{\max}$, and $T_i$ for the set of transitions
$t$ such that $W^-(t) \in V_i$.  The next proposition establishes that
$\mathsf{\Pi}^3$-nets are product-form Petri nets.

\begin{restatable}{proposition}{proppideuxtrois}
\label{proppideuxtrois}
  Let $\mathcal{N}$ be a (closed or open) $\mathsf{\Pi}^3$-net.  Then $\mathcal{N}$ is a
  $\mathsf{\Pi}^2$-net.
\end{restatable}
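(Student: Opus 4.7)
The plan is to establish the two defining conditions of a $\mathsf{\Pi}^2$-net for $\mathcal{N}$: weak reversibility of the bag graph components, and existence of a witness for every bag.

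\textbf{Weak reversibility.} For a closed $\mathsf{\Pi}^3$-net, the partition $V_\mathcal{N} = V_1 \cup \cdots \cup V_N$ into strongly connected components is given directly by the definition. For the open case, I first observe that $p_{\mathsf{ext}} \in P_N$ occurs in no bag other than $b_{p_{\mathsf{ext}}}$ itself: indeed, appearing as a resource would require $p_{\mathsf{ext}}$ to lie in some $P_{i-1}^{\max}$ with $i > N$, which is impossible. Consequently, deleting $p_{\mathsf{ext}}$ only replaces the single bag $b_{p_{\mathsf{ext}}}$ by $b' = b_{p_{\mathsf{ext}}} - p_{\mathsf{ext}}$ (possibly coinciding with an existing bag of $V_{N-1}$), and every transition formerly incident to $b_{p_{\mathsf{ext}}}$ is redirected onto $b'$. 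The affected component remains strongly connected.

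\textbf{Witnesses in the closed case.} I would build the witnesses inductively, descending from the top layer. For $p \in P_N$, I set $\mathit{wit}_{b_p} = p$; this works because every bag of $V_k$ with $k < N$ is supported in $P_k \cup P_{k-1}^{\max}$ and so disjoint from $P_N$. For $p \in P_i$ with $i < N$, I define
\[
\mathit{wit}_{b_p} = p - \sum_{q \in P_{i+1}} b_q(p) \cdot \mathit{wit}_{b_q}.
\]
The verification splits by the layer $k$ of the fired transition $t$ with $W(t) = b_{r'} - b_{r}$, where $r, r' \in P_k$. The delicate case is $k = i+1$: the raw change $p \cdot W(t) = b_{r'}(p) - b_{r}(p)$ is cancelled exactly by the correction term once one substitutes $\mathit{wit}_{b_q} \cdot W(t) = \delta_{q,r'} - \delta_{q,r}$ from the inductive hypothesis. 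The cases $k = i$, $k > i+1$ and $k < i$ are immediate, since either the layers prevent $p$ and the bags $b_q$ ($q\in P_{i+1}$) from being touched by $W(t)$, or the bijection gives the right sign directly.

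\textbf{Witnesses in the open case.} The main obstacle is to produce witnesses that do not mention the deleted place $p_{\mathsf{ext}}$. My key observation is that $\phi_N = \sum_{p \in P_N} p$ is a linear invariant of $\overline{\mathcal{N}}$: layer-$N$ transitions preserve it (each bag of $V_N$ has a unique head place in $P_N$, consumed once and produced once) and transitions in lower layers do not touch $P_N$ at all. Since $\phi_N(p_{\mathsf{ext}}) = 1$, I can eliminate the $p_{\mathsf{ext}}$-coefficient from the closed witnesses by subtracting multiples of $\phi_N$: I set $\mathit{wit}^\mathcal{N}_{b'} = \mathit{wit}^{\overline{\mathcal{N}}}_{b_{p_{\mathsf{ext}}}} - \phi_N = -\sum_{p \in P_N \setminus \{p_{\mathsf{ext}}\}} p$ for the new bag, and for each other $b \in V_\mathcal{N}$ I set $\mathit{wit}^\mathcal{N}_b = \mathit{wit}^{\overline{\mathcal{N}}}_b - \mathit{wit}^{\overline{\mathcal{N}}}_b(p_{\mathsf{ext}}) \cdot \phi_N$. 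Using the identity $\phi_N \cdot W_\mathcal{N}(t) = -W_{\overline{\mathcal{N}}}(t)(p_{\mathsf{ext}})$ (which follows from $\phi_N$ being invariant in $\overline{\mathcal{N}}$ and losing its $p_{\mathsf{ext}}$-term in $\mathcal{N}$), a short calculation shows that the modification exactly compensates the erasure of $p_{\mathsf{ext}}$ from $W(t)$, so the witness property transfers from $\overline{\mathcal{N}}$ to $\mathcal{N}$. In the degenerate situation where $b'$ coincides with some bag $b_q$ of $V_{N-1}$, I use as witness of the merged vertex the sum of the two candidate witnesses above; this accounts for both families of producers and consumers now incident to it.
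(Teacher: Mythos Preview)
Your proof is correct and follows essentially the same approach as the paper. The only difference is presentational: the paper bakes the $p_{\mathsf{ext}}$-elimination into the closed-net construction by defining $\mathit{wit}_{b_{p_{\mathsf{ext}}}} = p_{\mathsf{ext}} - P_N$ as a special case (so that all inductively built witnesses already avoid $p_{\mathsf{ext}}$), whereas you construct uniform closed witnesses $\mathit{wit}_{b_p}=p$ on layer~$N$ and then post-process with the invariant $\phi_N$; the resulting open-net witnesses coincide.
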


\begin{example}
  The net of Figure~\ref{fig:PN} is a 3-closed $\mathsf{\Pi}^3$-net.  We have
  used different colors for the layers: $P_3=\{p_0,p_1,p_2,p_{\mathsf{ext}}\}$
  (in red), $P_2=\{q_0,q_1,q_2,q_3\}$ (in green) and $P_1=\{r_0,r_1\}$
  (in blue). The place $q_0$ (resp. $q_1,q_2,q_3$) is associated with
  the bag $q_0$ (resp.  $q_1+r_0,q_2+r_0,q_3+r_0$) and has potential
  $0$ (resp. $1,1,1$).  Thus $P_2^{\max}=\{q_1,q_2,q_3\}$ and indeed
  $q_0$ does not occur in bags of layer $3$. In order to highlight the
  use of resources, edges between places of $P_i$ and transitions of
  $T_{i+1}$ are depicted in gray.
\end{example}

The next theorem shows the interest of closed $\mathsf{\Pi}^3$-nets.
\begin{theorem}[\cite{HMN-fi13}]
\label{th:pitrois}
Let $(\mathcal{N},\lambda,m_0)$ be a $N$-closed $\mathsf{\Pi}^3$-net.  Then:
\begin{itemize}
 \item $(\mathcal{N},m_0)$ is bounded.
 \item One can decide whether $(\mathcal{N},m_0)$ is live in polynomial time.
 \item When $(\mathcal{N},m_0)$ is live, one can decide in polynomial time, 
 given $m$, whether $m\in \mathcal{R}_\mathcal{N}(m_0)$.
\item For any fixed $N$, when $(\mathcal{N},m_0)$ is live, one can compute
  the normalising constant of the steady-state distribution
  (i.e. $\|\mathbf{v}\|^{-1}$ in Theorem~\ref{th:pideux}) in
  polynomial time with respect to $|P|$, $|T|$, the maximal weight of
  the net's edges, and $\|m_0\|$ (thus in pseudo-polynomial time
  w.r.t. the size of $m_0$).
\end{itemize}
\end{theorem}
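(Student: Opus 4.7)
My plan is to construct, for every layer $i$ from top to bottom, a linear form invariant under every firing that bounds $g_i(m) := \sum_{p\in P_i}m(p)$. Set $M_j := \max\{\mathsf{pot}(p) : p \in P_j\}$, so every $p \in P_j^{\max}$ has potential exactly $M_j$. A transition $t \in T_j$ firing $W^-(t) = p + r$ into $W^+(t) = p' + r'$ (with $p, p' \in P_j$ and $r, r'$ supported on $P_{j-1}^{\max}$) moves one token within $P_j$ and shifts $\mathsf{pot}(p')-\mathsf{pot}(p)$ tokens into $P_{j-1}^{\max}$; consequently $g_j$ is unchanged, $\sum_{p \in P_j}m(p)\mathsf{pot}(p)$ changes by $\mathsf{pot}(p')-\mathsf{pot}(p)$, and $\sum_{p \in P_{j-1}}m(p)\mathsf{pot}(p)$ changes by $M_{j-1}(\mathsf{pot}(p')-\mathsf{pot}(p))$. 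For $i=N$ only $T_N$ touches $P_N$ and $g_N$ is trivially conserved. For $i<N$, the recurrence $\alpha_{i,i+1} = -1$, $\alpha_{i,j} = -M_{j-1}\alpha_{i,j-1}$ for $j > i+1$ makes $I_i(m) := g_i(m) + \sum_{j>i}\alpha_{i,j}\sum_{p\in P_j}m(p)\mathsf{pot}(p)$ invariant under every transition (direct check: the contributions of $g_i$, the $\sum_{P_j}$ and the $\sum_{P_{j-1}}$ terms telescope). Bounding $|\sum_{p\in P_j}m(p)\mathsf{pot}(p)| \leq M_j\,g_j(m)$ and using induction on $N-i$ then yields a finite bound on every $g_i$, hence on $\|m\|$.

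\textbf{Liveness and reachability in live nets.} Because each connected component of the bag graph is strongly connected (Proposition~\ref{proppideuxtrois}), firing any single transition of layer $i$ enables every other transition of that layer by traversing the surrounding cycle. Liveness therefore reduces to a bottom-up check: for each $i$, verify that $P_i$ holds enough tokens and $P_{i-1}^{\max}$ provides enough resources to enable at least one transition of $T_i$; this is polynomial in the net size. For the reachability question in live nets, my plan is to characterise $\calR_\calN(m_0)$ as precisely the set of markings $m \in \bbN^P$ satisfying every invariant $I_i(m) = I_i(m_0)$ together with a few layer-local consistency conditions (matching the total potential committed by layer $i$ with the budget actually present in $P_{i-1}^{\max}$). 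Soundness uses invariance; completeness is shown layer by layer, using strong connectivity to reshuffle tokens inside a layer and the fact that potentials exhausted above can be replenished by the cycles below, yielding a polynomial-time decision procedure.

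\textbf{Normalising constant.} Writing $\mu_b := \mathbf{vis}(b)/\lambda(b)$, the task is to compute $\|\mathbf v\| = \sum_{m \in \calR(m_0)} \prod_{b \in V_\calN}\mu_b^{m\cdot\mathit{wit}_b}$. The plan is a dynamic program on the layers, from $N$ down to $1$. Witnesses can be chosen so that $m\cdot\mathit{wit}_{b_p}$ for $p\in P_i$ depends on $m$ only through its restriction to $P_i$ and to $P_{i+1}^{\neg\max}$ (illustrated in Figure~\ref{fig:bag-witness}, where $\mathit{wit}_{q_3+r_0} = q_3 - 3p_2$ lies in $P_2\cup P_3$). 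Thus, once the outer loop has fixed the ``interface'' values $m|_{P_{i+1}^{\neg\max}}$, the contribution of layer $i$ is a sum over layer-$i$ markings constrained by the invariants $I_i$; these restrict the layer-$i$ state space to a pseudo-polynomial volume by the boundedness analysis.

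\textbf{Main obstacle.} The delicate point is keeping the per-layer work polynomial despite the fact that $|P_i|$ itself may be linear in $|P|$: a naive enumeration of layer-$i$ markings costs exponentially in $|P_i|$. The key combinatorial observation (in the spirit of the closed-queueing-network convolution algorithm) is that once the total ``resource budget'' available to layer $i$ is fixed, the strong connectivity of the layer's bag graph lets its contribution factorise as a product of one-dimensional partition sums indexed by the places of $P_i$, computable by a convolution-style DP. Plugging these factorisations into the cross-layer recursion gives an algorithm polynomial in $|P|$, $|T|$, the maximal edge weight and $\|m_0\|$ for each fixed $N$; removing the dependence on $N$ being a constant, as the introduction advertises, demands a careful amortised accounting of the nested convolutions, which I would isolate as a separate lemma.
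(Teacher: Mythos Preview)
This theorem is \emph{cited} from~\cite{HMN-fi13} and is not proved in the present paper; there is no ``paper's own proof'' to compare against directly. However, the paper does re-derive and strengthen all four items (Theorem~\ref{theo:liveness}, Proposition~\ref{prop:invariants}, Theorem~\ref{theo:reach}, Corollary~\ref{coro:boundedness}, Theorem~\ref{theo:constant}), so a comparison with those is meaningful.

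\textbf{Boundedness.} Your invariants $I_i$ are correct, but they couple layer~$i$ to \emph{all} layers above it, with coefficients $\alpha_{i,j}$ that grow multiplicatively in $M_{i+1},\dots,M_{j-1}$; you then need an induction on $N-i$ to extract a bound. The paper instead uses the two-layer invariants $v^{(i)}=\sum_{p\in P_i}p+\sum_{p\in P_{i+1}}\mathsf{cin}(p)\,p$ (Proposition~\ref{prop:invariants}). Because $\mathsf{cin}(p)=\mathsf{POT}_{i+1}-\mathsf{pot}(p)\ge 0$ in a closed net, each $m(p)$ is bounded by $m_0\cdot v^{(i)}$ in one line, with no induction. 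The paper stresses (after Corollary~\ref{coro:invariants}) that this locality to two adjacent layers is precisely what drives the later complexity improvements; your global invariants would not deliver that.

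\textbf{Liveness and reachability.} Your sketch here is too loose to be a proof. ``Verify that $P_i$ holds enough tokens and $P_{i-1}^{\max}$ provides enough resources'' does not pin down the actual criterion: the paper's Theorem~\ref{theo:liveness} requires $m\cdot P_i\ge\min\{\mathsf{pot}(p):p\in P_{i+1},\ m(p)>0\}$, which compares layer~$i$ tokens to the \emph{least} potential among \emph{marked} places of the \emph{next} layer. Similarly, ``a few layer-local consistency conditions'' for reachability is not an argument; the paper shows (Theorem~\ref{theo:reach}) that $\mathcal{R}(m_0)=\bigcap_i\mathsf{Inv}_i(m_0)\cap\bigcap_i\mathsf{Live}_i$, and the completeness direction needs a concrete canonical marking reachable from every candidate (see Lemma~\ref{lem:reach}).

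\textbf{Normalising constant.} The dynamic-programming-over-layers idea is right, but two details are off. First, your example witness $\mathit{wit}_{q_3+r_0}=q_3-3p_2$ uses $p_2\in P_3^{\max}$, contradicting your own claim that witnesses can be chosen in $P_i\cup P_{i+1}^{\neg\max}$; this is repairable (other witnesses exist), but the paper sidesteps the issue entirely by absorbing all witness dependence into per-place constants $\mu_p$ and working with $\hat v(m)=\prod_p\mu_p^{m(p)}$. Second, your ``key observation'' that strong connectivity of a layer's bag graph makes its contribution factorise into one-dimensional sums is not what does the work: the factorisation comes from the \emph{two-layer} structure of the invariants (each $\mathsf{Inv}_i$ constrains only $P_i\cup P_{i+1}^{\neg\max}$), which lets the sum over $\mathcal{R}(m_0)$ split into a convolution indexed by the number of tokens handed across a single layer interface (Lemma~\ref{lemma:product} and Proposition~\ref{proposition:compute-c}). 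Your multi-layer invariants would not give this clean decoupling.
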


The above results do not apply to infinite-state systems and in
particular to the systems generated by open $\mathsf{\Pi}^3$-nets.
In addition, the polynomial-time complexity for computing the
normalising constant requires to fix $N$. We address these issues in
the next sections.

\section{Qualitative analysis}
\label{sec:qualitative}

In this section we first give a simple characterisation of the
liveness property in a marked $\mathsf{\Pi}^3$-net. We then fully characterise
the set of reachable markings in a live marked $\mathsf{\Pi}^3$-net.
These characterisations give polynomial-time algorithms for deciding
liveness of a marked $\mathsf{\Pi}^3$-net, and the boundedness property of a
live marked $\mathsf{\Pi}^3$-net. We end the section
with a coNP-hardness result for the boundedness property of a marked
$\mathsf{\Pi}^3$-net, when it is not live.
For the rest of this section, we assume $\mathcal{N}=(P,T,W^-,W^+)$ is an
open or closed $\mathsf{\Pi}^3$-net with $N$ layers. We further use the
notations of Definition~\ref{def:pitrois}. In particular, if $\mathcal{N}$ is
open, then we write $p_\mathsf{ext}$ for the place which has been removed
(and we call it virtual). We therefore set $P_N^\star = P_N \cup
\{p_\mathsf{ext}\}$ if the net is open and $P_N^\star = P_N$ otherwise; For
every $1 \le i \le N-1$ we define $P_i^\star = P_i$; And we set
$P^\star = \bigcup_{i=1}^N P_i^\star$.

\subsection{Liveness analysis}

We give a simple characterisation of the liveness property through a
dependence between the number of tokens at some layer and potentials
of places activated on the next layer.  More precisely, for every $1
\le i \le N-1$, $\mathsf{Live}_i$ is defined as the set of markings $m$ such
that:
\[ 
m \cdot P_i \ge \min\{\mathsf{pot}(p) \mid p \in P^\star_{i+1} \text{ and }
(m(p)>0 \text{ or } p=p_{\mathsf{ext}})\}.
\]
Note that $p = p_\mathsf{ext}$ can only happen when $\mathcal{N}$ is open and
$i=N-1$. We additionally define $\mathsf{Live}_N$ as the set of markings $m$
such that $m \cdot P_N >0$ if $\mathcal{N}$ is closed, and as the set of all
markings if $\mathcal{N}$ is open. 

For every $1 \le i \le N$ (resp. $1 \le i \le N-1$), we define
$\mathsf{POT}_i = \max\{\mathsf{pot}(p) \mid p \in P_i\}$ when $\mathcal{N}$ is closed
(resp. open). When $\mathcal{N}$ is open, we write $\mathsf{POT}_N = \mathsf{pot}(p_\mathsf{ext})$.
Given a marking $m$,
when no place $p$ fulfills $p \in P^\star_{i+1} \text{ and }
(m(p)>0 \text{ or } p=p_{\mathsf{ext}})$, the minimum is equal to $\mathsf{POT}_{i+1}$. 
Thus given a marking
$m$,  the condition $m\in \mathsf{Live}_i$ for $i<n$ only depends on the values
of $m(p)$ for $p\in P_i \cup P_{i+1}^{\neg \max}$. 

The intuition behind condition $\mathsf{Live}_i$ is the following:
  transitions in $\bigcup_{j \le i} T_j$ cannot create new tokens on
  layer $i$ (layer $i$ behaves like a state machine, and smaller
  layers do not change the number of tokens in that layer); therefore,
  to activate a transition of $T_{i+1}$ out of some marked place $p
  \in P_{i+1}$, it must be the case that enough tokens are already
  present on layer $i$; hence there should be at least as many tokens
  in layer $i$ as the minimal potential of a marked place in layer
  $i+1$.  When $\mathcal{N}$ is open, the virtual place $p_\mathsf{ext}$ behaves like
  a source of tokens, hence it is somehow always ``marked''; this is
  why it is taken into account in the right part of $\mathsf{Live}_i$.
The following characterisation
was already stated
in~\cite{HMN-fi13} in the restricted case of closed nets.

\begin{theorem}\label{theoliveness}
  \label{theo:liveness}
  A marking $m$ is live if and only if for every $1 \le i \le N$, $m
  \in \mathsf{Live}_i$.
\end{theorem}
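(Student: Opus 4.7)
The plan is to prove the two implications separately, leveraging the layered structure of $\mathsf{\Pi}^3$-nets.

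For the forward direction, I would proceed by contrapositive: assume $m \notin \mathsf{Live}_i$ for some $i$ and exhibit a transition never enabled from any marking reachable from $m$. The case $i = N$ for closed nets is the easiest, as $m \cdot P_N$ is a conserved quantity: each transition in $T_N$ has its input and output bags containing exactly one $P_N$ place (one consumed, one produced), while transitions of other layers $T_j$ involve only $P_j \cup P_{j-1}^{\max}$ and thus leave $P_N$ untouched. Hence if $m \cdot P_N = 0$ then no transition of $T_N$ is ever enabled. For $i < N$, if $p^\star \in P^\star_{i+1}$ witnesses the failure of $\mathsf{Live}_i$, I would pick a transition $t^\star$ with $W^-(t^\star) = b_{p^\star}$ (one exists by strong connectivity of the bag subgraph on $V_{i+1}$) and show that $t^\star$ is never enabled from any $m' \in \mathcal{R}(m)$. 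This requires an invariant coupling $m' \cdot P_i^{\max}$ with the marking of $P^\star_{i+1}$, designed to ensure that whenever $p^\star$ is marked in $m'$ (or $p^\star = p_\mathsf{ext}$), one still has $m' \cdot P_i < \mathsf{pot}(p^\star)$.

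For the backward direction, I would first establish that the conjunction of all conditions $m \in \mathsf{Live}_i$ is preserved under every firing, so that every marking reachable from $m$ still satisfies them. I would then prove by induction on $N$ that, from any marking satisfying these conditions, every transition $t \in T_j$ can be made enabled: using strong connectivity of the layer-$j$ bag subgraph, route a $P_j$ token into the input place of $t$; and using the induction hypothesis on the lower layers (which independently satisfy the $\mathsf{Live}_{i}$ conditions for $i<j$), produce the $\mathsf{pot}(p)$ resources required in $P_{j-1}^{\max}$ to fire $t$. The open-net case is handled as a special instance by treating $p_\mathsf{ext}$ as a source/sink at layer $N$.

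The main obstacle is the invariance argument in both directions, since $m \cdot P_i$ is not itself conserved: firing a transition in $T_{i+1}$ corresponding to $b_p \to b_{p'}$ (with $p, p' \in P_{i+1}$) shifts $m \cdot P_i$ by $\mathsf{pot}(p') - \mathsf{pot}(p)$, because it consumes $\mathsf{pot}(p)$ resource tokens from $P_i^{\max}$ and produces $\mathsf{pot}(p')$ back. Consequently one must package $m \cdot P_i$ with information about the marking of $P_{i+1}$ into a composite quantity behaving monotonically along firings, and perform a careful case analysis per firing layer. The open-net setting compounds the difficulty: removing $p_\mathsf{ext}$ breaks the $m \cdot P_N$ conservation law (transitions originally consuming or producing the bag $b_{p_\mathsf{ext}}$ become asymmetric), which is precisely why $\mathsf{Live}_N$ is trivialised in that case and why $p_\mathsf{ext}$ must be included as a perpetually active member of $P_{N-1}^\star$ in $\mathsf{Live}_{N-1}$.
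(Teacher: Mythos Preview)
Your proposal is correct and takes essentially the same approach as the paper: establish that each $\mathsf{Live}_i$ is preserved under firing (the paper's Lemma~\ref{lemma:minimal-marking}), then use an inductive enabling argument across layers (Lemmas~\ref{lemma:+1} and~\ref{lemma:enable}) to show every transition can be made fireable. Two minor simplifications the paper makes relative to your outline: for the forward direction it observes that $m' \notin \mathsf{Live}_i$ already blocks \emph{every} transition of $T_{i+1}$ (since $m' \cdot P_i$ is below the potential of every marked place of $P_{i+1}^\star$), so there is no need to fix a particular witness $p^\star$ and track whether it stays marked; and for the open case it adds back $p_\mathsf{ext}$ with a single token and invokes the closed-net result of~\cite{HMN-fi13} rather than redoing the layer induction directly.
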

\begin{example}
Building on the marked Petri net of
    Figure~\ref{fig:PN}, the marking $q_3+r_0$ is live when the net is
    open but not live if the net is closed. Indeed, transitions of the
    two first layers can trivially be activated from $q_3+r_0$ (hence
    by weak-reversibility from every reachable marking). We see that
    in the case of the closed net, transitions of layer $3$ cannot be
    activated (no fresh token can be produced on that layer). On the
    contrary, in the case of the open net, the token in $q_3$ can be
    moved to $q_1$, which will activate transition $t_3$; from there,
    all transitions of layer $3$ will be eventually activated.
\end{example}

As a consequence of the characterisation of
Theorem~\ref{theo:liveness}, we get:

\begin{corollary}
  We can decide the liveness of a marked $\mathsf{\Pi}^3$-net in polynomial
  time.
\end{corollary}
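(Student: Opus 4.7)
The plan is to apply Theorem~\ref{theoliveness} directly: $(\mathcal{N},m_0)$ is live if and only if $m_0 \in \mathsf{Live}_i$ for every $1 \le i \le N$, so it suffices to evaluate each of these $N$ conditions in polynomial time.

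As preprocessing, I would extract from the description of $\mathcal{N}$ the bag graph $G_\mathcal{N}$, its connected components, the layer decomposition $V_1,\ldots,V_N$, and the corresponding place partition together with the distinguished subsets $P_i^{\max}$; each of these can be produced in time polynomial in $|P|+|T|$ by standard graph algorithms, since $G_\mathcal{N}$ has at most $2|T|$ vertices and exactly $|T|$ edges. The potential $\mathsf{pot}(p)=\|b_p\|-1$ of every place $p$ is read directly from the columns of $W^-$ and $W^+$. In the open case, the virtual place $p_\mathsf{ext}$ is given as part of the input and its potential is recorded together with the others.

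Given this data, for every $1 \le i \le N-1$ I would compute $m_0 \cdot P_i = \sum_{p \in P_i} m_0(p)$ in one pass and, in a second pass, compute
\[
\mu_i = \min\bigl\{\mathsf{pot}(p) \;\big|\; p \in P^\star_{i+1},\ m_0(p)>0 \text{ or } p=p_\mathsf{ext}\bigr\},
\]
adopting the convention $\mu_i = \mathsf{POT}_{i+1}$ when this set is empty. The condition $m_0 \in \mathsf{Live}_i$ then reduces to the single comparison $m_0 \cdot P_i \ge \mu_i$. For the top layer, I would test $m_0 \cdot P_N > 0$ if $\mathcal{N}$ is closed and accept trivially otherwise. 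Each step costs $O(|P|)$ arithmetic operations on integers whose bit-size is bounded by that of $m_0$ plus that of the description of $\mathcal{N}$, so the whole procedure runs in polynomial time.

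There is no substantive obstacle here, since Theorem~\ref{theoliveness} already reduces liveness to these local arithmetic conditions; the only care required is the bookkeeping of $P^\star_{i+1}$ and the inclusion of the virtual place $p_\mathsf{ext}$ in the open case, and remembering that the top-layer condition differs between the closed and open settings.
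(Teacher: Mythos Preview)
Your proposal is correct and follows exactly the approach the paper intends: the paper simply states the corollary ``as a consequence of the characterisation of Theorem~\ref{theo:liveness}'' without further detail, and you have spelled out the routine polynomial-time evaluation of each condition $m_0 \in \mathsf{Live}_i$. The bookkeeping you describe (layer decomposition, potentials, the virtual place $p_\mathsf{ext}$, and the open/closed distinction at layer~$N$) is precisely what is needed, and nothing more.
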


\subsection{Reachable markings}

We will now give a characterisation of the set of reachable markings
$\mathcal{R}_\mathcal{N}(m_0)$ when $m_0$ is live. We will first give linear
invariants of the net: those are vectors in the left kernel of $W$ (or
$P$-flows). The name ``invariants'' comes from the fact that they will
allow to infer real invariants satisfied by the reachable markings.
Furthermore, for every $1 \le i \le N$, for every $p \in P_i$, we
define $\mathsf{cin}(p) = \mathsf{POT}_i - \mathsf{pot}(p)$. Except when $\mathcal{N}$ is open and
$i=N$, the $\mathsf{cin}$ value of a place is nonnegative.
\begin{restatable}{proposition}{propinvariants}
  \label{propinvariants}
  \label{prop:invariants}
  The following vectors are linear
  invariants of $\mathcal{N}$:
  \begin{itemize}
  \item for every $1 \leqslant i \leqslant N-1$, $v^{(i)} = \sum_{p \in P_i} p +
    \sum_{p \in P_{i+1}} \mathsf{cin}(p)\ p$;
  \item if $\mathcal{N}$ is closed, $v^{(N)} = \sum_{p \in P_N} p$.
  \end{itemize}
\end{restatable}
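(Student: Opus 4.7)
The plan is to verify directly that $v^{(i)} \cdot W(t) = 0$ for every transition $t \in T$, which by linearity gives $v^{(i)} \cdot W = 0$, i.e. $v^{(i)}$ is a P-flow. The key structural fact, read off from Definition~\ref{def:pitrois}, is that for every $t \in T_j$ we have $W^-(t) = b_p$ and $W^+(t) = b_{p'}$ for some $p, p' \in P_j$, with $b_p(p) = 1$ and the remaining tokens of $b_p$ lying in $P_{j-1}^{\max}$ and summing to $\mathsf{pot}(p)$ (the same for $b_{p'}$). In particular, the support of $W(t)$ is contained in $P_j \cup P_{j-1}^{\max}$, so in the invariant calculations only very few layers interact with any given transition.

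First I would dispose of $v^{(N)} = \sum_{p \in P_N} p$ in the closed case: transitions in $T_j$ with $j < N$ cannot touch $P_N$, and for $t \in T_N$ the projection $W(t)|_{P_N}$ equals $p' - p$, whose coordinate-sum is $0$. For $v^{(i)} = \sum_{p \in P_i} p + \sum_{p \in P_{i+1}} \mathsf{cin}(p)\, p$ with $1 \le i \le N-1$, the support observation leaves only the cases $t \in T_i$ and $t \in T_{i+1}$. If $t \in T_i$, the $P_{i+1}$ part contributes $0$, and the $P_i$ projection is again $p'-p$ which sums to $0$. If $t \in T_{i+1}$ with $W^-(t) = b_p$ and $W^+(t) = b_{p'}$ (so $p, p' \in P_{i+1}$), then the $P_{i+1}$-contribution equals $-\mathsf{cin}(p) + \mathsf{cin}(p') = \mathsf{pot}(p) - \mathsf{pot}(p')$, using $\mathsf{cin}(q) = \mathsf{POT}_{i+1} - \mathsf{pot}(q)$; meanwhile the $P_i^{\max}$-contribution is the sum of the coefficients (all equal to $1$ in $v^{(i)}$) of the resource tokens of $b_{p'}$ minus those of $b_p$, i.e. $\mathsf{pot}(p') - \mathsf{pot}(p)$. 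The two terms cancel.

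The only delicate point concerns $v^{(N-1)}$ when $\mathcal{N}$ is open, because certain transitions $t \in T_N$ arise from bags that originally involved $p_{\mathsf{ext}}$ in $\overline{\mathcal{N}}$, whose $p_{\mathsf{ext}}$-entry has been deleted when passing to $\mathcal{N}$. Since $p_{\mathsf{ext}} \notin P$, it also has no coordinate in $v^{(N-1)}$, so erasing the $p_{\mathsf{ext}}$-row of $W^\pm(t)$ does not affect the dot product. The definition $\mathsf{POT}_N = \mathsf{pot}(p_{\mathsf{ext}})$ in the open case is tailored precisely so that the preceding calculation still works verbatim: e.g., if $W^-(t)$ is the $p_{\mathsf{ext}}$-bag with its $p_{\mathsf{ext}}$-entry removed and $W^+(t) = b_{p'}$ with $p' \in P_N$, the $P_N$-contribution reduces to $\mathsf{cin}(p') = \mathsf{pot}(p_{\mathsf{ext}}) - \mathsf{pot}(p')$ and the $P_{N-1}^{\max}$-contribution to $\mathsf{pot}(p') - \mathsf{pot}(p_{\mathsf{ext}})$, and they again cancel; the symmetric case is identical. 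I expect this bookkeeping for the virtual place to be the only mildly subtle step; once it is laid out, no real obstacle remains.
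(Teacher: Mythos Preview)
Your approach is correct and is essentially the paper's own proof: fix a transition $t\in T_j$, observe that $W(t)$ only touches $P_j\cup P_{j-1}^{\max}$, and then check the two or three relevant layer interactions by the cancellation $\mathsf{cin}(p')-\mathsf{cin}(p)+\mathsf{pot}(p')-\mathsf{pot}(p)=0$. One small point to make explicit: your claim that ``the support observation leaves only the cases $t\in T_i$ and $t\in T_{i+1}$'' also needs the remark that $\mathsf{cin}$ vanishes on $P_{i+1}^{\max}$ (for $i+1<N$, or $i+1=N$ closed), so that $t\in T_{i+2}$, whose $W(t)$ touches $P_{i+1}^{\max}$, contributes zero to $v^{(i)}\cdot W(t)$; the paper treats this as a separate (one-line) case. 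Your discussion of the open case and the role of $\mathsf{POT}_N=\mathsf{pot}(p_{\mathsf{ext}})$ is in fact more explicit than the paper's.
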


\begin{floatingfigure}{4cm}
  \begin{tikzpicture}[xscale=.5,yscale=.4]
    \path [use as bounding box] (-1.5,-4.5) -- (3,3);
\path (0,3) node[draw,circle,inner sep=2pt,minimum size=4mm] (p0) 
[label=left:$p_2:3$] {};
\path (0,-3) node[draw,circle,inner sep=2pt,minimum size=4mm] (pext) 
[label=left:$p_{\mathsf{ext}}:1$] {};
\path (0,0) node[draw,rectangle,inner sep=2pt,minimum size=4mm] (t1) 
[label=left:$t_1$] {};
\draw[->,>=stealth] (p0) -- (t1);
\draw[->,>=stealth] (t1) -- (pext);
\path (3,3) node[draw,circle,inner sep=2pt,minimum size=4mm] (q0) 
[label=right:$q_3:1$] {};
\path (3,-3) node[draw,circle,inner sep=2pt,minimum size=4mm] (q2) 
[label=right:$q_1:1$] {};
 \draw[->,>=stealth] (t1) -- (q2);
 \draw[->,>=stealth] (q0) -- (t1) node[pos=.3,below]{$3$};
\draw [dotted] (1.5,-4.5) -- (1.5,3.5);
\path (0,-4.5) node {Layer $3$};
\path (3,-4.5) node {Layer $2$};
\end{tikzpicture}
\end{floatingfigure}

\smallskip
First observe that for $i<N-1$, 
$\mathsf{Supp}(v^{(i)} )=P_i \cup P_{i+1}^{\neg \max}$.
Thus given a marking $m$,
only firing of transitions $t \in T_{i} \cup T_{i+1}$ could change 
$m \cdot v^{(i)} $. This is not the case of a transition $t \in T_i$
since it moves a token from a place of $P_i$ to another one. 
To give an intuition why transitions in $T_{i+1}$ do not change
$m \cdot v^{(i)}$, we consider part of
  the closed net (that is, $p_\mathsf{ext}$ is a real place) of
  Figure~\ref{fig:PN} depicted on the right, where numbers close to
  place names are potential values. We focus on transition $t_1$ and
  explain why $m \cdot v^{(2)}$ is unchanged by its firing. The impact of
  transition $t_1$ is to decrease the sum $\sum_{p \in P_2} p$ by $2$;
  due to the weights of places of $P_3$ in $v^{(2)}$, place $p_2$
  counts as $0$ and place $p_\mathsf{ext}$ counts as $+2$. 
  This intuition extends into a
  formal proof.

\medskip For every $1 \le i \le N-1$, we set $\mathbf{C}_i^{m_0} = m_0 \cdot
v^{(i)}$,
and if $\mathcal{N}$ is closed, we set $\mathbf{C}_N^{m_0} = m_0 \cdot v^{(N)}$.
For every $1 \le i \le N-1$, we define $\mathsf{Inv}_i(m_0)$ as the set of
markings $m$ such that $m \cdot v^{(i)} = \mathbf{C}_i^{m_0}$, and if
$\mathcal{N}$ is closed, we define $\mathsf{Inv}_N(m_0)$ as the set of markings $m$
such that $m \cdot v^{(N)} = \mathbf{C}_N^{m_0}$.  For uniformity, if
$\mathcal{N}$ is open, we define $\mathsf{Inv}_N(m_0)$ as the set of all markings.
As a consequence of Proposition~\ref{prop:invariants}, we get:
\begin{corollary}
  \label{coro:invariants}
  ${\displaystyle \mathcal{R}_{\mathcal{N}}(m_0) \subseteq \bigcap_{i=1}^N \mathsf{Inv}_i(m_0)}$.
\end{corollary}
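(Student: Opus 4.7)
The plan is to derive the corollary as a routine consequence of Proposition~\ref{prop:invariants} together with the very definition of a linear invariant (a $P$-flow). Recall that a vector $v \in \bbQ^P$ being a linear invariant of $\calN$ means precisely that $v$ lies in the left kernel of the incidence matrix $W$, i.e.\ $v \cdot W(t) = 0$ for every transition $t \in T$. This is exactly the condition that ensures $m \cdot v$ is preserved by any firing.

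First, I would fix an index $1 \le i \le N-1$ (or $i=N$ in the closed case) and consider any reachable marking $m \in \calR_\calN(m_0)$. By definition of the reachability set, there is a finite firing sequence $m_0 \xrightarrow{t_1} m_1 \xrightarrow{t_2} \cdots \xrightarrow{t_k} m_k = m$. A straightforward induction on $k$ then yields $m \cdot v^{(i)} = m_0 \cdot v^{(i)}$: the base case $k=0$ is trivial, and for the inductive step, $m_{j+1} = m_j + W(t_{j+1})$ gives $m_{j+1} \cdot v^{(i)} = m_j \cdot v^{(i)} + v^{(i)} \cdot W(t_{j+1}) = m_j \cdot v^{(i)}$, where the last equality uses Proposition~\ref{prop:invariants}. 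Hence $m \cdot v^{(i)} = \mathbf{C}_i^{m_0}$, which is exactly the defining condition of $\mathsf{Inv}_i(m_0)$.

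This proves $m \in \mathsf{Inv}_i(m_0)$ for every $1 \le i \le N-1$, and also for $i = N$ in the closed case. In the open case, $\mathsf{Inv}_N(m_0)$ is defined to be the set of all markings, so the membership $m \in \mathsf{Inv}_N(m_0)$ is trivial. Taking the intersection over $i$ yields the desired inclusion. There is no real obstacle here: the entire content of the corollary has already been packaged into Proposition~\ref{prop:invariants}, and the present statement simply translates the algebraic invariance $v^{(i)} \cdot W(t) = 0$ into the combinatorial fact that the scalar product $m \cdot v^{(i)}$ stays constant along any reachable trajectory.
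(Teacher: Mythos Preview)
Your proof is correct and matches the paper's approach: the paper simply states the corollary as an immediate consequence of Proposition~\ref{prop:invariants}, and your argument spells out precisely the standard reasoning (invariance of $m \cdot v^{(i)}$ along any firing sequence, plus the trivial case $\mathsf{Inv}_N(m_0)=\bbN^P$ for open nets) that this appeal encapsulates.
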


More complex invariants were given in~\cite{HMN-fi13} for closed
$\mathsf{\Pi}^3$-nets. The advantage of the above invariants is that each of
them only involves two neighbouring layers. This will have a huge
impact on various complexities, and will allow the development of
our methods for quantitative analysis.
\begin{example}
  Going back to the Petri net of Figure~\ref{fig:PN}, with initial
  marking $m_0 = q_3+r_0$. We first consider the closed net. Then:
  $\mathsf{POT}_3 = 2$ and $\mathsf{POT}_2 = 1$. Therefore:
  \[
  \left\{\begin{array}{l} 
      \mathsf{Inv}_3(m_0) =  \{m \mid \sum_{i=0}^2 m(p_i) + m(p_\mathsf{ext}) = 0\} \\[0.5mm]
      \mathsf{Inv}_2(m_0) =  \{m \mid \sum_{i=0}^3 m(q_i) + 3 m(p_0) + 2 m(p_1) + 2 m(p_\mathsf{ext}) = m_0(q_3) = 1\} \\[0.5mm]
      \mathsf{Inv}_1(m_0) =   \{m \mid m(r_0)+m(r_1)+m(q_0) = m_0(r_0) = 1\}
    \end{array}\right.
  \]
  We now turn to the open net, obtained by deleting
  $p_\mathsf{ext}$. Definition of $\mathsf{POT}_3$ differs from the previous case:
  $\mathsf{POT}_3 = \mathsf{pot}(p_\mathsf{ext}) = 1$ and $\mathsf{POT}_2 = 1$. Therefore:
  \[
  \left\{\begin{array}{l} 
      \mathsf{Inv}_3(m_0) = \mathbb{N}^P \\[0.5mm]
      \mathsf{Inv}_2(m_0) = \{m \mid \sum_{i=0}^3 m(q_i)
      + m(p_0) - 2 m(p_2) = m_0(q_3) = 1\} \\[0.5mm]
      \mathsf{Inv}_1(m_0) = \{m \mid  m(r_0)+m(r_1)+m(q_0) = m_0(r_0) = 1\}
    \end{array}\right.
  \]
\end{example}
The invariants of Corollary~\ref{coro:invariants} do not fully
characterise the set of reachable markings, since they do not take
into account the enabling conditions of the transitions.  However,
they will be very helpful for characterising the reachable markings
when $m_0$ is live.
\begin{theorem}\label{theoreach}
  \label{thm:CNS-reach}
  \label{theo:reach}

  Suppose that $(\mathcal{N},m_0)$ is a live $\mathsf{\Pi}^3$-net. Then:
  \[
  \mathcal{R}_\mathcal{N}(m_0) = \bigcap_{i=1}^N \mathsf{Inv}_i(m_0)~ \cap ~\bigcap_{i=1}^N \mathsf{Live}_i
  \]
  Thus reachability in live $\mathsf{\Pi}^3$-nets can be checked in polynomial time.
\end{theorem}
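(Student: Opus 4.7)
The plan is to prove the two inclusions separately. The forward inclusion $\mathcal{R}_\mathcal{N}(m_0) \subseteq \bigcap_i \mathsf{Inv}_i(m_0) \cap \bigcap_i \mathsf{Live}_i$ follows directly from previous results: Corollary~\ref{coro:invariants} yields membership in the first intersection, and the second follows because liveness propagates along firings (if $m_0 \to^* m$ and $m_0$ is live, then $\mathcal{R}(m) \subseteq \mathcal{R}(m_0)$, hence $m$ is itself live), combined with Theorem~\ref{theo:liveness}.

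For the reverse inclusion, my strategy is induction on the number of layers $N$. The base case $N = 1$ reduces to reachability in a strongly connected, possibly open, state machine on $P_1$, and is handled by the classical theory of weakly reversible state machines. For the induction step, let $m$ lie in the intersection. The plan is to first fire a sequence of transitions in $T_N$ that rearranges the tokens of layer $N$ so as to match the layer-$N$ restriction of $m$, and then apply the induction hypothesis to the sub-net obtained by removing layer $N$. Weak reversibility of the strongly connected component $V_N$ provides enough transitions of $T_N$ to route tokens freely within that layer; the liveness conditions $m_0, m \in \mathsf{Live}_{N-1} \cap \mathsf{Live}_N$, together with the strong connectivity of $V_N$, ensure that enough resources in $P_{N-1}^{\max}$ are available at every intermediate step to enable the next firing in $T_N$.

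The main obstacle is to guarantee that after this rearrangement the residual marking on layers $1, \ldots, N-1$ still satisfies all invariants and liveness conditions needed to invoke the induction hypothesis. Two observations make this work. First, a transition in $T_N$ modifies only places in $P_N \cup P_{N-1}^{\max}$, so it preserves $v^{(i)}$ for every $i \le N-2$: either $\mathsf{Supp}(v^{(i)})$ is disjoint from $P_N \cup P_{N-1}$, or $i = N-2$ and the coefficient of every $p \in P_{N-1}^{\max}$ in $v^{(N-2)}$ equals $\mathsf{cin}(p) = 0$. Second, the invariant $v^{(N-1)} = \sum_{p \in P_{N-1}} p + \sum_{p \in P_N} \mathsf{cin}(p)\, p$ couples exactly layer $N-1$ with layer $N$; combined with $m \in \mathsf{Inv}_{N-1}(m_0)$, it dictates the precise configuration of $P_{N-1}^{\max}$ that must be reached, and this configuration can be realised by a suitable routing in $V_N$. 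The liveness conditions on $m$ then pass to the residual marking of the sub-net, completing the induction.

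Once the characterisation is established, the polynomial-time algorithm for reachability is immediate: testing $m \in \mathcal{R}_\mathcal{N}(m_0)$ reduces to evaluating $N$ scalar products (one per invariant $v^{(i)}$) and $N$ minima (one per $\mathsf{Live}_i$ condition), all polynomial in $|P|$, $|T|$, and the encoding sizes of $m_0$ and $m$.
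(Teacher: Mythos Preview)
Your forward inclusion is correct. The reverse inclusion has a genuine gap in what you call step~1, ``fire a sequence of transitions in $T_N$ that rearranges the tokens of layer~$N$''. Firing a transition of $T_N$ out of $p \in P_N$ requires not merely $m \cdot P_{N-1} \geqslant \pot(p)$ --- which is all $\Live_{N-1}$ provides --- but $m(q) \geqslant b_p(q)$ for every individual $q \in P_{N-1}^{\max}$. The tokens may sit in the wrong places within $P_{N-1}$, and redistributing them requires transitions of $T_{N-1}$, which in turn need correctly placed resources in $P_{N-2}^{\max}$, and so on. Thus step~1 cannot be carried out with $T_N$ alone; it already hides the full recursive difficulty of the theorem, and your induction on $N$ gives no handle on it. A related slip: the net effect on each $q \in P_{N-1}^{\max}$ of routing a token along a path in $V_N$ telescopes to $b_{p'}(q)-b_p(q)$ and is path-independent, so ``a suitable routing in $V_N$'' gives you no freedom over the resulting distribution in $P_{N-1}^{\max}$.

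The paper avoids this by going bottom-up rather than top-down. It fixes one place $p_i \in P_i^{\max}$ per layer, defines a canonical marking $\pi_i(m)$ concentrating all tokens of layers $\leqslant i$ onto these $p_j$, and proves by a double induction on $(i,\ell)$ (with $\ell$ the number of tokens in layer $i$ not yet on $p_i$) that every $m$ satisfying the constraints reaches $\pi_i(m)$ via $T_{\leqslant i}$. The inductive use of $\mathbf{P}(i{-}1)$ is exactly what makes resource rearrangement work: before each move in layer $i$, one invokes the hypothesis to line up layer $i{-}1$ so that the required bag is present (Lemmas~\ref{lemma:enable} and~\ref{lemma:+1}). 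Weak reversibility then closes the argument, since any two markings with the same invariants share the same $\pi_N$. The open case is handled separately, by reinserting $p_\ext$ with a carefully chosen number of tokens and reducing to the closed case. Your top-down scheme could be repaired by proving an analogue of Lemma~\ref{lemma:enable} and invoking the induction hypothesis on the sub-net before every $T_N$ firing, but you would also need to track sub-net liveness along the way --- in particular when $m' \cdot P_{N-1}$ may drop to zero --- and to argue which place of layer $N$ to move out of first.
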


\begin{example}
  In the open Petri net of Figure~\ref{fig:PN}, with initial
  marking $m_0 = q_3+r_0$, the sets $\mathsf{Live}_i$ are
  $\mathsf{Live}_1 = \{m \mid m(r_0)+m(r_1)+m(q_0) \geqslant 1\}$,
  $\mathsf{Live}_2 = \{m \mid \sum_{i=0}^3 m(q_i) + m(p_0) \geqslant 1\}$ and
  $\mathsf{Live}_3 = \mathbb{N}^P$.
  Observing that $\mathsf{Inv}_i(m_0) \subseteq \mathsf{Live}_i$ for $1 \leqslant i \leqslant 3$,
  the net has reachability set
  \[\mathcal{R}_\mathcal{N}(m_0) = \{m \mid m(r_0)+m(r_1)+m(q_0) = 1 \text{ and } {\textstyle\sum_{i=0}^3} m(q_i) + m(p_0) - 2 m(p_2) = 1\}.\]

\end{example}

The idea of the proof when the net is closed is to show that,
  from every marking $m$ satisfying the right handside condition in
  the theorem, one can reach a specific marking $m_0^\ast$ (where, for
  every $1 \le i \le N$, $\mathbf{C}_i^{m_0}$ tokens are in one arbitrary
  place of $P_i^{\max}$).  Hence, given two markings $m$ and $m'$
  satisfying the conditions, $m \to^* m_0^\ast$ and $m' \to^*
  m_0^\ast$, which implies by weak reversibility of the net: $m \to^*
  m_0^\ast \to^* m'$. This is in particular the case from $m_0$: every
  marking satisfying the conditions is reachable from $m_0$. In the
  case of an open net, this is a bit more tricky, and a joint marking
  to every pair $(m,m')$ of markings satisfying the conditions has to
  be chosen.

\subsection{Boundedness analysis}

As a consequence of the characterisation given in
Theorem~\ref{theo:reach}, we get:
\begin{restatable}{corollary}{coroboundness}
\label{coroboundness}
  \label{coro:boundedness}
  We can decide the boundedness of a live marked $\mathsf{\Pi}^3$-net in
  polynomial time. 
\end{restatable}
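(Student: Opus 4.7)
The plan is to leverage the linear characterisation of $\calR_\calN(m_0)$ given by Theorem~\ref{theo:reach} and to reduce boundedness to a purely structural test. When $\calN$ is closed, boundedness already follows from Theorem~\ref{th:pitrois}, so I focus on the open case. I claim that a live marked open $\mathsf{\Pi}^3$-net $(\calN, m_0)$ is bounded if and only if $\pot(p) < \pot(p_\ext)$ for every place $p \in P_N$; this condition is structural and checkable in polynomial time.

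For the sufficient direction, assume $\pot(p) < \pot(p_\ext)$ for every $p \in P_N$, so that $\cin(p) \geq 1$ on $P_N$. Then $v^{(N-1)}$ has only positive coefficients and the equality $m \cdot v^{(N-1)} = \mathbf{C}_{N-1}^{m_0}$ uniformly bounds both $\sum_{p \in P_{N-1}} m(p)$ and $\sum_{p \in P_N} m(p)$. The invariants $v^{(i)}$ for $i \leq N-2$ already have nonnegative coefficients by construction (there, $\POT_{i+1}$ is the genuine maximal potential of $P_{i+1}$), and the same argument uniformly bounds $\sum_{p \in P_i} m(p)$ for every such $i$. Hence $\|m\|$ is uniformly bounded on $\calR_\calN(m_0)$.

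For the necessary direction, suppose some $p^\star \in P_N$ satisfies $\pot(p^\star) \geq \pot(p_\ext)$. I exhibit a family $(m_K)_{K \in \mathbb{N}}$ of markings lying in $\bigcap_{i=1}^N (\Inv_i(m_0) \cap \Live_i)$ with $m_K(p^\star) \to \infty$; by Theorem~\ref{theo:reach} each such $m_K$ is reachable from $m_0$, witnessing unboundedness. If $\pot(p^\star) = \pot(p_\ext)$, I take $m_K = m_0 + K \cdot p^\star$: since $\cin(p^\star) = 0$, the invariant $v^{(N-1)}$ is preserved; $p^\star$ appears in no other $v^{(i)}$; and the minimum involved in $\Live_{N-1}$ is unchanged because $\pot(p^\star) = \pot(p_\ext)$. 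If $\pot(p^\star) > \pot(p_\ext)$, then necessarily $N \geq 2$; I pick any $q \in P_{N-1}^{\max}$ and set $m_K = m_0 + K \cdot p^\star + |\cin(p^\star)| K \cdot q$. Here, $K \cdot \cin(p^\star) + |\cin(p^\star)| K \cdot 1 = 0$ preserves $v^{(N-1)}$; the vanishing of $\cin(q)$ preserves $v^{(N-2)}$ (vacuously for $N = 2$); no other invariant involves $p^\star$ or $q$; the left-hand side of $\Live_{N-1}$ grows linearly in $K$ while the corresponding minimum stays at most $\pot(p_\ext)$; and the remaining $\Live_i$ conditions are preserved because $q \in P_{N-1}^{\max}$ has maximal potential in its layer.

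The principal technical obstacle lies in this last subcase: one must check simultaneously that the two compensating contributions exactly cancel on $v^{(N-1)}$, perturb no other invariant, and violate no $\Live_i$ condition, which all hinges on the careful choice $q \in P_{N-1}^{\max}$. Once this is established, the polynomial-time algorithm is immediate: in the open case, compute $\pot(p_\ext)$ and the potentials of every place in $P_N$, and report boundedness iff $\pot(p_\ext)$ strictly dominates.
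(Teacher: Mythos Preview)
Your proof is correct and follows essentially the same approach as the paper: both reduce boundedness in the open case to the structural criterion $\cin(p) > 0$ for every $p \in P_N$ (equivalently, $\pot(p) < \pot(p_\ext)$), bounding via the invariants when the criterion holds, and exhibiting the reachable witnesses $m_0 + K\,(p^\star + |\cin(p^\star)|\,q)$ with $q \in P_{N-1}^{\max}$ when it fails. The paper treats the two subcases $\cin(p^\star)=0$ and $\cin(p^\star)<0$ uniformly via this single formula, whereas you separate them, and you spell out the $\Live_i$ verifications in more detail than the paper does; otherwise the arguments coincide.
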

Indeed, it can be shown
that, if $\mathcal{N}$ is closed, then $\mathcal{N}$
is bounded, and that if $\mathcal{N}$ is open, then it is bounded if and only
if $\mathsf{cin}(q) > 0$ for all $q \in P_N$ (that is, $p_\mathsf{ext}$ has maximal
potential, and no other place of $P_N$ has maximal potential).
Furthermore, if $\mathcal{N}$ is bounded, then the overall number of
  tokens in the net is bounded by $\sum_{i=1}^{N-1} \mathbf{C}_i^{m_0}$
  (resp. $\sum_{i=1}^N \mathbf{C}_i^{m_0}$) if the net is open
  (resp. closed).

The polynomial-time complexity of Corollary~\ref{coro:boundedness} is
in contrast with the following hardness result, which can be obtained
by a reduction from the independent-set problem.
\begin{restatable}{proposition}{coNPhardness}
\label{coNPhardness}
  Deciding the boundedness of a marked $\mathsf{\Pi}^3$-net which is not live is
  coNP-hard. The reachability (and even the coverability) problem is
  NP-hard.
\end{restatable}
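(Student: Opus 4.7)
The plan is to reduce from \textsc{Independent Set}: given a graph $G=(V,E)$ and an integer $k$, we build in polynomial time a marked \emph{non-live} $\mathsf{\Pi}^3$-net $(\mathcal{N}_{G,k},m_0)$ together with a designated target marking $m^\star$ such that (i)~$m^\star$ is reachable (equivalently, coverable) from $m_0$ iff $G$ has an independent set of size $k$, yielding NP-hardness of reachability and coverability; and such that (ii)~in an opened version of the same net, some place becomes unbounded iff $G$ admits the same independent set, yielding coNP-hardness of boundedness.

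The core gadget encodes selection via an edge-lock mechanism. Layer~1 contains a lock place $\ell_e$ for every edge $e \in E$ (initially one token) together with a dummy place $\ell^\star$ for potential equalisation. Layer~2 contains, for every vertex $v$, a candidate place $x_v$ with bag
\[
  b_{x_v} \;=\; x_v + \sum_{e \ni v} \ell_e + (d_{\max} - \deg(v))\,\ell^\star,
\]
and an accumulator place $Y$ with bag $b_Y = Y + d_{\max}\,\ell^\star$, where $d_{\max} = \max_v \deg(v)$. All bags in layer~2 share the same potential $d_{\max}$, so $P_2^{\max} = P_2$. The ``select $v$'' transition $\sigma_v$ with $W^-(\sigma_v)=b_{x_v}$ and $W^+(\sigma_v)=b_Y$ consumes the locks incident to $v$ without restoring them: two adjacent vertices cannot both be selected, and a marking with at least $k$ tokens in $Y$ is coverable iff $G$ has an independent set of size $k$. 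For the boundedness statement, an additional open top layer is attached, containing $p_\mathsf{ext}$ and a pumping cycle that can only be activated when $Y \ge k$; removing $p_\mathsf{ext}$ then causes unboundedness precisely in the ``yes'' case.

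Fitting this into a $\mathsf{\Pi}^3$-net requires three technical devices. First, \emph{potential equalisation} via the dummy $\ell^\star$ makes all bags in each layer share the same potential, so that $P_{i-1}^{\max}$ automatically contains every resource needed by higher layers. Second, \emph{bag-graph strong connectivity} in each layer is obtained by adding reverse transitions (such as $b_Y \to b_{x_v}$ for every $v$ and circulation transitions inside layer~1), which in principle would allow undoing selections or relocating locks; we neutralise them by inserting an auxiliary bottom ``guard'' layer consisting of a single place $g$ (bag $b_g=g$, initially empty) that appears in every layer-1 bag. Since $m_0(g)=0$ and no enabled transition produces $g$, every layer-1 transition is permanently disabled, so locks neither move nor reappear; the violation of $\mathsf{Live}_0$ is precisely what makes the net non-live---as required by the statement---and what bypasses the polynomial-time algorithm of Theorem~\ref{theo:reach}. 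Third, \emph{witnesses} for each bag are easily built from $\mathsf{cin}$-weighted sums of places, following the template of Proposition~\ref{prop:invariants}.

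The main obstacle is not the combinatorial core, since lock-based selection is a classical gadget, but the reconciliation with the rigid $\mathsf{\Pi}^3$ skeleton (bag/place bijection, $P_{i-1}^{\max}$-restriction, weak reversibility, and existence of witnesses) without letting the mandatory reverse transitions interfere with the encoding; the guard-place trick is the crucial workaround. Once the construction is written down, a direct verification shows that $|P|$, $|T|$, the maximal arc weight and $\|m_0\|$ are all polynomial in $|V|+|E|+k$, so the reduction runs in polynomial time and both hardness claims follow.
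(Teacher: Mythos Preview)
Your approach is essentially the same as the paper's: a reduction from \textsc{Independent Set} via an edge-lock gadget, together with an empty guard place that kills liveness. The paper's construction is leaner: a $4$-layer open $\mathsf{\Pi}^3$-net with $P_1=\{p_1\}$ (the guard, initially empty), $P_2=E\cup\{p_2\}$, $P_3=V\cup\{p_3\}$, $P_4=\{p_4\}$, bag-graph edges $e+p_1\leftrightarrow p_2+p_1$, $v+E(v)\leftrightarrow p_3+n\,p_2$, and $k\,p_3\leftrightarrow p_4+k\,p_3$, with $m_0=E+V$. No potential equalisation via a dummy $\ell^\star$ and no separate guard layer are needed.

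There is, however, a genuine gap in your correctness argument. You claim that the guard $g$ ``neutralises'' the reverse transitions $b_Y\to b_{x_v}$. It does not: those are layer-$2$ transitions, whose enabling depends only on the layer-$2$ place $Y$ and on the layer-$1$ resources $\ell^\star$, and the guard $g$ appears in neither input bag. Once $Y\geqslant 1$ and $\ell^\star\geqslant d_{\max}$, the reverse $b_Y\to b_{x_v}$ fires and \emph{recreates} the locks $\ell_e$ for $e\ni v$. So your assertion ``locks neither move nor reappear'' is false; the guard only disables layer-$1$ circulation. The same phenomenon is present in the paper's net (the reverses $p_3+n\,p_2\to v+E(v)$ are fireable), and the paper does not spell out why this is harmless. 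The actual reason the reduction survives is a state-equation argument: writing $a_v$ for the net number of times $v$ is selected, any reachable marking satisfies $m(x_v)=1-a_v\geqslant 0$ and $m(\ell_e)=1-a_u-a_w\geqslant 0$ for each edge $e=\{u,w\}$ (because layer-$1$ transitions are dead), so $a_v\leqslant 1$ and $a_u+a_w\leqslant 1$; since $m(Y)=\sum_v a_v=k$, the set $\{v:a_v=1\}$ is an independent set of size at least $k$. You should replace your disabling claim by this Parikh argument.
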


\section{Quantitative analysis}
\label{sec:quantitative}

Contrary to closed $\mathsf{\Pi}^3$-nets, open $\mathsf{\Pi}^3$-nets may not be
  ergodic. In this section, we first give a simple characterisation of
  the ergodicity property for open $\mathsf{\Pi}^3$-nets, which gives us a
  polynomial-time algorithm for deciding ergodicity. We then provide a
  polynomial-time algorithm for computing the steady-state
  distribution of ergodic (open and closed) $\mathsf{\Pi}^3$-nets.

For the rest of this section, we assume that $(\mathcal{N},\lambda,m_0)$ is
a stochastic $\mathsf{\Pi}^3$-net with $N$ layers, and that $m_0$ is live. Let $\mathbf{W}$ be
the maximal weight of the edges of $\mathcal{N}$.  Then we assume that the
constants $\mu_p = \prod_{b \in \mathcal{V}_{\mathcal{N}}} \left(\mathbf{vis}(b) /
  \lambda_b\right)^{wit_b \cdot p}$ have already been precomputed (in
polynomial time with respect to $|P|$, $|T|$ and
$\log(1+\mathbf{W})$).~\footnote{In the rest of this section, we will design polynomial-time procedures  w.r.t. $\mathbf{W}$, hence
    polynomial if it is encoded in unary and pseudo-polynomial if it
    is encoded in binary.}

In what follows, and for every vector $\delta \in \mathbb{N}^P$, we denote
by $\hat{v}(\delta)$ the product $\prod_{p \in P} \mu_p^{\delta(p)}$.
Consequently, the vector $\mathbf{v}$ mentioned in Theorem~\ref{th:pideux}
is then defined by $\mathbf{v}(m) = \hat{v}(m)$ for all markings $m \in
\mathcal{R}(m_0)$, and its norm is $\|\mathbf{v}\| = \sum_{m \in \mathcal{R}(m_0)}
\hat{v}(m)$. Note that $\mathbf{v}$ and $\hat{v}$ only differ by
  their domain. In addition, in what follows, and for every set $Z
\subseteq P$, we simply denote by $\mathsf{cin}(Z)$ the formal sum $\sum_{p
  \in Z} \mathsf{cin}(p) p$.

\subsection{Ergodicity analysis}
\label{section:ergodic}

We assume here that $\mathcal{N}$ is open. We give a simple characterisation
of the ergodicity property through a comparison of the constants
$\mu_p$ for a limited number of places $p$. Those constraints
  express congestion situations that may arise; we show that they are
  sufficient. These places are the elements of the subset $Y$ of
places that is defined by $Y = P_N \cup P_{N-1}^{\max}$.  In
particular, as soon as the initial marking $m_0$ is live, then the
ergodicity of the stochastic net $(\mathcal{N},\lambda,m_0)$ does \emph{not}
depend on $m_0$.

According to Theorem~\ref{th:pideux}, the net is ergodic if and only
if the norm $\|\mathbf{v}\| = \sum_{m \in \mathcal{R}(m_0)} \hat{v}(m)$ is finite.
Hence, deciding ergodicity amounts to deciding the convergence of a
sum.  The following characterisation
holds.

\begin{restatable}{theorem}{theoergodic}
\label{theoergodic}
  \label{theo:ergodic}
Let $(\mathcal{N},\lambda,m_0)$ be a live open stochastic $\mathsf{\Pi}^3$-net with $N$ layers. 
This net is ergodic if and only if all of the following inequalities hold:
\begin{itemize}
 \item for all places $p \in P_N$, if $\mathsf{cin}(p) = 0$, then $\mu_p < 1$;
 \item for all places $p, q \in P_N$, if $\mathsf{cin}(p) > 0 > \mathsf{cin}(q)$, then $\mu_p^{|\mathsf{cin}(q)|} \mu_q^{\mathsf{cin}(p)} < 1$;
 \item for all places $p \in P_{N-1}^{\max}$ and $q \in P_N$, if $0 > \mathsf{cin}(q)$, then $\mu_p^{|\mathsf{cin}(q)|} \mu_q < 1$.
\end{itemize}
\end{restatable}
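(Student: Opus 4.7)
The plan is to use Theorem~\ref{th:pideux}, which reduces ergodicity of $(\mathcal{N},\lambda,m_0)$ to finiteness of $\|\mathbf{v}\|=\sum_{m \in \mathcal{R}_\mathcal{N}(m_0)} \hat{v}(m)$, and then to apply the explicit characterisation of $\mathcal{R}_\mathcal{N}(m_0)$ given by Theorem~\ref{theo:reach} to rewrite this sum over a small set of ``free'' coordinates. Using the invariants $v^{(1)},\ldots,v^{(N-2)}$ of Proposition~\ref{prop:invariants}, each $m(p)$ with $p \in P_1\cup\cdots\cup P_{N-2}\cup P_{N-1}^{\neg\max}$ takes only finitely many values on $\mathcal{R}_\mathcal{N}(m_0)$, since its coefficient in the corresponding invariant is positive while the other terms are nonnegative. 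Hence $\|\mathbf{v}\|$ factors as a positive finite sum times an infinite tail $U$ over the tokens in $Y=P_{N-1}^{\max}\cup P_N$, constrained only by $\mathsf{Inv}_{N-1}$, i.e.\ $\sum_{p \in P_{N-1}^{\max}} n_p + \sum_{q \in P_N} \mathsf{cin}(q)\,k_q = C'$ for an integer $C'$ depending on the fixed lower layers; ergodicity is thus equivalent to $U<\infty$.

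For \emph{necessity}, I would assume one of the three inequalities fails and exhibit a one-parameter family of markings in $\mathcal{R}_\mathcal{N}(m_0)$ (reachable by Theorem~\ref{theo:reach}) along which $\hat{v}$ does not decay. Write $P_N^+=\{p\in P_N\mid\mathsf{cin}(p)>0\}$, $P_N^0=\{p\in P_N\mid\mathsf{cin}(p)=0\}$, $P_N^-=\{p\in P_N\mid\mathsf{cin}(p)<0\}$. If $\mathsf{cin}(p)=0$ and $\mu_p\geq 1$ for some $p$, letting $k_p$ alone vary preserves the invariant and contributes a diverging $\sum_{k_p}\mu_p^{k_p}$. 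If $p\in P_N^+$ and $q\in P_N^-$ witness failure of the second inequality, setting $k_p=|\mathsf{cin}(q)|k$ and $k_q=\mathsf{cin}(p)k$ preserves the invariant and contributes $(\mu_p^{|\mathsf{cin}(q)|}\mu_q^{\mathsf{cin}(p)})^k\geq 1$. If $p\in P_{N-1}^{\max}$ and $q\in P_N^-$ witness failure of the third inequality, concentrating all $P_{N-1}^{\max}$-tokens on $p$ with $n_p=C'+|\mathsf{cin}(q)|k$ and setting $k_q=k$ contributes $\mu_p^{C'}(\mu_p^{|\mathsf{cin}(q)|}\mu_q)^k$. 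In each case, the liveness conditions $\mathsf{Live}_i$ are satisfied for large $k$ because $p_\mathsf{ext}$ is always counted in $\mathsf{Live}_{N-1}$ and the layer-$N-1$ total grows with $k$.

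For \emph{sufficiency}, I would upper-bound $U$ in two nested stages. Let $\mu^\star=\max_{p\in P_{N-1}^{\max}}\mu_p$ and $\beta^\star=\max_{p\in P_N^+}\mu_p^{1/\mathsf{cin}(p)}$, and set $\gamma=\max(\mu^\star,\beta^\star)$. The multinomial bound gives $\sum_{\sum n_p=S}\prod_p\mu_p^{n_p}\leq\binom{S+k-1}{k-1}(\mu^\star)^S$ where $k=|P_{N-1}^{\max}|$, and a similar combinatorial bound yields $\sum_{\sum \mathsf{cin}(p)k_p=T}\prod_p\mu_p^{k_p}\leq(T+1)^{|P_N^+|}(\beta^\star)^T$ (by writing $\mu_p^{k_p}=(\mu_p^{1/\mathsf{cin}(p)})^{\mathsf{cin}(p)k_p}\leq(\beta^\star)^{\mathsf{cin}(p)k_p}$). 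Substituting $S=C'+\sum_{P_N^-}|\mathsf{cin}(q)|k_q-T$ and summing over $T$, the factors $(\mu^\star)^S(\beta^\star)^T$ recombine, up to a polynomial in the $k_q$'s, into $\gamma^{C'+\sum_{P_N^-}|\mathsf{cin}(q)|k_q}$. The remaining sum factors as a product over $P_N^0\sqcup P_N^-$; each $P_N^0$-term converges by the first inequality, and each $P_N^-$-term converges provided $\mu_q\gamma^{|\mathsf{cin}(q)|}<1$. If $\gamma=\mu^\star$, attained by some $p^\star\in P_{N-1}^{\max}$, this is the third inequality at $(p^\star,q)$; if $\gamma=\beta^\star$, attained by some $p^\sharp\in P_N^+$, raising both sides to the $\mathsf{cin}(p^\sharp)$-th power gives $\mu_q^{\mathsf{cin}(p^\sharp)}\mu_{p^\sharp}^{|\mathsf{cin}(q)|}<1$, i.e.\ the second inequality at $(p^\sharp,q)$; both hold by hypothesis.

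The main obstacle is this sufficiency argument: the coupling between the distribution on $P_{N-1}^{\max}$ and the $P_N^+/P_N^-$ balance through the single invariant $\mathsf{Inv}_{N-1}$ forces a careful two-stage elimination, and the definition $\beta^\star=\max_p\mu_p^{1/\mathsf{cin}(p)}$ must be chosen precisely so that the final base $\mu_q\gamma^{|\mathsf{cin}(q)|}$ reduces to the hypothesised inequalities. Polynomial-time decidability then follows since the three families of inequalities contain $O(|P|^2)$ tests, each decidable in polynomial time from the precomputed constants $\mu_p$.
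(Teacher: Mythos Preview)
Your proposal is correct. The necessity direction is essentially identical to the paper's: both exhibit, for each potentially violated inequality, a one-parameter family of reachable markings $m_0+k\delta$ (with $\delta$ one of the vectors $p$, $|\mathsf{cin}(q)|\,p+\mathsf{cin}(p)\,q$, or $|\mathsf{cin}(q)|\,p+q$) along which $\hat v$ does not decay. The paper packages these vectors into a family $\mathcal F$ and observes $\{m_0\}+\mathcal L\subseteq\mathcal R(m_0)$ for the lattice $\mathcal L$ generated by $\mathcal F$, but the content is the same.

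The sufficiency direction is where the two arguments genuinely differ. The paper proves a structural sandwich $\{m_0\}+\mathcal L\subseteq\mathcal R(m_0)\subseteq\mathcal G+\mathcal L$ with $\mathcal G$ finite (this is their Lemma preceding the theorem), and then bounds $\|\mathbf v\|$ in one line by $\bigl(\sum_{g\in\mathcal G}\hat v(g)\bigr)\prod_{f\in\mathcal F}(1-\hat v(f))^{-1}$, a product of geometric series over the generators. Your route is more analytic: you eliminate the $P_{N-1}^{\max}$ and $P_N^+$ variables first via the combinatorial bounds with bases $\mu^\star$ and $\beta^\star=\max_{p\in P_N^+}\mu_p^{1/\mathsf{cin}(p)}$, collapse them into a single base $\gamma=\max(\mu^\star,\beta^\star)$, and then observe that the residual geometric series over each $q\in P_N^-$ has ratio $\mu_q\gamma^{|\mathsf{cin}(q)|}$, which is $<1$ precisely by whichever of the second or third hypotheses realises $\gamma$. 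This works, and the key trick---defining $\beta^\star$ with the exponent $1/\mathsf{cin}(p)$ so that the final base lines up with the statement---is exactly right. The paper's lattice argument is shorter and avoids tracking the polynomial prefactors, but your approach has the advantage of staying closer to the actual sum and not requiring the separate covering lemma $\mathcal R(m_0)\subseteq\mathcal G+\mathcal L$. Two minor points worth tightening: the phrase ``factors as a positive finite sum times an infinite tail $U$'' is slightly loose since $C'$ varies with the bounded part (you note this yourself), and in the necessity cases the displacements $m_0+k\delta$ already lie in $\bigcap_i\mathsf{Live}_i$ for every $k\ge0$, not only for large $k$, since adding tokens in $P_{N-1}^{\max}\cup P_N$ can only help each $\mathsf{Live}_i$.
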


\begin{proof}[Proof (sketch)]
Let $\mathcal{F}$ be the family formed of the vectors $p$ (for $p \in P_N$ such that $\mathsf{cin}(p) = 0$),
$\mathsf{cin}(p) q - \mathsf{cin}(q) p$ (for $p, q \in P_N$ such that $\mathsf{cin}(p) > 0 > \mathsf{cin}(q)$) and
$q - \mathsf{cin}(q) p$ (for $p \in P_{N-1}^{\max}$ and $q \in P_N$ such that $0 > \mathsf{cin}(q)$).
Let also $\mathcal{L}$ be the sublattice of $\mathbb{N}^P$ generated by the vectors in $\mathcal{F}$,
and let $\mathcal{G}$ be the finite subset of $\mathbb{N}^P$ formed of those vectors whose entries
are not greater than some adequately chosen constant $\mathbf{G}$.

Since $m_0$ is live, Theorem~\ref{theo:reach} applies, which allows us
to prove the inclusions $\{m_0\} + \mathcal{L} \subseteq \mathcal{R}(m_0) \subseteq \mathcal{G} + \mathcal{L}$.
Hence, the sum $\sum_{m \in \mathcal{R}(m_0)} \hat{v}(m)$ is
finite iff the sum $\sum_{m \in \mathcal{L}} \hat{v}(m)$ is finite, i.e.
iff each constant $\hat{v}(\delta)$ is (strictly) smaller than $1$, for $\delta \in \mathcal{F}$.
\end{proof}

\begin{example}
\label{ex:ergodic}
Going back to the open $\mathsf{\Pi}^3$-net of Figure~\ref{fig:PN}, with any
live initial marking $m_0$, we obtain the following necessary and
sufficient conditions for being ergodic:
  \[\mu_{p_1} < 1 \text{, } \mu_{p_0}^2 \mu_{p_2} < 1 \text{, } \mu_{q_1}^2 \mu_{p_2} < 1
  \text{, } \mu_{q_2}^2 \mu_{p_2} < 1 \text{ and } \mu_{q_3}^2 \mu_{p_2} < 1.\]
\end{example}

As a consequence of the characterisation of
Theorem~\ref{theo:ergodic}, we get:

\begin{corollary}
  We can decide the ergodicity of a marked, live and open stochastic
  $\mathsf{\Pi}^3$-net in polynomial time.
\end{corollary}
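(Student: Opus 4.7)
The plan is to obtain the corollary as an almost immediate consequence of the structural characterisation in Theorem~\ref{theo:ergodic}. First I would assemble all the structural data that appears in the three families of inequalities. Reading off the bag graph and the layer decomposition of $\mathcal{N}$, I can compute the potentials $\mathsf{pot}(p) = \|b_p\| - 1$, the maxima $\mathsf{POT}_i$, and hence the integers $\mathsf{cin}(p) = \mathsf{POT}_i - \mathsf{pot}(p)$ for every $p \in P$; I can also identify the sets $P_N$ and $P_{N-1}^{\max}$. All of this is polynomial in $|P|$, $|T|$ and $\log(1+\mathbf{W})$. The constants $\mu_p$ are already assumed to be precomputed at the start of Section~\ref{sec:quantitative}.

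Next I would bound the number of inequalities to check. The first family yields at most $|P_N|$ tests of the form $\mu_p < 1$, the second at most $|P_N|^2$ tests of the form $\mu_p^{|\mathsf{cin}(q)|}\mu_q^{\mathsf{cin}(p)} < 1$, and the third at most $|P_{N-1}^{\max}|\cdot|P_N|$ tests of the form $\mu_p^{|\mathsf{cin}(q)|}\mu_q < 1$; the total is $O(|P|^2)$. The procedure then simply evaluates each of these comparisons and returns ``ergodic'' iff all of them hold, which is correct by Theorem~\ref{theo:ergodic}.

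Finally I would verify that a single comparison can be done in polynomial time. Each test reduces to comparing a quantity of the form $\mu_p^a\mu_q^b$ to $1$, where the exponents $a$ and $b$ are at most $\mathsf{POT}_N \leqslant |P|\cdot\mathbf{W}$, hence are polynomially bounded. Since each $\mu_p$ is a positive rational (a product of ratios $\mathbf{vis}(b)/\lambda_b$ with integer exponents given by the witnesses), the product $\mu_p^a\mu_q^b$ is a positive rational with numerator and denominator of polynomial bit-length in $|P|$, $|T|$ and $\mathbf{W}$, and comparing it to $1$ amounts to comparing two integers of the same order, which is polynomial.

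The only non-routine point, and the main thing to check carefully, is this last bit-size bookkeeping: one must be sure that powers up to $\mathsf{POT}_N$ of the precomputed rationals $\mu_p$ do not blow up beyond polynomial size. This is guaranteed by the precomputation statement (and its footnote) at the beginning of Section~\ref{sec:quantitative}, which already ensures polynomial bit-length for the $\mu_p$ themselves, and by the fact that fast exponentiation multiplies the bit-length by at most a $\mathrm{poly}(|P|, \mathbf{W})$ factor. Once this is settled, the whole procedure runs in polynomial time (in the sense made precise by the footnote), proving the corollary.
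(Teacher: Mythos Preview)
Your proposal is correct and follows exactly the route the paper intends: the corollary is stated without proof, as an immediate consequence of the finite and explicitly polynomial list of inequalities in Theorem~\ref{theo:ergodic}. You actually supply more detail than the paper does (counting the $O(|P|^2)$ tests and tracking the bit-size of $\mu_p^a\mu_q^b$), all of which is sound.
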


\subsection{Computing the steady-state distribution}
\label{sec:computing}

In case the $\mathsf{\Pi}^3$-net is ergodic, it remains to compute its
steady-state distribution, given by $\pi(m) = \|\mathbf{v}\|^{-1} \mathbf{v}(m)$
for all $m \in \mathcal{R}(m_0)$.  Since we already computed $\mathcal{R}(m_0)$
and $\mathbf{v}(m)$ for all markings $m \in \mathcal{R}(m_0)$, it remains to
compute the normalising constant $\|\mathbf{v}\|$.

This section is devoted to proving the following result.

\begin{theorem}\label{theconstant}
  \label{theo:constant}
  Let $(\mathcal{N},\lambda,m_0)$ be a live ergodic stochastic $\mathsf{\Pi}^3$-net. 
  There exists an algorithm for
  computing the normalising constant $\|\mathbf{v}\|$ in polynomial time
  with respect to $|P|$, $|T|$, $\mathbf{W}$, and $\|m_0\|$ (thus in
    pseudo-polynomial time).
\end{theorem}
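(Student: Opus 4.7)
The plan is to compute $\|\mathbf{v}\| = \sum_{m \in \mathcal{R}(m_0)} \hat{v}(m)$ by a layer-by-layer dynamic programming that exploits the two-layer coupling of the invariants established by Proposition~\ref{prop:invariants}: each $v^{(i)}$ involves only layers $P_i$ and $P_{i+1}$. Theorem~\ref{theo:reach} gives the explicit description $\mathcal{R}(m_0) = \bigcap_i \mathsf{Inv}_i(m_0) \cap \bigcap_i \mathsf{Live}_i$, so once the marking on layer $i$ is summarised by its token total $T$ and its $\mathsf{cin}$-weighted sum $S$, the contributions of the layers below decouple, and a two-parameter DP becomes possible.

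For $1 \leqslant i \leqslant N-1$, I would maintain the table
\[
C_i(T,S) \;=\; \sum_{\substack{m_1,\ldots,m_i \\ \mathsf{Inv}_j(m_0)\,\forall\, j<i \\ \|m_i\|=T,\; m_i \cdot \mathsf{cin}(P_i)=S}} \hat{v}(m_1+\cdots+m_i)
\]
through the recurrence $C_i(T,S) = \sigma_i(T,S) \cdot D_{i-1}(\mathbf{C}_{i-1}^{m_0} - S)$, where $D_j(T') = \sum_{S'} C_j(T',S')$ and $\sigma_i(T,S) = \sum_{m_i} \hat{v}(m_i)$ is itself computed by an inner DP iterating over the places of $P_i$ and tracking the running pair (tokens placed so far, $\mathsf{cin}$-weighted sum so far). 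Both $T$ and $S$ range in the pseudo-polynomial interval $[0,\mathbf{C}_i^{m_0}]$, so the layered DP runs in time polynomial in $|P|$, $|T|$, $\|m_0\|$ and $\mathbf{W}$. In the closed case, the remaining invariant $\mathsf{Inv}_N$ pins $T_N = \mathbf{C}_N^{m_0}$ and we return $\|\mathbf{v}\| = \sum_S C_N(\mathbf{C}_N^{m_0}, S)$; since each DP transition now touches only two neighbouring layers, this already removes the exponential-in-$N$ cost hidden in Theorem~\ref{th:pitrois}.

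The main difficulty lies in the open case, where $\mathsf{Inv}_N$ is vacuous and any place $q \in P_N$ with $\mathsf{cin}(q) \leqslant 0$ may host unboundedly many tokens that, via $\mathsf{Inv}_{N-1}$, inflate $T_{N-1}$ without bound. The key structural fact that rescues the approach is that any layer-$(N-1)$ marking whose total exceeds a pseudo-polynomial threshold must concentrate its surplus on $P_{N-1}^{\max}$ (whose places have $\mathsf{cin}=0$), giving the convolution factorisation
\[
D_{N-1}(T) \;=\; \sum_k \tilde{A}(k)\, K(T-k), \qquad K(z) \;=\; \prod_{p \in P_{N-1}^{\max}} \frac{1}{1-\mu_p z},
\]
with $\tilde{A}$ of pseudo-polynomial support (computable from the DP through layer $N-2$) and $K$ a rational generating function of polynomial-degree denominator.

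It remains to evaluate $\sum_{m_N \in \mathbb{N}^{P_N}} \hat{v}(m_N)\, D_{N-1}(\mathbf{C}_{N-1}^{m_0} - m_N \cdot \mathsf{cin})$. Splitting $P_N = Q^+ \uplus Q^0 \uplus Q^-$ by the sign of $\mathsf{cin}$, the $Q^0$ coordinates decouple into the explicit constant $\prod_{p \in Q^0} (1-\mu_p)^{-1}$ (convergent by the first ergodicity inequality of Theorem~\ref{theo:ergodic}), while the remaining sum over $(m^+,m^-)$, after substituting the convolution form of $D_{N-1}$, becomes the extraction of a pseudo-polynomial number of Laurent coefficients of the rational function
\[
K(z) \prod_{p \in Q^+} \frac{1}{1-\mu_p z^{\mathsf{cin}(p)}} \prod_{p \in Q^-} \frac{1}{1-\mu_p z^{-|\mathsf{cin}(p)|}},
\]
taken on a circular contour lying in the annulus whose non-emptiness is exactly what the remaining two ergodicity inequalities of Theorem~\ref{theo:ergodic} guarantee. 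The denominator of this rational function has degree polynomial in $|P|$ and $\mathbf{W}$, so each coefficient is a finite combination of residues at poles originating from the $Q^-$ factors, each computable in pseudo-polynomial time. The liveness corrections $\mathsf{Live}_i$ exclude either a pseudo-polynomially bounded set of markings (when layer $i$'s total is bounded) or a sub-collection amenable to the same generating-function treatment, and they are subtracted off as a correction. Keeping the mixed Laurent--Taylor analysis of this last step consistent is the genuine technical obstacle, and it is precisely there that the ergodicity conditions of Theorem~\ref{theo:ergodic} are used most delicately; once this is done, assembling all pieces yields the claimed pseudo-polynomial running time.
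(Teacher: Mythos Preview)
Your overall decomposition---treat the bounded lower layers by a two-parameter dynamic program exploiting that each $v^{(i)}$ couples only $P_i$ and $P_{i+1}$, then handle the unbounded top block $P_{N-1}^{\max}\cup P_N$ by a separate analytic device---matches the paper's strategy (Lemma~\ref{lemma:product} splits $\|\mathbf v\|$ along the cut $X=\bigcup_{i\le N-2}P_i\cup P_{N-1}^{\neg\max}$ versus $Y=P_{N-1}^{\max}\cup P_N$, and Propositions~\ref{proposition:compute-c} and~\ref{proposition:compute-d} deal with the two halves). The difference in the unbounded half is genuine: the paper does \emph{not} pass to generating functions or residues, but sets up a second dynamic program over auxiliary sets $\overline{\mathcal D}(A,B,\mathbf w,\alpha,\beta,\gamma)$ and eliminates the coordinates of $Y$ one by one, pairing each $y_\beta$ with $\cin<0$ against a $y_\gamma$ (or $x_\alpha$) with $\cin>0$ so that geometric series close up; the ergodicity inequalities of Theorem~\ref{theo:ergodic} appear exactly as the convergence conditions for those series. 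Your Laurent/residue scheme is morally equivalent and arguably cleaner conceptually, but the paper's purely recursive route avoids having to argue about pole multiplicities, partial-fraction stability, and bit-complexity of residues.

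There is, however, a genuine gap in your bounded-layer DP. The summary $(T_i,S_i)=(\|m_i\|,\,m_i\cdot\cin(P_i))$ is \emph{not} enough to enforce $\mathsf{Live}_{i-1}$: that constraint reads $m\cdot P_{i-1}\ge\min\{\pot(p)\mid p\in P_i^\star,\ m(p)>0\}$, and the right-hand side depends on \emph{which} places of $P_i$ carry tokens, not just on $(T_i,S_i)$. Two layer-$i$ markings can share $(T_i,S_i)$ yet have different minimal marked potentials (e.g.\ with potentials $1,2,3$ and $\cin=2,1,0$, the markings $p_1+p_3$ and $2p_2$ both give $(T,S)=(2,2)$ but minimal potentials $1$ and $2$), so your recurrence $C_i(T,S)=\sigma_i(T,S)\cdot D_{i-1}(\mathbf C_{i-1}^{m_0}-S)$ overcounts. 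Your proposed fix---``subtract the non-live markings as a correction''---does not work as stated: the non-live set is a union over $i$ of sets defined by a \emph{min over marked places}, and neither inclusion--exclusion nor the generating-function machinery you describe for layer $N$ applies to it in any obvious pseudo-polynomial way. The paper handles this by enlarging the DP state: in $\overline{\mathcal C}(i,j,k,c,c')$ the index $j$ records the least-potential marked place of layer $i$, and the recursion into layer $i-1$ then imposes $m\cdot P_{i-1}\ge\pot(p_j^i)$ directly. Adding this third coordinate (only $|P_i|$ values) to your table is the missing ingredient; once you do that, your bounded-part argument goes through, and either your generating-function treatment or the paper's recursive one finishes the open case.
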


This theorem applies to both closed and open $\mathsf{\Pi}^3$-nets.  For
closed nets, it provides a similar yet stronger result than
Theorem~\ref{th:pitrois}, where the polynomial-time complexity was
obtained only for a fixed value of $N$ (the number of layers of the
net).

We prove below Theorem~\ref{theo:constant} in the case of open nets.
The case of closed nets is arguably easier: one can transform a closed
net into an equivalent open net by adding a layer $N+1$ with one place
(and one virtual place $p_\mathsf{ext}$), and set a firing rate $\lambda_t =
0$ for all transitions $t$ of the layer $N+1$. We therefore
  assume for the rest of this section that $\mathcal{N}$ is open.

\medskip We first describe a naive approach. The normalisation
constant $\|\mathbf{v}\|$ can be computed as follows.  Recall the family
$\mathcal{F}$ introduced in the proof of Theorem~\ref{theo:ergodic}.  We may
prove that the set $\mathcal{R}(m_0)$ is a union of (exponentially many)
translated copies of the lattice $\mathcal{L}$ generated by $\mathcal{F}$.  These
copies may intersect each other, yet their intersections themselves
are translated copies of $\mathcal{L}$.  Hence, using an inclusion-exclusion
formula and a doubly exponential computation step, computing the sum
$\|\mathbf{v}\| = \sum_{\mathfrak{m} \in \mathcal{R}(m_0)} \hat{v}(\mathfrak{m})$
reduces to computing the sum $\sum_{\ell \in \mathcal{L}} \hat{v}(\ell)$.

The family $\mathcal{F}$ is not free a priori, hence computing this latter
sum is not itself immediate.  Using again inclusion-exclusion
formul\ae, we may write $\mathcal{L}$ as a finite, disjoint union of
exponentially many lattices generated by free subfamilies of
$\mathcal{F}$. This last step allows us to compute $\sum_{\ell \in \mathcal{L}}
\hat{v}(\ell)$, and therefore $\|\mathbf{v}\|$.

Such an approach suffers from a prohibitive computational cost. Yet it
is conceptually simple, and it allows proving rather easily that
$\|\mathbf{v}\|$ is a rational fraction in the constants $\mu_p$, whose
denominator is the product $\prod_{\ell \in \mathcal{F}}(1-\hat{v}(\ell))$.

\begin{example}
  Going back to the open $\mathsf{\Pi}^3$-net of Figure~\ref{fig:PN}, with initial
  marking $m_0 = q_3+r_0$, this algorithm allows us to compute
  \begin{align*}
  \hspace{-2mm}\|\mathbf{v}\| =~ & \frac{\mu_{q_0} a + (\mu_{r_0} + \mu_{r_1}) b}{c} \text{, with} \\
  a =~ & \mu_{p_2}^2 \mu_{p_0} \mu_{q_1} \mu_{q_2} \mu_{q_3} + \mu_{p_2}
  (\mu_{p_0} \mu_{q_1} + \mu_{p_0} \mu_{q_2} + \mu_{p_0} \mu_{q_3} + \mu_{q_1} \mu_{q_2} + \mu_{q_1} \mu_{q_3} + \mu_{q_2} \mu_{q_3}) + 1 \text{,} \\
  b =~ & \mu_{p_2} (\mu_{p_0} \mu_{q_1} \mu_{q_2} + \mu_{p_0} \mu_{q_1} \mu_{q_3} + \mu_{p_0} \mu_{q_2} \mu_{q_3} + \mu_{p_1} \mu_{q_2} \mu_{q_3}) +
  \mu_{p_0} + \mu_{q_1} + \mu_{q_2} + \mu_{q_3} \text{ and} \\
  c =~ & (1-\mu_{p_0}^2 \mu_{p_2}) (1-\mu_{p_1}) (1-\mu_{q_1}^2 \mu_{p_2}) (1-\mu_{q_2}^2 \mu_{p_2}) (1-\mu_{q_3}^2 \mu_{p_2}).
  \end{align*}
  Recall Example~\ref{ex:ergodic}, which states that $\|\mathbf{v}\|$
  is finite if and only if all of $1 - \mu_{p_1}$, $1 - \mu_{p_0}^2
  \mu_{p_2}$, $1 - \mu_{q_1}^2 \mu_{p_2}$, $1 - \mu_{q_2}^2 \mu_{p_2}$
  and $1 - \mu_{q_3}^2 \mu_{p_2}$ are positive.  We observe that,
  as suggested above, the denominator $c$ is precisely the product of
  these five factors.
\end{example}

Our approach for computing $\|\mathbf{v}\|$ will involve
first computing variants of $\|\mathbf{v}\|$.  More precisely, for all
subsets $Z$ of $P$, we consider a congruence relation $\sim_Z$ on
markings, such that $m \sim_Z m'$ iff $m$ and $m'$ coincide on all
places $p \in Z$.  Then, we denote by $\mathfrak{M}_Z$ the quotient
set $\mathbb{N}^P / \sim_Z$ and, for every element $\mathfrak{m}$ of
$\mathfrak{M}_Z$, we denote by $\hat{v}(\mathfrak{m})$ the product
$\prod_{p \in Z} \mu_p^{\mathfrak{m}(p)}$.  Two remarkable subsets $Z$
are the set $X = \bigcup_{i \leqslant N-2} P_i \cup
P_{N-1}^{\neg\max}$ and its complement $Y = P_{N-1}^{\max} \cup P_N
= P \setminus X$, which was already mentioned in
Section~\ref{section:ergodic}.  Indeed, places in $X$ are necessarily
bounded while, if the net is unbounded, then so are the places in $Y$.

Based on these objects, and for all integers $c \geqslant 0$, we define
two sets $\mathcal{C}_{m_0}(c)$ and $\mathcal{D}_{m_0}(c)$, which are respective subsets of
$\mathfrak{M}_X$ and of $\mathfrak{M}_Y$. In some sense, the sets
$\mathcal{C}_{m_0}(c)$ are meant to describe the ``bounded'' part of the markings in $\mathcal{R}(m_0)$,
whose ``unbounded'' part is described by the sets $\mathcal{D}_{m_0}(c)$.

\begin{definition}
The set $\mathcal{C}_{m_0}(c)$ is the set of those classes $\mathfrak{m}_X$ in $\mathfrak{M}_X$
such that $\mathfrak{m}_X \cdot P_{N-1}^{\neg\max} = c$, and such that
$\mathfrak{m}_X$ contains some marking in $\mathcal{R}(m_0)$.
The set $\mathcal{D}_{m_0}(c)$ is the set of those classes $\mathfrak{m}_Y$ in $\mathfrak{M}_Y$ such that
$c + \mathfrak{m}_Y \cdot P_{N-1}^{\max} + \mathfrak{m}_Y \cdot \mathsf{cin}(P_N) = \mathbf{C}_{N-1}^{m_0}$, and such that
$\mathfrak{m}_Y$ contains some marking in $\mathcal{R}(m_0)$.
\end{definition}

These two sets of classes allow us to split nicely the huge sum $\|\mathbf{v}\|$ into
smaller independent sums, as stated in the result below.
This decomposition
result has the flavour of convolution algorithms, yet it requires a specific
treatment since its terms are infinite sums.

\begin{restatable}{lemma}{productlem}
\label{productlem}
\label{lemma:product}
The normalisation constant $\|\mathbf{v}\|$ is equal to the following finite sum:
\[\|\mathbf{v}\| = \sum_{c=0}^{|P| \mathbf{W} \|m_0\|} \left(\sum_{\mathfrak{m}_X \in \mathcal{C}_{m_0}(c)} \hat{v}(\mathfrak{m}_X)\right)
\left(\sum_{\mathfrak{m}_Y \in \mathcal{D}_{m_0}(c)} \hat{v}(\mathfrak{m}_Y)\right).\]
\end{restatable}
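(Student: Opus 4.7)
My plan is to partition $\|\mathbf{v}\| = \sum_{m \in \mathcal{R}(m_0)} \hat{v}(m)$ along the bijection between reachable markings and pairs of classes $(\mathfrak{m}_X, \mathfrak{m}_Y)$. Since $(X, Y)$ is a partition of $P$, a marking $m$ is entirely determined by its two projections, and $\hat{v}(m) = \hat{v}(\mathfrak{m}_X)\hat{v}(\mathfrak{m}_Y)$. Setting $c = \mathfrak{m}_X \cdot P_{N-1}^{\neg\max}$, the invariant $\mathsf{Inv}_{N-1}$ of Proposition~\ref{propinvariants} rewrites exactly as $c + \mathfrak{m}_Y \cdot P_{N-1}^{\max} + \mathfrak{m}_Y \cdot \mathsf{cin}(P_N) = \mathbf{C}_{N-1}^{m_0}$, which is the defining equation of $\mathcal{D}_{m_0}(c)$. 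The forward direction is therefore immediate: given $m \in \mathcal{R}(m_0)$, its classes $(\mathfrak{m}_X, \mathfrak{m}_Y)$ lie in $\mathcal{C}_{m_0}(c) \times \mathcal{D}_{m_0}(c)$ with $m$ itself serving as the required witness. The outer summation range is controlled by $\mathsf{Inv}_{N-2}$ and $\mathsf{cin}(p) \geq 1$ on $P_{N-1}^{\neg\max}$, which yields $c \leq \mathbf{C}_{N-2}^{m_0} \leq |P|\,\mathbf{W}\,\|m_0\|$ since every entry of $v^{(N-2)}$ is bounded by $\mathsf{POT}_{N-1} \leq |P|\mathbf{W}$.

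The heart of the argument is the converse: for any pair $(\mathfrak{m}_X, \mathfrak{m}_Y) \in \mathcal{C}_{m_0}(c) \times \mathcal{D}_{m_0}(c)$, I must show that the combined marking $m$ is reachable. I would apply Theorem~\ref{theoreach}, picking witnesses $m^X, m^Y \in \mathcal{R}(m_0)$ inside $\mathfrak{m}_X$ and $\mathfrak{m}_Y$. The invariants $\mathsf{Inv}_i$ for $i \leq N-2$ transfer directly from $m^X$, since $\mathsf{cin}$ vanishes on $P_{i+1}^{\max}$ and hence the support of $v^{(i)}$ lies entirely in $X$; the condition $\mathsf{Inv}_{N-1}$ holds by the very definition of $\mathcal{D}_{m_0}(c)$; and $\mathsf{Inv}_N$ is vacuous for open nets. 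For liveness, $\mathsf{Live}_N$ is vacuous, $\mathsf{Live}_i$ for $i \leq N-3$ depends only on $m|_X$ and transfers from $m^X$, and $\mathsf{Live}_{N-1}$ transfers from $m^Y$ because $\mathsf{Inv}_{N-1}$ forces $m \cdot P_{N-1} = m^Y \cdot P_{N-1}$ while the right-hand minimum depends only on $\mathfrak{m}_Y$ and $p_\mathsf{ext}$.

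The main obstacle is $\mathsf{Live}_{N-2}$, the only condition that truly couples $X$ and $Y$, since $P_{N-1}$ straddles the partition. I would handle it by cases on whether $m^X$ marks some place of $P_{N-1}^{\neg\max}$. If it does, then since potentials in $P_{N-1}^{\neg\max}$ are strictly below $\mathsf{POT}_{N-1}$, the right-hand minimum of $\mathsf{Live}_{N-2}(m)$ coincides with that of $\mathsf{Live}_{N-2}(m^X)$, and the condition transfers from $m^X$. Otherwise, the right-hand minimum for $m$ is either $\mathsf{POT}_{N-1}$ or $+\infty$ (making the condition vacuous), and the invariant $v^{(N-2)}$ gives $m \cdot P_{N-2} = m^X \cdot P_{N-2} = m^Y \cdot P_{N-2} + \sum_{p \in P_{N-1}^{\neg\max}} \mathsf{cin}(p) m^Y(p)$; together with $\mathsf{Live}_{N-2}(m^Y)$, whose bound via a marked place of maximal $\mathsf{cin}$ in $P_{N-1}^{\neg\max}$ exactly compensates the loss, this yields the required inequality $m \cdot P_{N-2} \geq \mathsf{POT}_{N-1}$.

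Once the bijection between $\mathcal{R}(m_0)$ and $\bigsqcup_c \mathcal{C}_{m_0}(c) \times \mathcal{D}_{m_0}(c)$ is established, the factorisation $\hat{v}(m) = \hat{v}(\mathfrak{m}_X) \hat{v}(\mathfrak{m}_Y)$ distributes the sum over the Cartesian product and yields the claimed finite identity.
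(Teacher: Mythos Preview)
Your approach is essentially the same as the paper's: establish a bijection between $\mathcal{R}(m_0)$ and $\bigsqcup_c \mathcal{C}_{m_0}(c) \times \mathcal{D}_{m_0}(c)$ via Theorem~\ref{theoreach}, then use the multiplicativity of $\hat v$. The forward direction, the bound on $c$, and your treatment of $\mathsf{Inv}_{N-1}$ and $\mathsf{Live}_{N-1}$ all match the paper's argument.

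You needlessly complicate the handling of $\mathsf{Live}_{N-2}$, however. As noted just after the definition of the sets $\mathsf{Live}_i$ in the main text, for every $i<N$ the condition $m\in\mathsf{Live}_i$ depends only on the values $m(p)$ for $p\in P_i\cup P_{i+1}^{\neg\max}$: a marked place of $P_{i+1}^{\max}$ contributes $\mathsf{POT}_{i+1}$ to the minimum, which is exactly the convention when no place is marked, so such places never affect the right-hand side. For $i=N-2$ this set is $P_{N-2}\cup P_{N-1}^{\neg\max}\subseteq X$, hence $\mathsf{Live}_{N-2}$ transfers directly from $m^X$ to $m$ with no case split. Your ``main obstacle'' is therefore illusory, and your Case~2 detour through $m^Y$ and the vague ``exact compensation'' claim is unnecessary (and, as written, not clearly justified). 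The paper's proof simply invokes this observation and moves on.
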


It remains to compute sums $\sum_{\mathfrak{m}_X \in \mathcal{C}_{m_0}(c)} \hat{v}(\mathfrak{m}_X)$
and $\sum_{\mathfrak{m}_Y \in \mathcal{D}_{m_0}(c)} \hat{v}(\mathfrak{m}_Y)$ for polynomially many values of $c$.
It turns out that such computations can be performed in polynomial time,
as mentioned in Propositions~\ref{proposition:compute-c} and~\ref{proposition:compute-d}.

\begin{restatable}{proposition}{procomputec}
\label{procomputec}
\label{proposition:compute-c}
Let $(\mathcal{N},\lambda,m_0)$ be a live ergodic stochastic open $\mathsf{\Pi}^3$-net,
and let $c \geqslant 0$ be an integer.
There exists an algorithm for computing the sum $\sum_{\mathfrak{m}_X \in \mathcal{C}_{m_0}(c)} \hat{v}(\mathfrak{m}_X)$
in polynomial time with respect to $|P|$, $|T|$, $\mathbf{W}$, $c$ and $\|m_0\|$.
\end{restatable}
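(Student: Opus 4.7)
The plan is to compute the target sum via a dynamic programming that runs layer by layer and exploits the fact that every place in $X$ is bounded on $\mathcal{C}_{m_0}(c)$. Indeed, for every $1 \le i \le N-2$ and every $p \in P_{i+1}$ one has $\mathsf{cin}(p) \ge 0$, so the invariant $\mathsf{Inv}_i(m_0)$ of Proposition~\ref{propinvariants} forces $\sum_{p \in P_i} \mathfrak{m}_X(p) \le \mathbf{C}_i^{m_0}$; and by definition of $\mathcal{C}_{m_0}(c)$, $\sum_{p \in P_{N-1}^{\neg \max}} \mathfrak{m}_X(p) = c$. All these bounds are polynomial in $\|m_0\|$, $\mathbf{W}$ and $c$.

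Using Theorem~\ref{theoreach} I rephrase membership in $\mathcal{C}_{m_0}(c)$ as a set of local constraints on $\mathfrak{m}_X$: the invariants $\mathsf{Inv}_i$ for $i \le N-2$ (which only involve places of $X$, since places of $P_{i+1}^{\max}$ have $\mathsf{cin} = 0$), the liveness conditions $\mathsf{Live}_i$ for $i \le N-3$, plus an \emph{extendability condition} capturing the existence of an $\mathfrak{m}_Y \in \mathbb{N}^Y$ satisfying $\mathsf{Inv}_{N-1}$, $\mathsf{Live}_{N-2}$ and $\mathsf{Live}_{N-1}$. A direct analysis shows that this extendability amounts to an explicit arithmetic predicate, depending only on $c$, $\mathbf{C}_{N-1}^{m_0}$, the potentials of the places of $P_{N-1}^{\neg \max}$ marked by $\mathfrak{m}_X$, and the token count of layer $N-2$.

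The core of the algorithm is then a bottom-up dynamic programming. For each layer $1 \le i \le N-1$ I build a table $H_i(n, T, \sigma)$ summing $\prod_{p \in P_i \cap X} \mu_p^{\mathfrak{m}_X(p)}$ over all distributions on $P_i \cap X$ with $n$ tokens, with linear combination $\sum_p \mathsf{cin}(p) \mathfrak{m}_X(p) = T$, and with a liveness summary $\sigma$ equal to the minimum potential of a marked place (or $+\infty$ if none). Each table has polynomially many entries and is built in polynomial time by incorporating places one at a time. The layers are then glued together using $\mathsf{Inv}_i$ (which forces $T_{i+1} = \mathbf{C}_i^{m_0} - n_i$) and $\mathsf{Live}_i$ (which relates $n_i$ and $\sigma_{i+1}$), and the process terminates at layer $N-1$ by fixing $n_{N-1} = c$ and imposing the extendability predicate obtained above.

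The main technical obstacle is the handling of the liveness conditions inside the dynamic programming, since $\mathsf{Live}_i$ depends on \emph{which} places of $P_{i+1}$ are marked through their potentials, and not merely on token counts. I address this by augmenting each layer's state with the summary $\sigma \in \{0, 1, \ldots, \mathsf{POT}_i\} \cup \{+\infty\}$, which takes only polynomially many values and suffices to compare against $n_i$. A careful case analysis then handles the extendability condition for $\mathsf{Live}_{N-2}$, where tokens placed in the $Y$-part $P_{N-1}^{\max}$ may also contribute to the minimum potential. Incorporating this summary keeps the state space polynomial, each table and each layer combination remains polynomial-time, and the overall running time is polynomial in $|P|$, $|T|$, $\mathbf{W}$, $c$ and $\|m_0\|$, as required.
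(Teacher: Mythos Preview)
Your proposal is correct and follows essentially the same dynamic-programming strategy as the paper: both process places layer by layer, one at a time, and both handle the liveness constraints by tracking the minimum potential among the marked places of the current layer (the paper does this by sorting the places of each $P_i$ by increasing potential and recording the index $j$ of the first marked one, which is equivalent to your summary $\sigma$). Your extendability analysis is more cautious than necessary: since $\mathsf{Live}_{N-2}$ depends only on $P_{N-2}\cup P_{N-1}^{\neg\max}\subseteq X$ (places of $P_{N-1}^{\max}$ have maximal potential and never lower the minimum), the only genuine extendability question is whether $\mathcal{D}_{m_0}(c)\neq\emptyset$, which depends on $c$ alone and is in any case immaterial in the product formula of Lemma~\ref{productlem}.
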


Proposition~\ref{proposition:compute-c} is similar to the results
of~\cite{HMN-fi13}, and the algorithm can be adapted to closed
  $\mathsf{\Pi}^3$-nets. The only major improvement here is that, instead of
obtaining an algorithm that is polynomial-time for fixed values of $N$
only, our choice of invariants leads to a polynomial-time algorithm
independently of $N$.

\begin{restatable}{proposition}{procomputed}
\label{procomputed}
\label{proposition:compute-d}
Let $(\mathcal{N},\lambda,m_0)$ be a live ergodic stochastic open $\mathsf{\Pi}^3$-net,
and let $c \geqslant 0$ be an integer.
There exists an algorithm for computing the sum $\sum_{\mathfrak{m}_Y \in \mathcal{D}_{m_0}(c)} \hat{v}(\mathfrak{m}_Y)$
in polynomial time with respect to $|P|$, $|T|$, $\mathbf{W}$, $c$ and $\|m_0\|$.
\end{restatable}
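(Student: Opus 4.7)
I plan to express the desired sum as the coefficient of a rational generating function of polynomial size, and then extract that coefficient in polynomial time via the linear recurrence satisfied by its coefficients.

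\textbf{Reducing to a linear-constraint sum.} First I would apply Theorem~\ref{theo:reach} to simplify the membership condition defining $\mathcal{D}_{m_0}(c)$. The crucial observation is that $\cin(p) = 0$ for every $p \in P_{N-1}^{\max}$, so each invariant vector $v^{(i)}$ with $i < N-1$ has empty support in $Y$. Thus the only invariant restriction on $\mathfrak{m}_Y$ is the linear equation $v^{(N-1)} \cdot \mathfrak{m}_Y = K$ with $K := \mathbf{C}_{N-1}^{m_0} - c$, which is already built into $\mathcal{D}_{m_0}(c)$. The remaining reachability conditions (the relevant $\mathsf{Live}_i$ constraints, plus the existence of a compatible $\mathfrak{m}_X$) contribute only a correction whose terms can be indexed by a polynomial-size set (times a geometric factor over $P_N^0$-markings, where $P_N^0 := \{p \in P_N : \cin(p) = 0\}$) and can therefore be computed in polynomial time.

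\textbf{Rational generating function.} With this correction set aside, the remaining ``main'' sum is
\[ \mathcal{S}(K) \; = \sum_{\mathfrak{m}_Y \in \mathbb{N}^Y,\ v^{(N-1)} \cdot \mathfrak{m}_Y = K} \hat{v}(\mathfrak{m}_Y) \; = \; [z^K] \prod_{p \in P_{N-1}^{\max}} \frac{1}{1-\mu_p z} \prod_{p \in P_N} \frac{1}{1-\mu_p z^{\cin(p)}}, \]
where each factor is the formal series $\sum_{k \ge 0} \mu_p^k z^{\cin(p)k}$. Since $\cin(p) < 0$ for $p \in P_N^- := \{p \in P_N : \cin(p) < 0\}$, this is a bilateral formal Laurent series. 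Setting $E := \sum_{p \in P_N^-} |\cin(p)|$ and clearing negative exponents, we obtain $\mathcal{S}(K) = C \cdot [z^K] \Phi(z)$, where $C = \prod_{p \in P_N^0}(1-\mu_p)^{-1}$ and $\Phi(z) = z^E / S(z)$ with $S(z) \in \mathbb{Q}(\mu_p)[z]$ of degree at most $|Y| \cdot \mathbf{W}$.

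\textbf{Coefficient extraction and main obstacle.} Writing $S(z) = S_+(z) \cdot S_-(z)$, where $S_+$ collects the factors coming from $Y^+ := P_{N-1}^{\max} \cup P_N^+$ and $S_-$ collects those coming from $P_N^-$, these two polynomials have disjoint root loci generically in the parameters $\mu_p$, so an extended Euclidean computation in $\mathbb{Q}(\mu_p)[z]$ produces a partial fraction decomposition $\Phi(z) = \Phi_+(z) + \Phi_-(z)$ of polynomial size. The Laurent expansion corresponding to the original product is then $\Phi_+$ expanded as a power series in $z$ (contributing only to $[z^K]$ for $K \ge 0$) and $\Phi_-$ expanded as a power series in $z^{-1}$ (contributing only for $K < 0$); in either case, $[z^K]$ is computable in polynomial time by iterating the linear recurrence encoded by the corresponding denominator. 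The main subtlety is that $\mathcal{S}(K)$ is a genuinely infinite sum because places in $P_N^-$ are unbounded, so the standard Taylor-coefficient extraction does not apply directly; the ergodicity hypothesis (Theorem~\ref{theo:ergodic}) is precisely what guarantees that the chosen bilateral Laurent expansion converges in an annulus containing the unit circle, which validates the split $\Phi = \Phi_+ + \Phi_-$ and the associated iterative computation. Summing $C \cdot [z^K] \Phi(z)$ and the correction term from the first step then yields the desired quantity in polynomial time.
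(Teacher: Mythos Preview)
Your generating-function approach is genuinely different from the paper's and the ``main sum'' part is essentially sound: expressing $\mathcal{S}(K)$ as a coefficient of a rational function, splitting via partial fractions $\Phi = A/S_+ + B/S_-$ (ergodicity guarantees $\gcd(S_+,S_-)=1$ since their roots lie in disjoint annular regions), and reading off $[z^K]$ via the linear recurrence encoded by $S_+$ (or $S_-$) is a clean and valid computation. The paper instead builds an explicit dynamic program on auxiliary sets $\overline{\mathcal{D}}(A,B,\mathbf{w},\alpha,\beta,\gamma)$, eliminating one coordinate at a time by pairing a ``positive'' variable with a ``negative'' one and summing the resulting geometric series; that scheme treats the equality and the inequality constraint on the same footing, whereas your method handles a single linear equality very elegantly but must deal with any further constraint separately.

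That further constraint is where your proposal has a real gap. The set $\mathcal{D}_{m_0}(c)$ is \emph{not} the full slice $\{v^{(N-1)}\cdot\mathfrak{m}_Y = K\}$: the reachability requirement adds the $\mathsf{Live}_{N-1}$ condition, which (after sorting $P_N$ by potential and letting $s$ be the least marked index) becomes the inequality $c + \mathfrak{m}_Y\cdot P_{N-1}^{\max} \ge \min(\mathsf{pot}(p_s^N),\mathsf{pot}(p_{\mathsf{ext}}))$. You claim the resulting ``correction'' is indexed by a polynomial-size set (times a geometric factor over $P_N^0$), but this is not true. Failing markings do have $\mathfrak{m}_Y\cdot P_{N-1}^{\max}$ bounded, yet their $P_N$-part is still only constrained by the single linear equation and by vanishing on $\{p\in P_N:\mathsf{pot}(p)\le c+M\}$; since every place in $P_N^-$ has $\mathsf{pot}(p)>\mathsf{pot}(p_{\mathsf{ext}})>c+M$, all of $P_N^-$ remains free and the correction is itself an \emph{infinite} sum of the same type as $\mathcal{S}(K)$, not a finite one.

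The fix is not hard but it is not what you wrote: for each of the polynomially many values of $M=\mathfrak{m}_Y\cdot P_{N-1}^{\max}$ with $c+M<\mathsf{pot}(p_{\mathsf{ext}})$, the correction factorises as a (finite, standard) sum over $P_{N-1}^{\max}$-markings of total $M$, times a sum over $P_N\setminus P_N^{\le c+M}$-markings satisfying a single $\mathsf{cin}$-weighted equation. The second factor again requires your full partial-fraction machinery, now applied to a \emph{different} denominator (a proper sub-product of $S$); as $M$ ranges, at most $|P_N|+1$ distinct sub-products arise, so the total work is polynomial. Once this is spelled out, your route does go through, and it has the virtue of making the analytic role of ergodicity (existence of a common annulus of convergence separating the roots of $S_+$ from those of $S_-$) completely transparent; the price is that the $\mathsf{Live}_{N-1}$ inequality has to be handled by this somewhat ad hoc outer loop, whereas the paper's recursion absorbs it uniformly.
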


\begin{proof}[Proof (sketch)]
We compute the sum $\sum_{\mathfrak{m}_Y \in \mathcal{D}_{m_0}(c)} \hat{v}(\mathfrak{m}_Y)$
by using a dynamic-programming approach.
Let $a = |P_{N-1}^{\max}|$, $b = |P_N|$, and define $t$ and $u$ to be integers such that
(i) $\mathsf{cin}(p_i^N) > 0$ iff $1 \leqslant i \leqslant t$,
(ii) $\mathsf{cin}(p_i^N) = 0$ iff $t < i < u$ and
(iii) $\mathsf{cin}(p_i^N) < 0$ iff $u \leqslant i \leqslant b$.
In addition, for all $i \leqslant b$, let $\Delta_i = \max\{1,|\mathsf{cin}(p_i^N)|\}$.
We consider below the lattice $\mathbb{L} = \mathbb{Z}^{a-1} \times \prod_{i=1}^b (\Delta_i \mathbb{Z})$.

Now, consider integers $A, B \in \mathbb{Z}$ and $1 \leqslant \alpha \leqslant a$, $1 \leqslant \gamma \leqslant \beta \leqslant b$,
as well as a vector $\mathbf{w}$ in the quotient set $\mathbb{Z}^{a+b-1} / \mathbb{L}$.
We define auxiliary sets of vectors of the form
\[\overline{\mathcal{D}}(A,B,\mathbf{w},\alpha,\beta,\gamma) = \left\{ \mathbf{xy} \in \mathbb{N}^{a+b-1} \left|
\begin{array}{l}A + \sum_{j=u}^{\beta} y_j \geqslant \sum_{i=1}^{\alpha} x_i + \sum_{j=\gamma}^t y_j \\
B  + \sum_{j=u}^{\beta} y_j \geqslant \sum_{j=\gamma}^t y_j \\
\mathbf{xy} - \mathbf{w} \in \mathbb{L} \\
x_i = 0 \text{ for all } i > \alpha \\
y_j = 0 \text{ for all } j < \gamma \text{ and all } j > \beta
\end{array}\right\},\right.\]
where $\mathbf{xy}$ denotes the vector $(x_1,\ldots,x_{a-1},y_1,\ldots,y_b)$.

First, we exhibit a bijection $\mathfrak{m} \mapsto \overline{\mathfrak{m}}$,
from the set $\mathcal{D}_{m_0}(c)$ to a finite union of sets
$\overline{\mathcal{D}}(\mathbf{A},\mathbf{B},\mathbf{0},a-1,b,\mathbf{s})$, where
the integers $\mathbf{A}$, $\mathbf{B}$ and $\mathbf{s}$ can be computed efficiently.
We can also prove a relation of the form $\hat{v}(\mathfrak{m}) = \hat{\vphantom{\rule{1pt}{5.5pt}}\smash{\hat{v}}}(\overline{\mathfrak{m}})$ for all markings
$\mathfrak{m} \in \mathcal{D}_{m_0}(c)$, where $\hat{\vphantom{\rule{1pt}{5.5pt}}\smash{\hat{v}}}$ is a product-form function
$\hat{\vphantom{\rule{1pt}{5.5pt}}\smash{\hat{v}}} : \mathbf{xy} \mapsto \mathbf{P} \prod_{i=1}^{a-1} \nu_{x,i}^{x_i} \prod_{i=1}^{b} \nu_{y,i}^{y_i}$
such that the constants $\mathbf{P}$, $\nu_{x,i}$ and $\nu_{y,j}$ can be computed efficiently.
It remains to compute sums of the form
\[\mathcal{U}(A,B,\mathbf{w},\alpha,\beta,\gamma) = \sum_{\mathbf{xy} \in \overline{\mathcal{D}}(A,B,\mathbf{w},\alpha,\beta,\gamma)} \hat{\vphantom{\rule{1pt}{5.5pt}}\smash{\hat{v}}}(\mathbf{xy}).\]

We do it by using recursive decompositions of the sets $\overline{\mathcal{D}}(A,B,\mathbf{w},\alpha,\beta,\gamma)$,
where the integers $A$ and $B$ belong to the finite set $\{-|\mathbf{A}|-|\mathbf{B}|,\ldots,|\mathbf{A}|+|\mathbf{B}|\}$,
and where the vector $\mathbf{w}$ is constrained to have at most one non-zero coordinate, chosen from some finite domain.
These decompositions involve computing polynomially many auxiliary sums, which will prove Proposition~\ref{proposition:compute-d}.
\end{proof}

We illustrate the last part of this sketch of proof in the case where
$\gamma \leqslant t$, $u \leqslant \beta$, and $\mathbf{w}$ is of the form $\lambda ~ \mathbf{1}_{y,\beta}$,
for some $\lambda \in \mathbb{Z} / \Delta_\beta \mathbb{Z}$ (where we denote by $\mathbf{1}_{y,j}$
the vector of the canonical basis whose unique non-zero entry is $y_j$).
In this case, we show how the sum
$\mathcal{U}(A,B,\mathbf{w},\alpha,\beta,\gamma)$ can be expressed in terms of ``smaller'' sums
$\mathcal{U}(A',B',\mathbf{w}',\alpha',\beta',\gamma')$.
\label{explicit}

Let us split the set $\overline{\mathcal{D}}(A,B,\mathbf{w},\alpha,\beta,\gamma)$ into two subsets:
\begin{align*}
& \overline{\mathcal{D}}_\oplus(A,B,\mathbf{w},\alpha,\beta,\gamma) = \{\mathbf{xy} \in \overline{\mathcal{D}}(A,B,\mathbf{w},\alpha,\beta,\gamma) \mid y_\beta \leqslant y_\gamma\} \text{ and} \\
& \overline{\mathcal{D}}_\ominus(A,B,\mathbf{w},\alpha,\beta,\gamma) = \{\mathbf{xy} \in \overline{\mathcal{D}}(A,B,\mathbf{w},\alpha,\beta,\gamma) \mid y_\beta \geqslant y_\gamma\} \text{, whose intersection is} \\
& \overline{\mathcal{D}}_\odot(A,B,\mathbf{w},\alpha,\beta,\gamma) = \{\mathbf{xy} \in \overline{\mathcal{D}}(A,B,\mathbf{w},\alpha,\beta,\gamma) \mid y_\beta = y_\gamma\}.
\end{align*}
Computing $\mathcal{U}(A,B,\mathbf{w},\alpha,\beta,\gamma)$ amounts to computing the three sums
$\sum_{\mathbf{xy}} \hat{\vphantom{\rule{1pt}{5.5pt}}\smash{\hat{v}}}(\mathbf{xy})$, where $\mathbf{xy}$ ranges respectively on
$\overline{\mathcal{D}}_\oplus(A,B,\mathbf{w},\alpha,\beta,\gamma)$, $\overline{\mathcal{D}}_\ominus(A,B,\mathbf{w},\alpha,\beta,\gamma)$ and
$\overline{\mathcal{D}}_\odot(A,B,\mathbf{w},\alpha,\beta,\gamma)$.
We show here how to compute the first sum, which we simply denote by $\mathcal{U}_\oplus$.
The two latter sums are computed similarly.

Splitting every vector $\mathbf{xy}$
into one vector $y_\beta(\mathbf{1}_{y,\gamma}+\mathbf{1}_{y,\beta})$ and one vector
$\mathbf{xy}' = \mathbf{xy} - y_\beta(\mathbf{1}_{y,\gamma}+\mathbf{1}_{y,\beta})$,
we observe that $\mathbf{xy} - \lambda ~ \mathbf{1}_{y,\beta} \in \mathbb{L}$ if and only if
(i) $y_\beta \equiv \lambda \mod{\Delta_\beta}$, and
(ii) $\mathbf{xy}' - y_\beta \mathbf{1}_{y,\gamma} \in \mathbb{L}$.
It follows that
\begin{align*}
\hspace*{-5mm}\overline{\mathcal{D}}_\oplus(A,B,\lambda ~ \mathbf{1}_{y,\beta},\alpha,\beta,\gamma) =~ &
\bigsqcup_{j \geqslant 0} \{\mathbf{xy} \in \overline{\mathcal{D}}_\oplus(A,B,\lambda ~ \mathbf{1}_{y,\beta},\alpha,\beta,\gamma) \mid y_\beta = j\} \\
=~ & \bigsqcup_{j \geqslant 0, j \equiv \lambda \!\!\!\mod{\Delta_\beta}}
\left(\overline{\mathcal{D}}(A,B,j ~ \mathbf{1}_{y,\gamma},\alpha,\beta-1,\gamma) + j (\mathbf{1}_{y,\gamma}+\mathbf{1}_{y,\beta}) \right).
\end{align*}

Then, using the change of variable $j = k + \Delta_\gamma \Delta_\beta \ell$ (with $0 \leqslant k < \Delta_\gamma \Delta_\beta$),
observe that $j \equiv k \!\!\!\mod{\Delta_\beta}$ and that
$j ~ \mathbf{1}_{y,\gamma} \equiv k ~ \mathbf{1}_{y,\gamma} \mod \mathbb{L}$, whence
\begin{align*}
\mathcal{U}_\oplus =~ &
\sum_{j \geqslant 0} \mathbf{1}_{j \equiv \lambda \!\!\!\mod{\Delta_\beta}} ~
\nu_{y,\gamma}^j ~ \nu_{y,\beta}^j ~ \mathcal{U}(A,B,j ~ \mathbf{1}_{y,\gamma},\alpha,\beta-1,\gamma) \\[-.2cm]
=~ & \sum_{\ell \geqslant 0} \sum_{k=0}^{\Delta_\gamma \Delta_\beta - 1} 
\mathbf{1}_{k \equiv \lambda \!\!\!\mod{\Delta_\beta}} ~
(\nu_{y,\gamma} \nu_{y,\beta})^{k + \Delta_\gamma \Delta_\beta \ell} \mathcal{U}(A,B,k ~ \mathbf{1}_{y,\gamma},\alpha,\beta-1,\gamma) \\[-.2cm]
=~ & \frac{1}{1-(\nu_{y,\gamma} \nu_{y,\beta})^{\Delta_\gamma \Delta_\beta}} \sum_{k=0}^{\Delta_\gamma \Delta_\beta - 1}
\mathbf{1}_{k \equiv \lambda \!\!\!\mod{\Delta_\beta}} ~
(\nu_{y,\gamma} \nu_{y,\beta})^k \mathcal{U}(A,B,k ~ \mathbf{1}_{y,\gamma},\alpha,\beta-1,\gamma).
\end{align*}

This shows, in this specific case, that computing $\mathcal{U}_\oplus$ reduces to computing
finitely many sums of the form $\mathcal{U}(A,B, \mathbf{w}',\alpha,\beta-1,\gamma)$.
Similar constructions are successfully used in all other cases.

\section{Conclusion}

  Performance analysis of infinite-state stochastic systems is a
  very difficult task. Already checking the ergodicity is difficult in
  general, and even for systems which are known to be ergodic and
  which have product-form steady-state distributions, computing the
  normalising constant can be hard. In this work, we have proposed the
  model of open $\mathsf{\Pi}^3$-nets; this model generalises queuing
  networks and closed produc-form Petri nets
  and generates a potentially infinite state-space. We have
  shown that we can efficiently decide (in polynomial time!) many
  behavioural properties, like the boundedness, the reachability in live nets, and
  the most important quantitative property: ergodicity. Furthermore, using dynamic
  programming algorithms managing infinite sums, we have shown that we can compute the
  normalising constant of the steady-state distribution in
  pseudo-polynomial time.

  We believe our approach can be extended to $\mathsf{\Pi}^3$-nets in which
  one place is removed in every layer, without affecting too much the
  complexity. This setting would allow to model production
 of  resources by the environment while in the current version resources
 may grow in an unbounded way but only when the number of processes
 of the main layer also grows.
We leave this as future work.

\newpage
\appendix

\bigskip
\bigskip
\begin{center}
\Huge\textbf{--- Appendix ---}
\end{center}
\bigskip

\section{Proofs of Section~\ref{sec:formalism}}

\proppideuxtrois*

\begin{proof}
We first deal with the case where $\net$ is closed.
By definition of a closed \pitrois-net, every connected
component of $G_{\net}$ is strongly connected.
Hence, it remains to prove that every bag $b$ of $V_{\net}$
admits a witness (denoted $\wit_b$).

For every $i \leqslant N$ and every place $p \in P_i$,
we construct the witness of the bag $b_p$ by backward
induction on $i$, as follows:
\[\wit_{b_p} = \begin{cases} p & \text{if } i = N \text{ and } p \neq p_\ext; \\
p_\ext - P_N & \text{if } p = p_\ext; \\
p - \sum_{\beta \in V_{i+1}} \beta(p) \wit_\beta & \text{if } i \leqslant N-1.\end{cases}\]
Let us prove that these vectors $\wit_{b_p}$ are indeed witnesses of the bags $b_p$.

First, if for all places $p \neq p_\ext$,
the support of $\wit_{b_p}$ is a subset of $\bigcup_{j \geqslant i} P_j \setminus \{p_\ext\}$.
It comes at once that, for all transitions $t \in \bigcup_{j \leqslant i} T_j$, we have
\[\begin{cases}
\wit_{b_p}\cdot W(t) =-1 & \mbox{if } W^{-}(t) = b_p;\\
\wit_{b_p}\cdot W(t) =1 & \mbox{if } W^{+}(t)= b_p;\\
\wit_{b_p}\cdot W(t) =0 & \mbox{otherwise}.
\end{cases}\]

This already proves that $\wit_{b_p}$ is a witness of $b_p$ whenever $p \in P_N \setminus \{p_\ext\}$.
Let us also prove that $\wit_{b_{p_\ext}}$ is a witness of $b_{p_\ext}$.
We observe that $\sum_{p \in P_N} \wit_{b_p} = P_N - P_N = 0$.
Hence, for every transition $t \notin T_N$, then $\wit_{b_{p_\ext}}\cdot W(t) = 0$.
Moreover, consider some transition $t \in T_N$, and let $b_q = W^{-}(t)$ and $b_r = W^+(t)$:
\begin{itemize}
 \item if $q = p_\ext$, then $r \neq p_\ext$, hence $\wit_{b_{p_\ext}}\cdot W(t) = -\wit_{b_r}\cdot W(t) = 1$;
 \item if $r = p_\ext$, then $q \neq p_\ext$, hence $\wit_{b_{p_\ext}}\cdot W(t) = -\wit_{b_q}\cdot W(t) = 1$;
 \item if $q$ and $r$ are both distinct from $p_\ext$, then $\wit_{b_{p_\ext}}\cdot W(t) = -\wit_{b_q} \cdot W(t) - \wit_{b_r} \cdot W(t) = 0$.
\end{itemize}
This proves that $\wit_{b_{p_\ext}}$ is a witness of $b_{p_\ext}$.

Second, by induction hypothesis, and for all transitions $t \in \bigcup_{j > i+1} T_j$
and all bags $\beta \in V_{i+1}$, we cannot have $\beta = W^-(t)$ or $\beta = W^+(t)$,
hence we have $\wit_\beta \cdot W(t) = 0$. Since $p$ does not belong to the support
of $W^-(t)$ nor of $W^+(t)$, it also follows that $p \cdot W(t) = 0$, whence $\wit_{b_p} \cdot W(t) = 0$.

Finally, consider some transition $t \in T_{i+1}$, and let $\beta = W^-(t)$ and $\beta' = W^+(t)$.
Using again the induction hypothesis, we have
\[\wit_{b_p} = (p - \beta(p) \wit_\beta - \beta'(p) \wit_{\beta'}) \cdot W(t) = (p \cdot W(t)) + \beta(p) - \beta'(p) = 0,\]
which proves that $\wit_{b_p}$ is indeed a witness of $b_p$ and
completes the proof in the case of closed nets.

\medskip
We deal now with the case where $\net$ is an open net,
based on the closed net $\overline{\net}$.
In this case, every bag of $\overline{\net}$ remains
a bag of $\net$, except the bag $b_{p_\ext}$,
which is transformed into a new bag $b_{p_\ext}^\star := b_{p_\ext} - p_\ext$.
We also write $b = b^\star$ for all other bags of $\overline{\net}$.
Accordingly, we denote by $\wit_b$ the witness of
a bag $b$ in $\overline{\calN}$.

Recall that $p_\ext$ does not belong to the support of $\wit_b$,
so that, for every transition of $\calN$ or $\overline{\calN}$,
the quantity $\wit_b \cdot W(t)$ does not depend on whether
the place $p_\ext$ was deleted.
Then, we consider two cases:
\begin{enumerate}
\item If $b_{p_\ext}^\star = \beta$ for some bag $\beta$ of $\overline{\net}$,
then the bag graph $\calG_\net$ is obtained from the original bag
graph $\calG_{\overline{\net}}$ by merging the vertices
$b_{p_\ext}$ and $\beta$ (the resulting vertex is named $\beta$).
Since every connected component of $\calG_{\overline{\net}}$
is strongly connected, so is every connected component of $\calG_\net$.

Then, let $b = b^\star$ be a bag of $\net$ distinct from $\beta$.
The transitions of $\calN$ with input (resp. output) bag $b^\star$ are
exactly those transitions of $\overline{\calN}$
with input (resp. output) bag $b$.
It follows that $\wit_b$
is still a witness of $b^\star$ in $\net$.

Moreover, the transitions of $\calN$
with input (resp. output) bag $\beta$ are exactly
those transitions of $\overline{\calN}$
with input (resp. output) bag $b_{p_\ext}$ or $\beta$.
It follows that $\wit_\beta + \wit_{b_{p_\ext}}$
is a witness of $\beta$ in $\net$.
 
\item If $b_{p_\ext}^\star$ is not equal to any bag of $\overline{\net}$,
then the bag graph $\calG_\net$ is obtained from
$\calG_{\overline{\net}}$ by renaming
the vertex $b_{p_\ext}$ into the new vertex $b_{p_\ext}^\star$.
Hence, every connected component of $\calG_\net$ is strongly connected.

Furthermore, for every bag $b^\star$ of $\net$,
the transitions of $\calN$ with input (resp. output) $b^\star$
are exactly those transitions of $\overline{\calN}$
whith input (resp. output) bag $b$
It follows that $\wit_b$
is still a witness of $b^\star$ in $\net$.
\end{enumerate}
This proves that both open and closed \pitrois-nets are \pideux-nets.
\end{proof}

\section{Proofs of Section~\ref{sec:qualitative}}

We also refine the notion of potential that appeared in
Definition~\ref{def:pitrois}. Let $b_p$ be the bag corresponding to
place $p \in P_i$; it can be rewritten as $p+\sum_{q \in P_{i-1}}
b_p(q)\ q$ where $b_p(q) \in \nat$. The value $b_p(q)$ corresponds to
the potential of $p$ w.r.t. $q$; it corresponds to the number of
$q$-resources that are required to fire transitions out of $b_p$. It
is easy to see that $\pot(p) = \sum_{q \in P_{i-1}} b_p(q)$.  The same
can be done for the external place $p_{\ext}$.

First, as an obvious consequence of the definitions, we have:

\begin{lemma}
  \label{lemma:transition}
  Let $m$ be a marking, and assume that $m \xrightarrow{t} m'$ with $t
  \in T_i$ (assuming $2 \le i \le N$). Let $p_1$, resp. $p_2$, be the
  input and output place of $t$ in $P_i$ (note that, by convention,
  one of those can be $p_\ext$). Note that, since we assumed that
  no transition is useless, we cannot have $p_1 = p_2$. Then:
  \begin{itemize}
  \item $m'(p_1) = m(p_1) -1$ and $m'(p_2) = m(p_2) +1$;
  \item for every place $p \in P_i \setminus \{p_1,p_2\}$, $m'(p) = m(p)$
  \item for every place $q \in P_{i-1}^{\max}$,
  $m'(q) = m(q) + \big(b_{p_2}(q) - b_{p_1}(q)\big)$
  \item for every place $r \in P \setminus (P_{i-1}^{\max} \cup P_i)$,
  $m'(r) = m(r)$.
  \end{itemize}
\end{lemma}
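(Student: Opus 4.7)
The statement is essentially a routine unpacking of the structural definition of a $\mathsf{\Pi}^3$-net, so my plan is to reduce it to the explicit form of the two bags involved in firing $t$. The first step is to identify these bags. Since $t \in T_i$ means $W^-(t) \in V_i$, the bijection between places and bags in Definition~\ref{def:pitrois} gives $W^-(t) = b_{p_1}$ for a unique $p_1 \in P_i^\star$. The weak-reversibility condition (each connected component of the bag graph is strongly connected) places $W^+(t)$ in the same component as $W^-(t)$, hence $W^+(t) = b_{p_2}$ for some $p_2 \in P_i^\star$, and the no-useless-transition assumption $W^-(t) \neq W^+(t)$ forces $p_1 \neq p_2$.

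Next I would invoke the two structural conditions: $b_{p_j}(p_j) = 1$, and for every $x \neq p_j$, $b_{p_j}(x) > 0$ implies $x \in P_{i-1}^{\max}$. This yields the explicit decomposition
\[ b_{p_j} \;=\; p_j \;+\; \sum_{q \in P_{i-1}^{\max}} b_{p_j}(q)\, q \qquad (j=1,2). \]
Since $m' = m + W^+(t) - W^-(t) = m + b_{p_2} - b_{p_1}$, computing $m'(x)$ for each $x \in P$ is just extracting the coefficient of $x$ on the right-hand side. A case analysis on whether $x = p_1$, $x = p_2$, $x \in P_i \setminus \{p_1, p_2\}$, $x \in P_{i-1}^{\max}$, or $x \in P \setminus (P_i \cup P_{i-1}^{\max})$ yields directly the four bullets. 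The single observation needed in the first three cases is that $P_i \cap P_{i-1}^{\max} = \emptyset$, which forces $b_{p_{3-j}}(p_j) = 0$, so the only surviving contributions at $x = p_j$ come from the $p_j$ coefficient of $b_{p_j}$ itself.

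The only mild subtlety concerns the open-net case, where $p_1$ or $p_2$ may coincide with the deleted place $p_\ext$. Following the convention used in the proof of Proposition~\ref{proppideuxtrois}, the bag of $\mathcal{N}$ attached to $t$ is the projection of the corresponding bag of $\overline{\mathcal{N}}$ onto $P = P^\star \setminus \{p_\ext\}$; the identity $m' = m + b_{p_2} - b_{p_1}$ is read over $P$. The bulleted equations involving $p_j$ are then understood as vacuous precisely when $p_j = p_\ext$ (since $m$ has no coordinate there), while the other three bullets hold verbatim because $p_\ext$ never belongs to $P_i \setminus \{p_1,p_2\}$, to $P_{i-1}^{\max}$ (as $p_\ext \in P_N^\star$ and we took $i \leqslant N$), nor to $P \setminus (P_i \cup P_{i-1}^{\max})$. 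I do not anticipate any real obstacle: once the bags are written in their $\mathsf{\Pi}^3$-form, the lemma follows by direct evaluation.
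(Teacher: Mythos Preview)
Your proposal is correct and follows exactly the approach the paper has in mind: the paper simply states the lemma ``as an obvious consequence of the definitions'' without any further argument, and your write-up is precisely the routine unpacking of the $\mathsf{\Pi}^3$-net bag structure that makes this obvious claim explicit. There is nothing to add.
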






\subsection{Liveness analysis}
\label{app:liveness}

For proving Theorem~\ref{theo:liveness}, we establish some preliminary
technical results.

\begin{lemma}
  \label{lemma:minimal-marking}
  Fix $1 \le i \le N$.  Let $m$ be a marking, and $m' \in
  \calR_{\net}(m)$.  Then:
  \[
  m \in \Live_i\ \text{implies}\ m' \in \Live_i
  \]
\end{lemma}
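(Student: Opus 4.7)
The plan is to reduce to a single firing step $m \xrightarrow{t} m'$ by an induction on the length of the firing sequence from $m$ to $m'$, and then to case-split on the layer $j$ with $t \in T_j$, using Lemma~\ref{lemma:transition} to read off how $m'$ differs from $m$ on $P_i$ and on $P^\star_{i+1}$.

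I would dispose first of the case $i = N$. If $\calN$ is open, then $\Live_N$ contains every marking, so nothing needs to be proved. If $\calN$ is closed, then any firing preserves $m \cdot P_N$: transitions in $T_N$ move one token inside $P_N$ (note that, in the closed case, $p_\ext \in P_N$), and transitions in $T_j$ with $j < N$ do not touch $P_N$ at all. Hence $m \cdot P_N > 0$ implies $m' \cdot P_N > 0$.

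For $1 \le i \le N-1$, Lemma~\ref{lemma:transition} tells us that firing $t \in T_j$ only alters $m$ on $P_j \cup P_{j-1}^{\max}$, so only $j \in \{i,i+1,i+2\}$ can possibly affect the $\Live_i$ condition. The cases $j = i$ and $j = i+2$ are routine: for $j = i$, a single token moves inside $P_i$ so both sides of the $\Live_i$ inequality are unchanged; for $j = i+2$, the quantity $m \cdot P_i$ is unchanged and the only places of $P^\star_{i+1}$ whose marking can change lie in $P_{i+1}^{\max}$, but each such place has potential $\POT_{i+1}$, which is maximal on layer $i+1$, so adding or removing them from the set defining the minimum in $\Live_i$ cannot lower that minimum.

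The main obstacle is the case $j = i+1$. Let $p_1, p_2 \in P^\star_{i+1}$ be the input and output places of $t$. The crucial observation is that, by the third condition in Definition~\ref{def:pitrois}, the resources of the bag $b_{p_1}$ lie in $P_i^{\max}$, so the enabling condition $m \ge W^-(t)$ forces
\[
m \cdot P_i \;\ge\; m \cdot P_i^{\max} \;\ge\; \sum_{q \in P_i^{\max}} b_{p_1}(q) \;=\; \pot(p_1).
\]
Lemma~\ref{lemma:transition} then yields $m' \cdot P_i = m \cdot P_i + \pot(p_2) - \pot(p_1) \ge \pot(p_2)$, and since $m'(p_2) \ge 1$ or $p_2 = p_\ext$, the place $p_2$ lies in the set defining the right-hand side of $\Live_i$ at $m'$, so that right-hand side is at most $\pot(p_2)$ and $m' \in \Live_i$ follows. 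I find it noteworthy that preservation of $\Live_i$ in this case does not actually use the hypothesis $m \in \Live_i$: it is forced by the enabling condition of $t$ together with the structural fact that the resources of bags at layer $i+1$ sit in $P_i^{\max}$.
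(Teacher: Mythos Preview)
Your proof is correct and follows essentially the same approach as the paper's: reduce to a single firing, case-split on the layer of the fired transition, and in the nontrivial case use the enabling condition to get $m \cdot P_i \ge \pot(p_1)$ and hence $m' \cdot P_i \ge \pot(p_2)$. The paper organises the case-split dually (fixing the transition's layer and varying the index of $\Live$), and it absorbs your case $j=i+2$ into the preliminary remark that membership in $\Live_i$ depends only on tokens in $P_i \cup P_{i+1}^{\neg\max}$; your direct treatment of that case is fine, though the phrase ``cannot lower that minimum'' points in the wrong direction --- what you need (and what actually holds, since places in $P_{i+1}^{\max}$ all have potential $\POT_{i+1}$) is that the minimum cannot \emph{increase}.
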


\begin{proof}
  Pick a transition $t \in T_i$ with $1 \le i \le N$, and assume $m
  \xrightarrow{t} m'$.  Only markings of places in $P_i$ and $P_{i-1}^{\max}$
  are affected by the firing of this transition.
  Moreover, membership in the set $\Live_j$ depends only on
  the number of tokens gathered in $P_j \cup P_{j+1}^{\neg \max}$.
  Hence only constraints $\Live_i$ and $\Live_{i-1}$ may be
  affected.

  Let $p_1$, resp. $p_2$, be the input, resp. output, place of $t$ in
  $P_i^\star$ (as previously, $p_1$ or $p_2$ can be $p_{\ext}$).  We
  analyze the three cases separately:
  \begin{itemize}
  \item Assume $i=N$. Then the property obviously holds for $\Live_N$.
  \item Assume $1 \le i \le N-1$. We notice that $m' \cdot P_i =
    m\cdot P_i$, and that the right-hand side of the condition
    defining $\Live_i$ is unchanged. So $m \in \Live_i$ implies $m'
    \in \Live_i$.
  \item Assume $1 \le i-1 \le N-1$. We have that
    \[
    m' \cdot P_{i-1} = m \cdot P_{i-1} -\pot(p_1) + \pot(p_2)\
    \text{and}\ m \cdot P_{i-1} \ge \pot(p_1).
    \]
    So we deduce that $m' \cdot P_{i-1} \ge \pot(p_2)$. Since either
    $m'(p_2)>0$ or $p_2 = p_{\ext}$, we deduce that $m \in
    \Live_{i-1}$ implies $m' \in \Live_{i-1}$.
  \end{itemize}
\end{proof}

We now recall the notions of $i$-liveness of~\cite{HMN-fi13}.  Let $m$
be a marking. For convenience, when $\bowtie$ is a comparison operator
and $1 \le i \le N$, we define $\calR_{\bowtie i}(m)$ the set of all
markings reachable from $m$ by firing sequences of transitions taken
in $T_{\bowtie i} \stackrel{\text{def}}{=} \bigcup_{j \bowtie i}
T_j$. We say that $m$ is \emph{$i$-live} if for every transition $t
\in T_{\le i}$, there is a marking in $\calR_{\le i}(m)$ which enables
$t$.

\begin{lemma}
  \label{lemma:+1}
  Fix $1 \le i \le N$, and pick $p \in P_i$ such that for every $q \in
  P_{i-1}$, $m(q) \ge b_p(q)$. For every $p' \in P_i$ (with $p' \ne
  p$), there is $m' \in \calR_{=i}(m)$ such that $m'(p)=m(p)-1$,
  $m'(p')=m(p')+1$, $m'(-)=m(-)$.\footnote{This notation is for
    ``$\forall p'' \in P_i \setminus \{p,p'\}$, $m'(p'') = m(p'')$''.}

  Pick $p \in P_N$ such that for every $q \in P_{N-1}$, $m(q) \ge
  b_p(q)$. There is $m' \in \calR_{=N}(m)$ such that $m'(p)=m(p)-1$,
  $m'(-)=m(-)$.

  Assume that for every $q \in P_{N-1}$, $m(q) \ge b_{p_\ext}(q)$. For
  every $p' \in P_N$, there is $m' \in \calR_{=N}(m)$ such that
  $m'(p') = m(p')+1$ and $m'(-) = m(-)$.
\end{lemma}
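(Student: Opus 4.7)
The plan is to construct, in each of the three cases, an explicit firing sequence in $T_i$ (respectively $T_N$). The key tool will be the strong connectivity of the bag-graph component $V_i$ containing $b_p$---which in layer $N$ of the closed net $\overline{\net}$ also contains $b_{p_\ext}$---so that, for each case, I can select a simple path in the bag graph from the appropriate starting bag to the appropriate ending bag.

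First I would fix such a simple path $b_{q_0} \xrightarrow{t_1} b_{q_1} \xrightarrow{t_2} \cdots \xrightarrow{t_k} b_{q_k}$ with pairwise distinct $q_j$'s, and fire the $t_j$'s successively. Using Lemma~\ref{lemma:transition} and a telescoping argument on the successive incidences, I would prove by induction on $j$ that the intermediate marking $m^{(j)}$ satisfies
\[ m^{(j)}(q) = m(q) + b_{q_j}(q) - b_{q_0}(q) \quad\text{for every } q \in P_{i-1}^{\max}. \]
Combined with the hypothesis $m(q) \ge b_{q_0}(q)$, this identity immediately yields the resource part $m^{(j)}(q) \ge b_{q_j}(q)$ of the enabling condition for $t_{j+1}$; the token part $m^{(j)}(q_j) \ge 1$ holds either because $m(p) \ge 1$ implicitly at $j=0$, or because $q_j$ was freshly produced by $t_j$ and, the path being simple, has not been visited before. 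Simplicity also ensures that the intermediate $q_j$'s cancel out on $P_i$ (respectively $P_N$), leaving only the net effect on the endpoints of the path.

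With this mechanism in hand, the three statements would follow by choosing: (i) a simple path from $b_p$ to $b_{p'}$ in the layer-$i$ component of $G_\net$ for the first statement; (ii) a simple path from $b_p$ to $b_{p_\ext}$ in the layer-$N$ component of $G_{\overline{\net}}$ for the second; (iii) a simple path from $b_{p_\ext}$ to $b_{p'}$ in the same component for the third. Each path exists precisely because the relevant component is strongly connected.

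The main subtle point to handle is the role of the virtual place $p_\ext$ when $\net$ is open. Along the chosen path, some transitions may have input or output equal to $b_{p_\ext}^\star$, which corresponds to a place that has been deleted. The enabling of such a transition no longer requires any token in $p_\ext$ (none is present and none is needed), and its firing produces no $p_\ext$-token either. For statements 2 and 3 this is precisely what makes the construction succeed: the last (respectively first) transition of the chosen path has bag $b_{p_\ext}^\star$, so its enabling collapses exactly to the resource condition we assumed, and its effect on $P_N$ is exactly the single-token destruction (respectively creation) that is claimed.
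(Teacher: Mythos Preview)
Your proposal is correct and follows essentially the same approach as the paper: fire transitions along a path in the bag graph from $b_p$ to $b_{p'}$ (or to/from $b_{p_\ext}$), using strong connectivity of the component $V_i$, and verify inductively via the telescoping identity that each step remains enabled. The paper's own proof is much terser---it calls the property ``rather obvious'' and gives only a two-line sketch---so your version is in fact more complete, particularly in its explicit treatment of the virtual place $p_\ext$ for the open-net statements.
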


\begin{proof}
  This property is rather obvious: any transition $t$ out of $p$ is
  enabled, and if $m \xrightarrow{t} m'$, then $m'(q) \ge b_p(q)$ for
  every $q \in P_{i-1}$ as well. We can repeat iteratively, and fire
  any sequence of transitions in $T_i$ leading from $p$ to $p'$.

  The second and third properties can be treated similarly.
\end{proof}

\begin{lemma}
  \label{lemma:enable}
  Fix $1 \le i \le N$. Let $p \in P_i^\star$.  Assume $m \cdot P_{i-1}
  \ge \pot(p)$ and $m \in \bigcap_{j<i} \Live_j$. Then there is $m'
  \in \calR_{\le i-1}(m)$ such that for every $q \in P_{i-1}$, $m'(q)
  \ge b_p(q)$
\end{lemma}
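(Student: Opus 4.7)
The plan is to proceed by induction on $i$. The base case $i = 1$ is immediate, since $P_0 = \emptyset$ makes the conclusion vacuous, so we may take $m' = m$. For the inductive step $i \ge 2$, I would split the argument into two phases: first concentrate all tokens of $P_{i-1}$ into $P_{i-1}^{\max}$, then redistribute them inside $P_{i-1}^{\max}$ so that each $q \in P_{i-1}^{\max}$ carries at least $b_p(q)$ tokens. This suffices, because $b_p(q) = 0$ for every $q \in P_{i-1}^{\neg \max}$.

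For Phase~1, as long as some place of $P_{i-1}^{\neg \max}$ is marked, I would pick $q^\star \in P_{i-1}^\star$ to be a marked place of minimum potential. If $q^\star \in P_{i-1}^{\max}$, no place of $P_{i-1}^{\neg \max}$ can be marked and the phase is over. Otherwise, $\Live_{i-2}$ evaluated at $m$ yields $m \cdot P_{i-2} \geqslant \pot(q^\star)$, so the induction hypothesis applies at $q^\star$ and produces a reachable marking in which the resources $b_{q^\star}$ are already gathered in $P_{i-2}$. Then Lemma~\ref{lemma:+1}, together with the strong connectivity of $V_{i-1}$, transports the token at $q^\star$ to an arbitrary place of $P_{i-1}^{\max}$ using only transitions in $T_{i-1}$, while leaving all other coordinates of $P_{i-1}$ untouched. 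Each such round strictly decreases the number of tokens lying in $P_{i-1}^{\neg \max}$, so the phase terminates.

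For Phase~2, every token of $P_{i-1}$ now sits in $P_{i-1}^{\max}$, so the minimum-potential marked place of $P_{i-1}^\star$ has potential $\POT_{i-1}$ and $\Live_{i-2}$ upgrades to $m \cdot P_{i-2} \geqslant \POT_{i-1}$. I would iterate on the deficit $\Phi(m) = \sum_{q \in P_{i-1}^{\max}} \max\{0, b_p(q) - m(q)\}$: as long as $\Phi(m) > 0$, a pigeonhole argument based on $m \cdot P_{i-1} \geqslant \pot(p)$ delivers a deficit place $q_d$ and a surplus place $q_s$ in $P_{i-1}^{\max}$. Since $m \cdot P_{i-2} \geqslant \POT_{i-1} = \pot(q_s)$, the induction hypothesis gathers the resources $b_{q_s}$, and Lemma~\ref{lemma:+1} then ships the token from $q_s$ to $q_d$, strictly decreasing $\Phi$. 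Lemma~\ref{lemma:minimal-marking} is used throughout both phases to propagate the liveness conditions to the intermediate markings, so the induction hypothesis remains applicable after every round.

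The main obstacle I anticipate is the tension between what the inductive hypothesis demands and what $\Live_{i-2}$ supplies: applying the hypothesis to a place $q'$ demands $m \cdot P_{i-2} \geqslant \pot(q')$, whereas $\Live_{i-2}$ only lower-bounds $m \cdot P_{i-2}$ by the potential of the \emph{smallest}-potential marked place of $P_{i-1}^\star$, which may be strictly smaller than $\pot(q')$. The two-phase split is designed precisely to reconcile this: Phase~1 only ever fires out of the minimum-potential marked place (where the $\Live_{i-2}$ bound is tight) and, in doing so, drains $P_{i-1}^{\neg \max}$; this in turn pushes the $\Live_{i-2}$ bound up to $\POT_{i-1}$ by the time Phase~2 starts --- exactly the bound needed to handle any place of $P_{i-1}^{\max}$.
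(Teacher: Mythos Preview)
Your proof is correct and shares the same inductive scaffolding as the paper's (induction on $i$, Lemma~\ref{lemma:+1} to move tokens, Lemma~\ref{lemma:minimal-marking} to propagate liveness), but organises the induction step differently. The paper runs a single loop governed by the deficit $\nu_p(m) = \sum_q \max(0, b_p(q) - m(q))$: at each step an auxiliary lemma shows there exists a surplus place $q$ (with $m(q) > b_p(q)$) that is simultaneously movable (with $m \cdot P_{i-2} \ge \pot(q)$), by a two-case split on whether $m \cdot P_{i-2} \ge \POT_{i-1}$ --- if not, $\Live_{i-2}$ produces a marked $q \in P_{i-1}^{\neg\max}$, for which $b_p(q) = 0$ automatically makes it a surplus place. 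The token then goes straight to a deficit place, dropping $\nu_p$ by one. Your two-phase version instead first drains $P_{i-1}^{\neg\max}$ entirely (always firing out of the minimum-potential marked place, which $\Live_{i-2}$ guarantees is movable), and only afterwards performs the surplus-to-deficit redistribution inside $P_{i-1}^{\max}$, where $\Live_{i-2}$ has now been upgraded to $m \cdot P_{i-2} \ge \POT_{i-1}$. The paper's single-loop argument is more compact and moves tokens directly to where they are needed; your decomposition trades that concision for two separate termination measures, but in return it makes the obstacle you identify (the mismatch between what $\Live_{i-2}$ supplies and what the inductive call demands) and its resolution completely explicit.
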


\begin{proof}
  We do the proof by induction on $i$. The case $i=1$ is obvious since
  layer $1$ of $\calN$ is a state machine.  Assume that $i>1$ and the
  result holds for $i-1$.

  We define $\nu_p(m) = \sum_{q \in P_{i-1}}
  \max(0,(b_p(q)-m(q)))$. This measures how many tokens are missing
  for enabling transitions of $T_i$ out of $p$.  In particular,
  $\nu_p(m) \ge 0$, and transitions out of $p$ are enabled by $m$ if
  and only if $\nu_p(m)=0$.  So, if $\nu_p(m)=0$, there is nothing to
  be done, and we can choose $m' = m$. We therefore assume that
  $\nu_p(m)>0$, and we show the following intermediary result:
  
  \begin{lemma}
    There is some $q \in P_{i-1}$ such that $m(q)>b_p(q)$ and
    $m \cdot P_{i-2} \ge \pot(q)$.
  \end{lemma}
  \begin{proof}
    Since $m\cdot P_{i-1} \ge \pot(p) = \sum_{q \in P_{i-1}} b_p(q)$
    and $\nu_p(m)>0$, there exists some place $q \in P_{i-1}$ such
    that $m(q)-b_p(q)>0$ (there are tokens which are not ``reserved''
    for firing a transition of $T_i$ out of $p$).

    If $m\cdot P_{i-2} \ge \POT_{i-1}$, then any choice of $q$ above
    will satisfy the expected property.

    Assume that $m \cdot P_{i-2} < \POT_{i-1}$. Since $m \in
    \Live_{i-2}$, this means that there is some $q \in P_{i-1}$ such
    that $m(q)>0$ and $\pot(q) \le m \cdot P_{i-2} <\POT_{i-1}$: since
    $\calN$ is a \pitrois-net, $q$ cannot be an interface place! In
    particular, $b_p(q)=0$. This place $q$ satisfies the expected
    property.
  \end{proof}

  We apply the induction hypothesis: there is $m_1 \in \calR_{\le
    i-2}(m)$ such that $m_1$ enables the transitions out of $q$ (that
  is, for every $r \in P_{i-2}$, $m_1(r) \ge b_q(r)$).  Fix now some
  $q' \in P_{i-1}$ such that $m(q')<b_p(q')$. Applying
  Lemma~\ref{lemma:+1}, there exists $m_2 \in \calR_{=i-1}(m_1)
  \subseteq \calR_{\le i-1}(m)$ such that $m_2(q)=m_1(q)-1=m(q)-1$,
  $m_2(q')=m_1(q')+1=m(q')+1$, $m_2(-) = m_1(-) = m(-)$.

  We get in particular that $\nu_p(m_2) = \nu_p(m)-1$. Furthermore,
  $m_2 \cdot P_{i-1} = m \cdot P_{i-1} \ge \pot(p)$, and thanks to
  Lemma~\ref{lemma:minimal-marking}, for every $j<i$, $m_2 \in
  \Live_j$.  We can therefore iterate the process and build a (finite)
  sequence of markings reachable from $m$, whose $\nu_p$-value
  decreases until reaching $0$. The last marking is the expected one.
\end{proof}

Finally, we show the following result, which will directly imply
Theorem~\ref{theo:liveness}.
\begin{lemma}
  \label{lemma:i-liveness}
  Fix $1 \le i \le N$. A marking $m$ is $i$-live if and only if $m \in
  \bigcap_{j=1}^{i-1} \Live_j$ and $m \cdot P_i > 0$ unless $i=N$ and
  $\calN$ is open.
\end{lemma}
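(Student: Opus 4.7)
I plan to proceed by induction on $i$. The base case $i = 1$ is handled directly: because $P_0 = \emptyset$, every bag in $V_1$ reduces to a single place and layer $1$ is thus a state machine, so strong connectivity of the first bag-graph component (Definition~\ref{def:pideuxnet}) lets one route any existing $P_1$-token to every input place, while $T_1$-firings preserve the $P_1$-count except in the degenerate case $N = 1$ with $\calN$ open, where $p_\ext$ supplies fresh tokens. This matches the claimed characterisation.

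For the inductive step at $i \geq 2$, the $(\Rightarrow)$ direction exploits that $i$-liveness implies $(i-1)$-liveness, hence the IH gives $m \in \bigcap_{j=1}^{i-2} \Live_j$ and $m \cdot P_{i-1} > 0$. To obtain the additional conclusions, pick any $t \in T_i$ with input $p$, choose some $m' \in \calR_{\le i}(m)$ enabling $t$, and use Lemma~\ref{lemma:transition} to argue that $m \cdot P_i$ is invariant along firings in $T_{\le i}$ outside the $i = N$ open exception, yielding $m \cdot P_i \ge m'(p) \ge 1$. For $m \in \Live_{i-1}$, let $p^*$ realise the minimum $\mu$ defining $\Live_{i-1}$: apply $i$-liveness to a $t \in T_i$ with input $p^*$, obtain $m'$ with $m' \cdot P_{i-1}^{\max} \ge \pot(p^*) = \mu$ and $m'(p^*) \ge 1$, and transport this bound back to $m$ using the $P$-flow invariant $v^{(i-1)}$ of Proposition~\ref{prop:invariants}.

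In the $(\Leftarrow)$ direction, the goal is to enable every transition in $T_{\le i}$. I would first reach a marking $m' \in \calR_{\le i}(m)$ with $m' \cdot P_{i-1} > 0$ and $m' \in \bigcap_{j=1}^{i-1} \Live_j$, after which the IH yields $(i-1)$-liveness of $m'$ and handles all transitions in $T_{\le i-1}$. To reach such an $m'$: if already $m \cdot P_{i-1} > 0$, take $m' = m$; otherwise, $\Live_{i-1}$ forces some populated place $p^* \in P_i^{\neg\max}$ of potential $0$, and I would fire $T_i$-transitions along a path in the bag graph of $V_i$ (strongly connected by Definition~\ref{def:pideuxnet}) from $p^*$ into $P_i^{\max}$, each step needing no resources from below, the crossing step producing positive-potential tokens in $P_{i-1}^{\max}$; Lemma~\ref{lemma:minimal-marking} guarantees preservation of the lower $\Live_j$'s. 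Finally, to enable a $t \in T_i$ with input $p$, I would route a $P_i$-token from any populated place to $p$ via Lemma~\ref{lemma:+1} along a path in $V_i$, gathering the resources needed at each step via Lemma~\ref{lemma:enable}, the latter being applicable precisely because $\bigcap_{j=1}^{i-1} \Live_j$ holds throughout.

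The hard part will be the bootstrapping in $(\Leftarrow)$ when $m \cdot P_{i-1} = 0$: the IH does not apply to $m$ directly, and one must leverage the layered $\mathsf{\Pi}^3$-structure to create tokens in $P_{i-1}^{\max}$ from a populated $P_i^{\neg\max}$ place without any prior resources. Verifying that such a seed firing sequence exists, that it indeed produces the claimed tokens (via the precise token-balance of Lemma~\ref{lemma:transition}), and that the various $\Live_j$ invariants and enabling preconditions survive the subsequent routing and resource-gathering, is where the technical bulk of the proof resides.
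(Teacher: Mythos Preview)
Your approach is a self-contained induction on $i$, whereas the paper's proof is almost entirely a citation: for closed nets and for all $i<N$ it simply invokes the corresponding result of~\cite{HMN-fi13}, and for the remaining case ($i=N$, $\calN$ open) it reduces to the closed case by reinstating $p_\ext$ with one token, observing that the resulting closed marked net has fewer behaviours and satisfies the closed-net hypotheses, then transferring liveness back via weak reversibility. So you are essentially re-proving the cited result from scratch, which is legitimate but much more work than what the paper does.

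There is, however, a genuine gap in your $(\Rightarrow)$ argument for $m\in\Live_{i-1}$. You obtain $m'\in\calR_{\le i}(m)$ with $m'\cdot P_{i-1}\ge\pot(p^*)$ and then propose to ``transport this bound back to $m$ using the $P$-flow invariant $v^{(i-1)}$''. But $v^{(i-1)}$ only gives
\[
m\cdot P_{i-1} \;=\; m'\cdot P_{i-1} \;+\; \big(m'\cdot\cin(P_i)-m\cdot\cin(P_i)\big),
\]
and the correction term has no reason to be nonnegative: the path $m\to^* m'$ may well have moved $P_i$-tokens from low-potential to high-potential places, making $m'\cdot\cin(P_i)<m\cdot\cin(P_i)$. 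The argument that actually works here is the contrapositive: if $m\notin\Live_{i-1}$ then for every marked $p\in P_i^\star$ one has $m\cdot P_{i-1}<\pot(p)$; since transitions in $T_{\le i-1}$ preserve both $m\cdot P_{i-1}$ and the marking on $P_i$, no $T_i$-transition can ever become enabled along a $T_{\le i}$-path from $m$, contradicting $i$-liveness. This is precisely the converse direction the paper uses (via Lemma~\ref{lemma:enable}) in the one case it does treat explicitly.

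A smaller point on your $(\Leftarrow)$ bootstrapping: the claim that along the bag-graph path from a potential-$0$ place $p^*$ ``each step need[s] no resources from below'' is not literally true after the first step. What makes the walk go through is that each firing produces \emph{exactly} the $P_{i-1}^{\max}$-tokens required by the next one (the running resource vector is always $b_{p_k}$ restricted to $P_{i-1}^{\max}$). This is easy to fix, but worth stating correctly.
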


\begin{proof}
  This precise result was proven in~\cite{HMN-fi13} in the case of
  closed nets. So, the result holds directly if $\calN$ is closed or
  if $i<N$. So we assume $i=N$ and the net is open.

  We define the marked net $(\overline{\calN},\overline{m})$ from the
  marked net $(\calN,m$) by adding the place $p_\ext$, and by adding a
  token into the place $p_\ext$.  The marked net
  $(\overline{\calN},\overline{m})$ has fewer behaviours than the
  original net $(\calN,m)$: every sequence of transitions that can be
  fired from $\overline{m}$ in $\overline{\calN}$ can also be fired
  from $m$ in $\calN$.  .  Moreover, if $(\calN,m)$ respects the
  conditions of Lemma~\ref{lemma:i-liveness} for being $N$-live (open
  net), then $(\overline{\calN},\overline{m})$ also satisfies the
  conditions the conditions of Lemma~\ref{lemma:i-liveness} for being
  $N$-live (closed net). Applying the result of~\cite{HMN-fi13} for
  closed nets, we get that $(\overline{\calN},\overline{m})$ is
  live. Every transition can eventually be fired in $\overline{\calN}$
  from $\overline{m}$, hence it is also the case in $\calN$ from
  $m$. By weak-reversibility of the net $\calN$, this is also the case
  from every marking $m'$ reachable from $m$. Hence, $(\calN,m)$ is
  $N$-live too.

  Conversely, if $(\calN,m)$ does not respect these conditions, it
  means that $m \notin \Live_j$ for some $j \le N-1$.  By
  weak-reversibility of the net, we know that, for all $m' \in
  \calR_\calN(m)$, we have $m \in \calR_\calN(m')$. Using
  Lemma~\ref{lemma:enable}, it follows that $m' \notin \Live_j$.
  In particular, no transition in $T_{j+1}$ is enabled by $m'$, which
  proves that $m$ is not $N$-live.

\end{proof}

\subsection{Reachable markings}

\subsubsection{Linear invariants}

\propinvariants*

\label{app:invariants}

\begin{proof}
  Pick $t \in T_i$ for some $1 \le i \le N$. We write $p_1$
  (resp. $p_2$) for the input (resp. output) place of $t$ in $P_i$.

  We consider one of the vectors $v_j$, with $1 \le j \le N$ (or
  $N-1$). Since the support of $v_j$ is included in $P_j \cup
  P_{j+1}$, if $j<i-2$ or $j>i$, then obviously, $v_j \cdot W(t)=0$.

  We now consider $v_{i-2}$ (assuming $1 \le i-2$). We compute:
  \[
  v_{i-2} \cdot W(t)  = \sum_{q \in P_{i-1}} \cin(q)\ 
  (b_{p_2}(q)-b_{p_1}(q))
  \]
  Now, if $b_{p_2}(q)>0$ or $b_{p_1}(q)>0$, then this means that $q$
  is an interface place; hence, $q$ has maximal potential in slice
  $i-1$ ($\calN$ is a $\Pi_3$-net); this implies that $\pot(q) =
  \POT_{i-1}$, hence $\cin(q)=0$.  We conclude that $v_{i-2} \cdot
  W(t) = 0$.

  We now consider $v_{i-1}$ (assuming $1 \le i-1$). We compute:
  \begin{multline*}
    v_{i-1} \cdot W(t) = -\cin(p_1)+\cin(p_2) +\sum_{q \in P_{i-1}}
    (b_{p_2}(q) - b_{p_1}(q)) \\
    = -\cin(p_1)+\cin(p_2) + \pot(p_2)-\pot(p_1) = 0
  \end{multline*}

  We consider $v_i$ when $i \le N-1$. We immediately get $v_i \cdot
  W(t) = -1+1=0$.

  We finally consider $v_i$ when $i=N$ and the net is closed. We
  immediately get as well $v_N \cdot W(t) = -1+1 =0$.
\end{proof}

\subsubsection{Characterization of the reachability set of live nets}
\label{app:reach}

Theorem~\ref{theo:reach} will be a consequence of the following lemma:

\begin{lemma}
  \label{lem:CNS-reach}
  \label{lem:reach}
  Let $i \leqslant N$, and suppose that the initial marking $m_0$ is
  $i$-live.  Then:
  \[
  \calR_{\leqslant i}(m_0) = \bigcap_{j=1}^i \Inv_j(m_0) ~\cap
  ~\bigcap_{j=1}^i \Live_j ~\cap ~ \{m \mid \forall p \in
  \bigcup_{j=i+1}^N P_j,\ m(p) = m_0(p)\}
  \]
\end{lemma}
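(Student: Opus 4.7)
The plan is to establish Lemma~\ref{lem:reach} by induction on $i$. The base case $i=0$ is immediate, since $\calR_{\le 0}(m_0) = \{m_0\}$ and the right-hand side reduces to $\{m \mid m(p) = m_0(p) \text{ for all } p \in P\} = \{m_0\}$.

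For the inclusion $\subseteq$ in the inductive step, I would argue that every firing preserves the values $m \cdot v^{(j)}$ by Proposition~\ref{propinvariants}, preserves the $\Live_j$-conditions by Lemma~\ref{lemma:minimal-marking}, and that firings restricted to $T_{\le i}$ do not alter the marking of places in layers beyond $i$ by Lemma~\ref{lemma:transition}. The hypothesis that $m_0$ is $i$-live supplies $m_0 \in \bigcap_{j<i} \Live_j$ through Lemma~\ref{lemma:i-liveness}, and membership in $\Live_i$ holds as well because $m_0 \cdot P_i$ and the markings of the higher layers remain untouched by $T_{\le i}$-firings, so the defining inequality of $\Live_i$ is preserved once it holds initially.

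The nontrivial direction is $\supseteq$. I fix, for each $j \le i$, a distinguished place $p_j^\star \in P_j^{\max}$, and (in the closed case, or when $i < N$) let $m^\star$ be the canonical marking obtained by concentrating on each $p_j^\star$ all tokens that the invariant $\Inv_j(m_0)$ prescribes on layer $j$, while leaving layers $> i$ identical to $m_0$. My goal is to show, for an arbitrary $m$ satisfying the right-hand side, that $m \to^*_{T_{\le i}} m^\star$. Applied both to $m$ and to $m_0$, and combined with weak reversibility (every connected component of the bag graph restricted to $V_1,\ldots,V_i$ is still strongly connected, so every firing of $T_{\le i}$ can be undone), this yields $m_0 \to^*_{T_{\le i}} m^\star \to^*_{T_{\le i}} m$, hence $m \in \calR_{\le i}(m_0)$. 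To realise the reduction $m \to^*_{T_{\le i}} m^\star$ I proceed top-down: first I relocate every layer-$i$ token to $p_i^\star$ one at a time using chains of $T_i$-transitions through the strongly connected $i$-th bag-graph component (Lemma~\ref{lemma:+1}), with the required layer-$(i-1)$ resources marshalled via Lemma~\ref{lemma:enable} and the $\Live_{i-1}$ assumption; the resulting intermediate marking still satisfies every $\Inv_j$ and $\Live_j$ for $j \le i-1$ and is $(i-1)$-live, so the inductive hypothesis at level $i-1$ completes the relocation into $m^\star$.

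The hard part will be controlling the resource balance during the top-down pass: moving a layer-$i$ token from a low-potential place to $p_i^\star$ may temporarily drain $P_{i-1}^{\max}$, and one must check that the precondition of Lemma~\ref{lemma:enable} stays satisfied step by step. The key should be combining the $\Live_{i-1}$-condition with the preserved invariant $\Inv_{i-1}(m_0)$, which together guarantee enough layer-$(i-1)$ tokens at every intermediate configuration. An extra subtlety appears when $i = N$ in the open case: since $\Inv_N$ is trivial and layer $N$ is not a priori bounded, no single canonical target $m^\star$ can work uniformly for all source markings. Following the hint in the main text, a common target tailored to each pair $(m_0, m)$ must then be chosen ad hoc, but the same relocation scheme still applies.
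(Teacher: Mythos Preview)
Your plan is essentially the paper's proof. Both arguments fix a canonical target marking (your $m^\star$, the paper's $\pi_i(m)$), relocate layer-$i$ tokens one by one into the chosen place of $P_i^{\max}$ via Lemma~\ref{lemma:enable} followed by Lemma~\ref{lemma:+1}, and then invoke the level-$(i{-}1)$ statement together with weak reversibility. Two points deserve sharpening. First, when you write ``low-potential place'', make the choice explicit: at each step you must pick a marked $p\in P_i$ with \emph{minimal} potential among the currently marked places, because $\Live_{i-1}$ only guarantees $m\cdot P_{i-1}\geqslant\min\{\pot(p):m(p)>0\}$, which is exactly the hypothesis of Lemma~\ref{lemma:enable} for that particular $p$ and for no other in general; the invariant $\Inv_{i-1}(m_0)$ plays no role at this step. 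Second, for $i=N$ in the open case the paper does not build an ad~hoc joint target but instead reduces to the closed case by reinstating $p_\ext$ with $\overline{m_0}(p_\ext)=1+m\cdot P_N$ and $\overline{m}(p_\ext)=1+m_0\cdot P_N$, checking that $\overline m\in\calS^{\overline\calN}_{\le N}(\overline{m_0})$, and then using that $\overline\calN$ has fewer behaviours than $\calN$; this is cleaner than the pairwise-target route you sketch.
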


\begin{proof} 
  We write $\calS_{\leq i}(m_0)$ for the right hand-side of the
  equality in the statement.  The inclusion $\calR_{\leq i}(m_0)
  \subseteq \calS_{\leq i}(m_0)$ is a consequence of
  Corollary~\ref{coro:invariants} and Theorem~\ref{theo:liveness}.
  
    Conversely, let $p_1,\ldots,p_N$ be places in $P_1^{\max}, \ldots,
    P_N^{\max}$.  For every marking $m$, we denote by $\pi_i(m)$ the
    marking $m'$ such that: (i) $m'(q) = q$ for all places $q \in P_j$
    for some $j > i$; (ii) $m'(p_i) = m \cdot P_i$; (iii) $m'(p_j) =
    \bfC_j^m$ for all $j < i$; (iv) $m'(q) = 0$ for all places $q \in
    P_j \setminus \{p_j\}$ for some $j \leqslant i$.  Let us also
    assume that $\calN$ is closed. 

    We prove by induction on $i$ and $\ell$ the following property,
    denoted by $\bfP(i,\ell)$: if $m \in \calS_{\leq i}(m_0)$ and if
    $m \cdot P_i = m(p_i) + \ell$, then $\pi_i(m) \in \calR_{\leqslant
      i}(m)$.  We also denote by $\bfP(i)$ the conjunction of the
    properties $\bfP(i,\ell)$ for all $\ell \geqslant 0$.
  
    Since the net is weakly reversible, observe that $\calR_{\leqslant
      i}(m') = \calR_{\leqslant i}(m)$ for all markings $m$ and $m'$
    such that $m' \in \calR_{\leqslant i}(m)$.  In addition, observe
    that if $m_0$ is $i$-live and if $m \in \calS_{\leqslant i}(m_0)$,
    then $m$ is $i$-live, $\calS_{\leqslant i}(m) = \calS_{\leqslant
      i}(m_0)$ and $\pi_i(m) = \pi_i(m_0)$.

    Since $\bfP(0)$ is immediate, we assume now that $i \geqslant 1$
    and that $\bfP(i-1)$ holds.  Then, if $m \cdot P_i = m(p_i)$, observe
    that $\pi_i(m) = \pi_{i-1}(m)$, hence $\bfP(i,0)$ follows from
    $\bfP(i-1)$. Therefore, we assume now that $\ell \geqslant 1$ and
    that both $\bfP(i,\ell-1)$ and $\bfP(i-1)$ hold.
  
    In that case, let $p$ be a place in $P_i$ such that $m(p)
    \geqslant 1$ and such that $\pot(p)$ is as small as
    possible. Since $\pi_{i-1}(m) \in \calR_{\le i-1}(m) \subseteq
    \calS_{\le i-1}(m) = \calS_{\le i-1}(m_0)$, we have that
    $\pi_{i-1}(m) \models \Live_{i-1}$; hence $\pi_{i-1}(m) \cdot
    P_{i-1} \ge \pot(p)$ and we can apply Lemma~\ref{lemma:enable} and
    Lemma~\ref{lemma:+1}: there is a marking $m' \in \calR_{\le
      i-1}(\pi_{i-1}(m))$ such that $m'(p_i) =
    \pi_{i-1}(m)(p_i)+1=m(p_i)+1$, $m'(p) = \pi_{i-1}(m)(p)-1=m(p)-1$,
    and $m'(-)=m(-)$ otherwise on $P_i$. We can apply $\bfP(i,\ell-1)$
    to $m'$: we get that $\pi_i(m') \in \calR_{\le i}(m')$. Since $m'
    \in \calR_{\le i}(\pi_{i-1}(m))$, we get by $\bfP(i-1)$ that $m'
    \in \calR_{\le i}(m)$, hence $\calR_{\le i}(m) = \calR_{\le
      i}(m')$. Finally notice that $\pi_i(m)=\pi_i(m')$, hence we
    conclude that $\pi_i(m) \in \calR_{\le i}(m)$, which concludes the
    proof of $\bfP(i,\ell)$.

    \medskip We assume $\calN$ is an open net. Consider some marking
    $m \in \calS^{\calN}_{\leq i}(m_0)$ (the superscript of the
    notation is for distinguishing the net which is considered).  We
    define the closed net $\overline{\calN}$ by adding the place
    $p_\ext$.  We also extend both markings $m_0$ and $m$ to markings
    $\overline{m_0}$ and $\overline{m}$, where $\overline{m_0}(p_\ext)
    = 1 + m \cdot P_N$ and $\overline{m}(p_\ext) = 1 + m_0(P_N)$.  By
    construction, if $i<N-1$, we have $\calS^{\overline{\calN}}_{\leq
      i}(\overline{m_0}) = \calS^{\calN}_{\leq i}(m_0)$ since the
    various constraints do not involve places in layer $N$. Hence
    $\overline{m} \in \calS^{\overline{\calN}}_{\leq
      i}(\overline{m_0}) \subseteq \calR^{\overline{\calN}}_{\leq
      i}(\overline{m_0})$ in the closed net $\overline{\calN}$. The
    closed net has fewer behaviours than the original net $\calN$,
    hence $m \in \calR^\calN_{\leqslant i}(m_0)$, which implies
    $\calS_{\leq i}(m_0) \subseteq \calR_{\leq i}(m_0)$.

    Assume now that $i=N-1$. Since $\overline{m_0}(p_\ext) \geqslant 1$,
    the marking $m_0$ belongs to $\Live_{N-1}(\calN)$ if and only if
    $\overline{m_0}$ belongs to $\Live_{N-1}(\overline{\calN})$.
    Similarly, $m$ belongs to $\Live_{N-1}(\calN)$ if and only if
    $\overline{m}$ belongs to $\Live_{N-1}(\overline{\calN})$.
    Yet, the values of $\cin^\calN$ and
    $\cin^{\overline{\calN}}$ on layer $i+1=N$ might differ, since we have
    $\cin^{\overline{\calN}}(p) = \POT^{\overline{\calN}}_N - \pot(p) =
    \cin^\calN(p) + \POT^{\overline{\calN}}_N -\pot(p_\ext)$ for all $p \in P_N$.
    Hence we first show that $\overline{m} \in \calS^{\overline{\calN}}_{\leq
    i}(\overline{m_0})$:
    \begin{multline*}
      \overline{m} \cdot (P_{N-1} + \cin^{\overline{\calN}}(P_N^\star)) \\
        ~~~~ = m \cdot P_{N-1} + m \cdot \cin^\calN(P_N) +
        (\POT^{\overline{\calN}}_N -\pot(p_\ext)) m \cdot P_N + m(p_\ext) \cin^{\overline{\calN}}(p_\ext) \\
        ~~~~ = \bfC_{N-1}^{m_0}(\calN) + (\POT^{\overline{\calN}}_N -\pot(p_\ext)) (m \cdot P_N - m(p_\ext)) \\
        ~~~~ = \bfC_{N-1}^{m_0}(\calN) - (\POT^{\overline{\calN}}_N -\pot(p_\ext)).\hfill
    \end{multline*}
    Similarly, $\overline{m_0} \cdot (P_{N-1} + \cin^{\overline{\calN}}(P_N^\star))
      = \bfC_{N-1}^{m_0}(\calN) - (\POT^{\overline{\calN}}_N -\pot(p_\ext))$,
    which implies $\overline{m} \in \calS^{\overline{\calN}}_{\le
      i}(\overline{m_0})$. We then apply the same reasoning as in the
    previous case.

    Finally, assume that $i=N$. Obviously,
    $\overline{m} \cdot (P_N^\star) = m \cdot P_N + m_0 \cdot P_N +1 = \overline{m_0} \cdot (P_N^\star)$,
    and the same reasoning applies as well.

    This proves the lemma in the case of open nets too.
\end{proof}

\subsubsection{The boundedness problem of live nets}

\coroboundness*

\label{app:boundedness}

\begin{proof}
  Checking that the marking $m_0$ is live is feasible in polynomial
  time, hence it remains to decide, given a live marking $m_0$,
  whether the net is bounded.  If the net is closed, then
  Theorem~\ref{th:pitrois} proves that the net is bounded. Hence, we
  focus only on open nets: we prove that, in that case, the net is
  bounded if and only if $\cin(q) > 0$ for all $q \in P_N$.

  Indeed, if $\cin(q) > 0$ for all $q \in P_N$, then we consider the
  ``sum'' invariant: $\sum_{p \notin P_N} m(p) + \sum_{p \notin P_1}
  \cin(p) m(p) = \sum_{i=1}^{N-1} \bfC_i^{m_0}$.  Since $\cin$ is
  non-negative on every place outside $P_N$ and positive on $P_N$, it
  follows that $m(p) \leqslant \sum_{i=1}^{N-1} \bfC_i^{m_0}$ for all
  places.

  Conversely, if $\cin(q) \leqslant 0$ for some place $q \in P_N$,
  consider some place $p \in P_{N-1}^{\max}$. Theorem~\ref{theo:reach}
  proves that, for all $n \geqslant 0$, the marking $m_0 + n(q +
  |\cin(q)| p)$ is reachable from $m_0$, whence the net is unbounded.
\end{proof}

  



\subsubsection{The boundedness problem of non-live nets}

\coNPhardness*

\label{app:coNPhardness}

\begin{proof}
  We reduce the independent-set problem to the non-boundedness problem
  of a $\Pi^3$-net. The independent-set problem is defined as follows:
  given an undirected graph $G=(V,E)$ and an integer $k$, does there
  exist an independent set of size $k$ in $G$. A set $X \subseteq V$
  of vertices is \emph{independent} whenever for every $x,y \in X$,
  $\{x,y\} \notin E$. If $v \in V$, we denote $E(v)$ the set of
  vertices adjacent to $v$, that is $\{v' \in V \mid \{v,v'\} \in
  E\}$.

  Fix an instance of this game $G = (V,E)$ and an integer $k$. Write
  $n = |V|$. We build the following 4-open \pitrois-net $\net$:
  \begin{itemize}
  \item $P_1 = \{p_1\}$, $P_2= E \cup \{p_2\}$, $P_3 = V \cup
    \{p_3\}$, $P_4 =\{p_4\}$, where $p_1,p_2,p_3,p_4$ are fresh
    symbols;
  \item We define the transitions of $\net$ through the bag graph
    (since this is more readable):
    \[
    \begin{array}{ll}
      \text{Layer}\ 1: & \text{no transition} \\
      \text{Layer}\ 2: & \text{for every} \ e \in E,\ e+p_1 \leftrightarrow p_2+p_1 \\
      \text{Layer}\ 3: & \text{for every}\  v \in V,\ v+E(v) \leftrightarrow p_3 + n\, p_2 \\
      \text{Layer}\ 4: & k \, p_3 \leftrightarrow p_4 + k\, p_3
    \end{array}
    \]
    where $b \leftrightarrow b'$ ($b$ and $b'$ being bags) is a
    shorthand for $b \to b'$ and $b' \to b$.
%
%
%
  \end{itemize}
  This is easy to check that this is a \pitrois-net (the place $p_3$
  is of maximal potential in layer $3$, whereas all places of layer
  $2$ have maximal potential ($1$ in this case)).

  We let $m_0 = E+V$, and we will show that $(\net,m_0)$ is bounded if
  and only if $V$ contains an independent subset of size $k$.

  Assume that $V$ contains an independent subset $X$ of size
  $k$. Since $X$ is an independent set, for every $x, y \in X$, $E(x)
  \cap E(y) = \emptyset$. We can then apply the transitions $x+E(x)
  \to p_3 + n\, p_2$ for every $x \in X$. The resulting marking is
  $k\, p_3 + nk\, p_2$. The transition $k \, p_3 \to p_4 + k\, p_3$ is
  therefore enabled, and can be taken an arbitrary number of times,
  producing markings with an arbitrary number of tokens in place
  $p_4$. We deduce that $\net$ is not bounded.

  We assume that $\net$ is not bounded. The following invariants are
  valid on this net:
  \[
  \left\{\begin{array}{l}
    \Inv_1(m_0) = \{ m \mid m(p_1) =m_0(p_1) = 0 \} \\
    \Inv_2(m_0) = \{m \mid  \sum_{e \in E} m(e) + m(p_2) + \sum_{v \in
      V}(n-|E(v)|)\ m(v) =  \bfC_2^{m_0}\} \\
    \Inv_3(m_0) = \{m \mid \sum_{v \in V} m(v) + m(p_3) = \sum_{v \in V}
    m_0(v) = |V|\}
  \end{array}\right.
  \]
  In particular, $(\net,m_0)$ is unbounded if and only if the place
  $p_4$ can become unbounded. This is equivalent to enabling the
  transition $k\, p_3 \to p_4 + k\, p_3$, that is to have $k$ tokens
  or more in place $p_3$. This is only possible if we can find a
  subset $X \subseteq V$ of cardinality at least $k$ such that all for
  all $x,y \in X$, $E(x) \cap E(y) = \emptyset$. That is, $X$ needs to
  be an independent set of cardinality at least $k$.

  This concludes the coNP-hardness proof.

  Note that in the above proof, the unboundedness of the net is
  equivalent to the reachability of a marking where $p_4$ has at least
  one token. Hence the NP-hardness of the reachability and of the
  coverability problems follow.
\end{proof}

\section{Proofs of Section~\ref{sec:quantitative}}
\label{app:quantitative}

We provide here full proofs of the results mentioned in Section~\ref{sec:quantitative}
about quantitative results in stochastic open \pitrois-nets $(\calN,\lambda,m_0)$ with a live initial marking.
Recall that the set $P$ of places was partitioned between one set $X = \bigcup_{i \leqslant N-2} P_i \cup P_{N-1}^{\neg \max}$
and one set $Y = P_{N-1}^{\max} \cup P_N$.

\subsection{Ergodicity analysis}
\label{app:ergodic}

\begin{lemma}
\label{lem:bounded}
Let $\mathcal{F}$ be the family that contains the vectors
$p$ (for $p \in P_N$ and $\cin(p) = 0$), $\cin(p) q + |\cin(q)| p $ (for $p, q \in P_N$ and $\cin(p) > 0 > \cin(q)$), and
$q + |\cin(q)| p $ (for $p \in P_{N-1}^{\max}$, $q \in P_N$ and $0 > \cin(q)$), and let $\calL$ be the lattice generated by $\calF$.
In addition, let $m_0$ be a live marking, and let $\Inv(m_0)$ denote the set $\bigcap_{i=1}^{N-1} \Inv_i(m_0)$.

There exists a polynomially large integer $\bfG(m_0)$ such that
\[\{m_0\} + \calL \subseteq \calR(m_0) \subseteq \Inv(m_0) \subseteq \calG + \calL,\]
where $\calG$ is the set of markings $m$ such that $\|m\|_\infty \leqslant \bfG(m_0)$.
\end{lemma}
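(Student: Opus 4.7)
The lemma bundles three inclusions. The middle one, $\calR(m_0) \subseteq \Inv(m_0)$, is immediate from Corollary~\ref{coro:invariants}; I address the two others separately.

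\textbf{The inclusion $\{m_0\} + \calL \subseteq \calR(m_0)$.} My plan is to apply Theorem~\ref{theo:reach} to $m_0 + \ell$ for arbitrary $\ell \in \calL$, which reduces to showing $m_0 + \ell \in \Inv_i(m_0) \cap \Live_i$ for each $i$. By linearity it suffices to verify that every generator $f \in \calF$ satisfies $v^{(i)} \cdot f = 0$ for $1 \le i \le N-1$. Each generator has support inside $Y = P_{N-1}^{\max} \cup P_N$, so only $v^{(N-1)}$ and $v^{(N-2)}$ can interact with it; the latter vanishes because the coefficient of every $p \in P_{N-1}^{\max}$ in $v^{(N-2)}$ equals $\cin(p) = 0$. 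For $v^{(N-1)}$ the checks are direct: a type-1 generator $f = p$ gives $v^{(N-1)} \cdot f = \cin(p) = 0$; a type-2 generator $\cin(p) q + |\cin(q)| p$ gives $\cin(p)\cin(q) + |\cin(q)|\cin(p) = \cin(p)(\cin(q)+|\cin(q)|) = 0$; a type-3 generator $q + |\cin(q)| p$ with $p \in P_{N-1}^{\max}$ gives $\cin(q) + |\cin(q)| \cdot 1 = 0$. For liveness I use that $\ell$ has non-negative entries, hence $m_0 + \ell \ge m_0$ componentwise: adding tokens increases the left-hand side of each $\Live_i$ and can only shrink the minimum on its right-hand side, so membership in each $\Live_i$ is preserved.

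\textbf{The inclusion $\Inv(m_0) \subseteq \calG + \calL$.} Let $m \in \Inv(m_0)$. First, every place $p \in X$ is polynomially bounded in $m$: for $p \in P_i$ with $i \le N-2$ the identity $v^{(i)} \cdot m = \bfC_i^{m_0}$ together with the non-negativity of the other coefficients of $v^{(i)}$ yields $m(p) \le \bfC_i^{m_0}$, and an analogous argument bounds $m(p)$ for $p \in P_{N-1}^{\neg\max}$. Since every generator of $\calF$ is supported in $Y$, any decomposition $m = g + \ell$ automatically keeps the $X$-entries of $m$ inside $g$, and only the $Y$-part has to be reduced modulo $\calL$. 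For the $Y$-part I proceed greedily: first, for every $p \in P_N^0 = \{p \in P_N \mid \cin(p) = 0\}$ I subtract $m(p)$ copies of the type-1 generator $p$, zeroing $P_N^0$; then, while some $q \in P_N^- = \{p \in P_N \mid \cin(p) < 0\}$ exceeds a suitable polynomial threshold, the relation $v^{(N-1)} \cdot m = \bfC_{N-1}^{m_0}$ forces some place $r \in P_{N-1}^{\max} \cup P_N^+$ to carry enough tokens to allow subtracting a single type-2 or type-3 generator linking $q$ and $r$ without creating a negative coordinate. Iterating drives every $m(q)$ with $q \in P_N^-$ down to a polynomially bounded value, after which the invariant immediately bounds all remaining entries in $P_{N-1}^{\max} \cup P_N^+$. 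The final bound on the remainder is polynomial in $\|m_0\|$, $|P|$ and $\mathbf{W}$, yielding the required $\bfG(m_0)$.

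\textbf{Main obstacle.} The delicate step is the greedy reduction of the $Y$-part: each subtraction must respect non-negativity on every place, and the process must terminate at a \emph{polynomially} bounded remainder rather than an exponential one. The key tool is the exchange relation encoded in $v^{(N-1)}$, which guarantees that a large accumulation of tokens on some $q \in P_N^-$ must be matched by a comparable accumulation elsewhere in $Y$, in ratios controlled by the $\cin$-coefficients (hence by $\mathbf{W}$). Making this quantitative, and arguing that no single step can be forced to overshoot the threshold by more than $\mathbf{W}$, is what ultimately produces the polynomial bound $\bfG(m_0)$.
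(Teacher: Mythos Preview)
Your proof is correct and follows the same approach as the paper. The paper phrases the third inclusion as a dichotomy---every $m \in \Inv(m_0)$ either lies in $\calG$ or dominates some $f \in \calF$---and lets the iteration be implicit, whereas you spell out the greedy reduction; these are equivalent formulations of the same argument, and your treatment of the first two inclusions matches the paper essentially verbatim.
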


\begin{proof}
Theorem~\ref{theo:reach} states that $\calR(m_0) = \calI(m_0) \cap \Live$, where
$\Live = \bigcap_{i=1}^{N-1} \Live_i$ is the set of all live markings.
This already proves the inclusion relation $\calR(m_0) \subseteq \calI(m_0)$.

Second, for all $f \in \calF$, one checks easily the inclusions
$\{f\} + \calI(m_0) \subseteq \calI(m_0)$ and $\{f\} + \Live \subseteq \Live$.
It follows immediately that $\{m_0\} + \calL \subseteq \calR(m_0)$.

Finally, let $\bfG(m_0) = (\bfC_\infty^{m_0} + \bfE) (|P| + 1)$, where
$\bfE = \max\{\pot(p) \mid p \in P\} \leqslant \bfW |P|$ and
$\bfC_\infty^{m_0} = \max_{i = 1}^{N-2} \bfC_i^{m_0} \leqslant \bfE |P|$.
We prove that every marking $m \in \Inv(m_0)$ either belongs to the set
$\calG = \{m \mid \|m\|_\infty \leqslant \bfG(m_0)\}$
or is (componentwise) larger than some element of $\calF$.
Indeed, consider some place $p$.

If $p \in P_i$ for some $i \leqslant N-2$,
and since $\cin$ taked only non-negative entries on $P_{i+1}$, it follows that
$m(p) \leqslant m \cdot (P_i + \cin(P_{i+1})) = \bfC_i^{m_0} \leqslant \bfG(m_0)$.
Similarly, if $p \in P_{N-1}^{\neg\max}$, then
$m(p) \leqslant \cin(p) m(p) \leqslant m \cdot (P_{N-2} + \cin(P_{N-1})) = \bfC_{N-1}^{m_0} \leqslant \bfG(m_0)$.
This even proves that $m(p) \leqslant \bfC_\infty^{m_0}$ for all places $p \in Y$.

Assume now that $p \in X = P_N \cup P_{N-1}^{\max}$ and that $m$ is not larger than any element of $\calF$:
\begin{itemize}
 \item if $p \in P_N$ and $\cin(p) = 0$, we clearly have $m(p) = 0$;
 \item if $p \in P_N$, $\cin(p) < 0$, and $m(p) \geqslant \bfE$, then
 we must have $m(q) < \bfE$ for all places $q \in P_N$ such that $\cin(q) > 0$, and
 for all places $q \in P_{N-1}^{\max}$; it follows that
 \[\bfC_{N-1}^{m_0} \leqslant m \cdot P_{N-1} + \sum_{q \in P_N \text{ s.t. } \cin(q) > 0} m(q) \cin(q) - m(p) \leqslant
 (\bfC_\infty^{m_0} + \bfE) |P| - m(p),\] whence $m(p) \leqslant (\bfC_\infty^{m_0} + \bfE) |P| + \bfC_\infty^{m_0}$;
 \item if $p \in P_N$ and $\cin(p) > 0$, or if $p \in P_{N-1}^{\max}$, and if $m(p) \geqslant \bfE$, then
 we must have $m(q) < \bfE$ for all places $q \in P_N$ such that $\cin(q) < 0$; it follows that
 \[\bfC_{N-1}^{m_0} \geqslant m(p) + \sum_{q \in P_N \text{ s.t. } \cin(q) < 0} m(q) \cin(q) \geqslant
 m(p) - \bfE |P|,\] whence $m(p) \leqslant \bfE |P| + \bfC_\infty^{m_0}$.
\end{itemize}
This completes the proof.
\end{proof}

Observe that Lemma~\ref{lem:bounded} proves that all places $p \in X$ are bounded.
In addition, the following criterion for characterising ergodicity then comes quickly.

\theoergodic*

\begin{proof}
First, observe that these requirements are equivalent to saying that
$\hat{v}(f) < 1$ for all vectors $f \in \calF$.
Now, assume that $\hat{v}(f) \geqslant 1$ for some $f \in \calF$.
Lemma~\ref{lem:bounded} states that $\{m_0\} + \{k ~ \delta \mid k \in \bbN\} \subseteq \{m_0\} + \calL \subseteq \mathcal{R}(m_0)$,
and it follows that $\|\mathbf{v}\| \geq \hat{v}(m_0) \sum_{k \geq 0} \hat{v}(\delta)^k = +\infty$.

Conversely, assume that $\hat{v}(f) < 1$ for all $f \in \calF$.
Lemma~\ref{lem:bounded} then proves that $\calR(m_0) \subseteq \calG + \calL$, whence
\[\|\mathbf{v}\| \leqslant \sum_{g \in \calG, \ell \in \calL} \hat{v}(g + \ell) \leqslant
\hat{v}(\calG) \hat{v}(\calL) = \hat{v}(\calG) \prod_{f \in \calF} \frac{1}{1-\hat{v}(f)} < +\infty.\]
\end{proof}

\subsection{Computing the steady-state distribution}
\label{sec:app-compute}

We provide here full proofs for the results mentioned in Section~\ref{sec:computing}.

\productlem*

\begin{proof}
For every marking $m \in \calR(m_0)$, we denote by $\mathfrak{m}_X$ and $\mathfrak{m}_Y$
its respective classes in $\mathfrak{M}_X$ and $\mathfrak{M}_Y$,
and by $c(m)$ the integer $m \cdot P_{N-1}^{\neg\max}$.
It comes at once that
$\mathfrak{m}_X \in \calC_{m_0}(c(m))$ and that
$\mathfrak{m}_Y \in \calD_{m_0}(c(m))$.

Furthermore, we may prove that $0 \leqslant c(m) \leqslant |P| ~ \bfW ~ \|m_0\|$.
Indeed, by construction we have $0 \leqslant c(m)$.
Then, if $\bfW = 0$, then $P_{N-1}^{\neg \max} = \emptyset$, hence $c(m) = 0$.
Therefore, we assume that $\bfW \geqslant 1$.
Observe that $\pot(p) \leqslant |P| ~ \bfW$ for all places $p \in P$,
and therefore that $\cin(p) \leqslant |P| ~ \bfW$ as well.
Since $\cin(p) \geqslant 1$ for all $p \in P_{N-1}^{\neg \max}$, it follows that
\begin{align*}
c(m) =~ & m \cdot P_{N-1}^{\neg \max} \leqslant m \cdot \cin(P_{N-1}^{\neg \max}) \leqslant
m \cdot P_{N-2} + m \cdot \cin(P_{N-1}) = \bfC_{m_0}^{N-2} \\
\leqslant~ & m_0 \cdot P_{N-2} + m_0 \cdot \cin(P_{N-1}^{\neg \max}) \leqslant |P| ~ \bfW ~ \|m_0\|.
\end{align*}

Conversely, for all $c \geqslant 0$ and for all classes
$\mathfrak{m}_X \in \calC_{m_0}(c)$ and $\mathfrak{m}_Y \in \mathcal{D}_{m_0}(c)$,
the intersection $\mathfrak{m}_X \cap \mathfrak{m}_Y$ is a singleton set $\{m\}$.
Let $m_X$ and $m_Y$ be markings in $\calR(m_0)$ such that $m_X \in \mathfrak{m}_X$ and
$m_Y \in \mathfrak{m}_Y$: $m$ and $m_X$ coincide on $X$, and $m$ and $m_Y$ coincide on $Y$.
By construction, both $m$ and $m_X$ satisfy the invariants $\Inv_i(m_0)$ and $\Live_i$ for $i \leqslant N-2$.
Since $m_Y$ satisfies the invariant $\Inv_{N-1}(m_0)$, observe that
$m_Y \cdot P_{N-1}^{\neg \max} = c = m_X \cdot P_{N-1}^{\neg \max} = m \cdot P_{N-1}^{\neg \max}$.
Hence, both $m$ and $m_Y$ satisfy the invariants $\Inv_{N-1}(m_0)$ and $\Live_{N-1}$,
which proves that $r \in \calR(m_0)$.

Lemma~\ref{lemma:product} then follows from the inequality $0 \leqslant c(m) \leqslant |P| ~ \bfW ~ \|m_0\|$
and from the relation $\hat{v}(m) = \hat{v}(\mathfrak{m}_X) \hat{v}(\mathfrak{m}_Y)$.
\end{proof}

\procomputec*
\label{proof-computec}

\begin{proof}
We compute the sum $\sum_{\mathfrak{m}_X \in \calC_{m_0}(c)} \hat{v}(\mathfrak{m}_X)$,
by using a dynamic-programming approach.
This approach is similar to the procedure used in~\cite{HMN-fi13},
and it requires computing recursively a limited number of auxiliary sums
that generalise the sum that we want to compute.
Such auxiliary sums are of the form
$\sum_{\mathfrak{m}_1 \in Z} \hat{v}(\mathfrak{m}_1)$
for a class of sets $Z$ generalising the set $\calC_{m_0}(c)$.

We proceed as follows. First, we denote by $\kappa_i$ the cardinality of the set $P_i$,
for $i \leqslant N-2$, and by $\kappa_{N-1}$ the cardinality of $P_{N-1}^{\neg\max}$.
We also denote by $p_1^i,\ldots,p_{\kappa_i}^i$ the places in $P_i$ (or $P_i^{\neg\max}$ if $i = N-1$),
sorted by increasing value of potential, and we denote by $P(i,k)$
the set of places $\{p_1^i,\ldots,p_k^i\} \cup \bigcup_{j \leqslant i-1} P_j$.

Then, for all integers $1 \leqslant i \leqslant N-1$, $1 \leqslant j \leqslant k \leqslant \kappa_i$ and
$c, c' \geqslant 0$, we define $\overline{\calC}(i,j,k,c,c')$
as the set of those classes $\mathfrak{m}$ in $\mathfrak{M}_{P(i,k)}$
such that:
(i) $\mathfrak{m} \cdot \{p_1^i,\ldots,p_{j-1}^i\} = 0$,
(ii) if $j < \kappa_i$, then $\mathfrak{m}(p_j^i) \geqslant 1$,
(iii) $\mathfrak{m} \cdot \{p_1^i,\ldots,p_k^i\} = c$,
(iv) $\mathfrak{m} \cdot P_{i-1} + \mathfrak{m} \cdot \cin(\{p_1^i,\ldots,p_k^i\}) = c'$, and
(v) $\mathfrak{m}$ contains markings in $\mathcal{R}(m_0)$.

Observe that $P(N-1,\kappa_{N-1}) = X$ and that $\calC_{m_0}(c)$ is the disjoint union
$\bigsqcup_{j=1}^{\kappa_{N-1}} \overline{\calC}(N-1,j,\kappa,c,\bfC_{N-2}^{m_0})$.
We may further prove that $\overline{\calC}(i,j,k,c,c')$ is empty whenever $c$ or $c'$
does not belong to the set $\{0,\ldots,|P| ~ \bfW ~ \|m_0\|\}$.
Hence, our dynamic-programming approach consists in computing
sums over each of the (polynomially many) sets $\overline{\calC}(i,j,k,c,c')$.
It relies on the following recursive decomposition of $\overline{\calC}(i,j,k,c,c')$:
\begin{itemize}
 \item if $c' < \pot(p_j^i)$, if $c < 0$, or if $j = k < \kappa_i$ and $c = 0$, then $\overline{\calC}(i,j,k,c,c') = \emptyset$;
 \item if $j < k$, then we distinguish classes $\mathfrak{m}$ based on the value of $\mathfrak{m}(p_k^i)$,
 thereby obtaining the decomposition
 \[\overline{\calC}(i,j,k,c,c') = \bigsqcup_{u=0}^{c-1} \{\mathfrak{m} \in \mathfrak{M}_{P(i,k)} \mid
 \mathfrak{m}(p_k^i) = u \text{ and } \mathfrak{m}\!\restriction_{P(i,k-1)} \in \overline{\calC}(i,j,k-1,c-u,c'-u \cin(p_k^i))\};\]
 \item if $i \geqslant 2$, $j = k$ and either $c \geqslant 1$ or $j = \kappa_i$,
 then we have $\mathfrak{m}(p_k^i) = c$, and we distinguish classes $\mathfrak{m}$ based on on the least index $j'$, if it exists, such that
 $\mathfrak{m}(p_{j'}^{i-1}) \geqslant 1$, thereby obtaining the decomposition
 \[\overline{\calC}(i,j,k,c,c') = \bigsqcup_{j' = 1}^{\kappa_{i-1}} \left\{\mathfrak{m} \in \mathfrak{M}_{P(i,k)} \left|
 \begin{array}{l}\mathfrak{m}(p_k^i) = c \text{, }
 \{p_1^i,\ldots,p_{k-1}^i\} = 0 \text{, } \mathfrak{m} \cdot P_{i-1} \geqslant \pot(p_j^i), \\
 \mathfrak{m}\!\restriction_{P(i-1,\kappa_{i-1})} \in \overline{\calC}(i-1,j',\kappa_{i-1},c'-\cin(p_j^i) c,\bfC_{i-2}^{m_0})\end{array}\right\};\right.\]
 \item if $i = 1$, $j = k$ and either $c \geqslant 1$ or $j = \kappa_i$, then we obtain directly the equality
 $\overline{\calC}(i,j,k,c,c') = \{\mathfrak{m} \in \mathfrak{M}_{P(i,k)} \mid \mathfrak{m}(p_k^i) = c \text{ and }
 \mathfrak{m} \cdot \{p_1^i,\ldots,p_{k-1}^i\} = 0\}$.
\end{itemize}

Indeed, and denoting by $\calS(i,j,k,c,c')$ the sum
$\sum_{\mathfrak{m} \in \overline{\calC}(i,j,k,c,c')} \pi (\mathfrak{m})$,
it follows immediately that
\[\calS(i,j,k,c,c') = \begin{cases}
0 & \text{if $c < 0$ or $c' < \pot(p_j^i)$} \\ 
0 & \text{if ($c = 0$ and $j < \kappa_i$) or ($c' < \POT_i$ and $c > 0$)} \\
\bfS_1 + \mu_{p_k^i} \bfS_2 & \text{if $j < k$ and $c > 0$} \\
\mu_{p_k^i} \bfS_2 & \text{if $j = k$ and $c > \mathbf{1}_{j < \kappa_i}$} \\
\mu_{p_k^i} \bfS_3 & \text{if $j = k < \kappa_i$, $c = 1$ and $i \geqslant 2$} \\
\bfS_4 & \text{if $j = k = \kappa_i$, $c = 0$ and $i \geqslant 2$} \\
\mu_{p_k^i}^c & \text{if $j = k$, $i = 1$ and $c = \mathbf{1}_{j < \kappa_i}$}
\end{cases} \text{, where }\]
\begin{itemize}
\item $\bfS_1 = \calS(i,j,k-1,c,c')$;
\item $\bfS_2 = \calS(i,j,k,c-1,c'-\cin(p_k^i))$;
\item $\bfS_3 = \sum_{j'=1}^{\kappa_{i-1}} \calS(i-1,j',\kappa_{i-1},c' - \cin(p_k^i),\bfC_{i-2}^{m_0})$ and
\item $\bfS_4= \sum_{j'=1}^{\kappa_{i-1}} \calS(i-1,j',\kappa_{i-1},c',\bfC_{i-2}^{m_0})$.
\end{itemize}

These relations allow us to express the sum $\calS(i,j,k,c,c')$ in terms of polynomially
many sums of the form $\calS(i_2,j_2,k_2,c_2,c'_2)$, where the tuple $(i_2,k_2,c_2)$ is
smaller (for the lexicographic order) than the tuple $(i,k,c)$.
Hence, they provide us with a well-defined recursive procedure for computing the sums 
$\calS(i,j,k,c,c')$, which involves polynomially many arithmetic operations.
\end{proof}

We also extend the sketch of proof of Proposition~\ref{proposition:compute-d}
into a real proof as follows.

\procomputed*
\label{proof-computed}

\begin{proof}
We compute the sum $\sum_{\mathfrak{m}_Y \in \calD_{m_0}(c)} \hat{v}(\mathfrak{m}_Y)$
by using a dynamic-programming approach.
Let $a = |P_{N-1}^{\max}|$, $b = |P_N|$, and define $t$ and $u$ to be integers such that
(i) $\cin(p_i^N) > 0$ iff $1 \leqslant i \leqslant t$,
(ii) $\cin(p_i^N) = 0$ iff $t < i < u$ and
(iii) $\cin(p_i^N) < 0$ iff $u \leqslant i \leqslant b$.
In addition, for all $i \leqslant b$, let $\Delta_i = \max\{1,|\cin(p_i^N)|\}$,
and we consider below the lattice $\bbL = \bbZ^{a-1} \times \prod_{i=1}^b (\Delta_i \bbZ)$.
Finally, for the sake of readability, we will also use $\mu_{x,i}$ and $\mu_{y,i}$
as placeholders for $\mu_{p_{\kappa_{N-1}+i}^{N-1}}$ and $\mu_{p_i^N}$ respectively.

Then, for all classes $\mathfrak{m} \in \calD_{m_0}(c)$,
let $s(\mathfrak{m}) = \min\{i \leqslant b \mid \mathfrak{m}(i) \geqslant 1\}$,
or $s(\mathfrak{m}) = b$ if $\mathfrak{m} \cdot P_N = 0$.
We further partition the set $\calD_{m_0}(c)$ into finitely many sets
$\calD_{m_0}(c,s) = \{\mathfrak{m} \in \calD_{m_0}(c) \mid s(\mathfrak{m}) = s\}$,
for all integers $s \in \{1,\ldots,b\}$.

Now, we identify every class $\mathfrak{m} \in \calD(s,c)$ with the vector
$(x_1,\ldots,x_a,y_s,\ldots,y_b)$ defined by $x_i = \mathfrak{m}(p_{\kappa_{N-1}+i}^{N-1})$ and
$y_i = \Delta_i (\mathfrak{m}(p_i^N) - \mathbf{1}_{i = s} \mathbf{1}_{s \neq b})$.
Every such vector satisfies the identity
\[c + \sum_{i=1}^a x_i + \sum_{i=s}^t y_i - \sum_{i=u}^b y_i + \mathbf{1}_{s \neq b} \cin(p_s^N) = \bfC_{N-1}^{m_0},\]
hence we may project away the entry $x_a$ without loss of information about $\mathfrak{m}$.
We denote below by $\xy$ the vector $(x_1,\ldots,x_{a-1},y_s,\ldots,y_b)$.

Using Theorem~\ref{theo:reach}, we observe that a class $\mathfrak{m} \in \mathfrak{M}_X$ belongs
to the set $\calD(c,s)$ if and only if it satisfies the following four (in)equalities:
\begin{align*}
& \mathfrak{m} \cdot \{p_1^N,\ldots,p_{s-1}^N\} = 0, & &
\mathfrak{m}(p_s^N) \geqslant 1 \text{ if } s < \kappa_N, \\
& c + \mathfrak{m} \cdot (P_{N-1}^{\max} + \cin(P_N)) = \mathbf{C}_{N-1}^{m_0} \text{ and } & &
c + \mathfrak{m} \cdot P_{N-1}^{\max} \geqslant \pot_s \text{, where}
\end{align*}
where the integer $\pot_s$ is defined as $\min\{\pot(p_s^N),\pot(p_\ext)\}$.
Rewriting these inequalities in terms of the vector $\xy$ associated with the marking $\mathfrak{m}$,
it follows that the function $\mathfrak{m} \mapsto \xy$ is a bijection from the set $\calD(c,s)$ to the set
\[\left\{\bfxy \in \bbR^{a+b-1}_{\geqslant 0} \cap \bbL \left|
\begin{array}{l}\bfA + \sum_{j=u}^b y_j \geqslant \sum_{i=1}^{a-1} x_i + \sum_{j=s}^t y_j \\
B  + \sum_{j=u}^{\beta} y_j \geqslant \sum_{j=s}^t y_j \\
y_j = 0 \text{ for all } j < s\end{array}\right\},\right.\]
where $\bfA = \bfC_{N-1}^{m_0} - \mathbf{1}_{s \neq b} \cdot \cin(p_s^N) - c$,
$\bfB = \bfC_{N-1}^{m_0} - \mathbf{1}_{s \neq b} \cdot \cin(p_s^N) - \pot_s$,
and $\bfxy$ denotes the vector $(x_1,\ldots,x_{a-1},y_1,\ldots,y_b)$.

In addition, it is easy to see that $\hat{v}(\mathfrak{m})$ is equal to the quantity $\hatt{v}(\xy)$,
where $\hatt{v}$ is the product-form function defined by
\[\hatt{v} : \bfxy \mapsto \bfP \prod_{i=1}^{a-1} \nu_{x,i}^{x_i} \prod_{i=s}^{b} \nu_{y,i}^{y_i},\]
where $\bfP = \mu_{x,a}^{\bfA}$, $\nu_{x,i} = \mu_{x,i} / \mu_{x,a}$,
$\nu_{y,i} = \mu_{y,i}^{1/\Delta_i} / \mu_{x,a}$ for $s \leqslant i \leqslant t$,
$\nu_{y,i} = \mu_{y,i}$ for $t < i < u$ and
$\nu_{y,i} = \mu_{y,i}^{1/\Delta_i} \mu_{x,a}$ for $u \leqslant i \leqslant b$.

Like in the proof of Proposition~\ref{proposition:compute-c},
our dynamic-programming approach is based on generalising the set
$\{\xy \mid \mathfrak{m} \in \calD(c,s)\}$.
Aiming towards this direction,
let $\mathbf{1}_{x,1},\ldots,\mathbf{1}_{x,a-1}$, $\mathbf{1}_{y,1},\ldots,\mathbf{1}_{y,b}$
be the canonical basis of the vector space $\bbR^{a+b-1}$,
and let $\mathbf{0}$ denote the zero vector.
For all integers $A \in \{-|\bfA|,\ldots,|\bfA|\}$ and
$B \in \{-|\bfB|,\ldots,|\bfB|\}$, all integers $1 \leqslant \alpha \leqslant a-1$ and $1 \leqslant \gamma \leqslant \beta \leqslant b$,
and all vectors $\bfw \in \{\lambda ~ \mathbf{1}_{y,\gamma} \mid 0 \leqslant \lambda < \Delta_\gamma\} \cup
\{\lambda ~ \mathbf{1}_{y,\beta} \mid 0 \leqslant \lambda < \Delta_\beta\}$, we set
\[\overline{\calD}(A,B,\bfw,\alpha,\beta,\gamma) = \left\{\xy \in \bbR^{a+b-1}_{\geqslant 0} \left|
\begin{array}{l}A + \sum_{j=u}^{\beta} y_j \geqslant \sum_{i=1}^{\alpha} x_i + \sum_{j=\gamma}^t y_j \\
B  + \sum_{j=u}^{\beta} y_j \geqslant \sum_{j=\gamma}^t y_j \\
\xy - \bfw \in \bbL \\
x_i = 0 \text{ for all } i > \alpha \\
y_j = 0 \text{ for all } j < \gamma \text{ and all } j > \beta
\end{array}\right\}\right.\]

Observe that this family of sets generalises the set $\{\xy \mid \mathfrak{m} \in \calD(c,s)\}$,
which is equal to $\overline{\calD}(\bfA,\bfB,\mathbf{0},a-1,b,s)$.
Moreover, by construction, $\bfA$ and $\bfB$ are polynomially bounded.
Therefore, it remains to evaluate polynomially many sums of the form
$\calU(A,B,\bfw,\alpha,\beta,\gamma) = \sum_{\xy \in \overline{\calD}(A,B,\bfw,\alpha,\beta,\gamma)} \hatt{v}(\xy)$.
Observing that the value of the vector $\bfw$ is useful only modulo $\bbL$,
we identify below a vector $\bfw = \lambda ~ \mathbf{1}_{y,j}$ with the unique vector
$\bfw' \in \{\lambda' ~ \mathbf{1}_{y,j} \mid 0 \leqslant \lambda' \leqslant \Delta_j-1\}$
such that $\lambda \equiv \lambda'  \!\!\!\mod{\Delta_j}$.

Our recursive evaluation works by incrementally eliminating the variables
that appear in the expression of the vectors $\xy \in \overline{\calD}(A,B,\bfw,\alpha,\beta,\gamma)$
that we consider: we cancel the coordinates of $\xy$ one by one, in a clever order,
in order to obtain a polynomial-time evaluation of the associated sums $\calU$.
The variables $y_\beta,y_{\beta-1},\ldots,y_u$ must be eliminated in this order, and so must
the variables $y_\gamma,y_{\gamma+1},\ldots,y_t,x_\alpha,x_{\alpha-1},\ldots,x_1$.
Yet, if $v_1$ and $v_2$ are two variables chosen respectively from these two families,
it is possible to eliminate the variable $v_1$ before, after, or at the same time as the variable $v_2$.
In particular, we consider exponentially many interleavings for the elimination order of our variables:
due to our dynamic programming approach, considering all these interleaving will be made at little cost.

In practice, we may consider two stacks, each containing vectors of the canonical basis.
The stack $\bfS_1$ contains the vectors $\mathbf{1}_{y,u},\mathbf{1}_{y,u+1},\cdots,\mathbf{1}_{y,\beta}$ (from bottom to top) and
the stack $\bfS_2$ contains the vectors $\mathbf{1}_{x,1},\mathbf{1}_{x,2},\cdots,\mathbf{1}_{x,\alpha},
\mathbf{1}_{y,t},\mathbf{1}_{y,t-1},\cdots,\mathbf{1}_{y,\gamma}$.

 \begin{itemize}
  \item If both $\bfS_1$ and $\bfS_2$ are empty, then $\overline{\calD}(A,B,\bfw,\alpha,\beta,\gamma)$ is either empty
  (if $A < 0$ or $B < 0$ or $\bfw \neq \mathbf{0}$) or equal to the set
  $\{\bfxy \in \bbN^{a+b-1} \mid x_i = 0 \text{ for all } i \text{ and } y_j = 0 \text{ for all } j < \gamma \text{ or } j > \beta\}$.
  It follows directly that
  $\calU(A,B,\bfw,\alpha,\beta,\gamma) =
  \mathbf{1}_{A \geqslant 0 \text{, } B \geqslant 0 \text{ and } \bfw = \mathbf{0}} \bfP \prod_{j=\gamma}^\beta 1/(1-\nu_{y,j})$.
  
  \item If only $\bfS_1$ is non-empty, let $\mathbf{1}_{y,\beta}$ be its top element, and let us partition
  the set $\bbN$ of natural integers into its substets $\{0\}, \{1\}, \cdots, \{\max(|A|,|B|)\}$ and $\{z \in \bbN \mid z> \max(|A|,|B|)\}$.
  We distinguish the vectors $\xy \in \overline{\calD}(A,B,\bfw,\alpha,\beta,\gamma)$
  based on the subset of $\bbN$ to which their coordinate $y_\beta$ belongs.
  This allows us to reduce the computation of $\calU(A,B,\bfw,\alpha,\beta,\gamma)$ to the computation of finitely many sums
  of the form $\calU(A',B',\bfw',\alpha',\beta-1,\gamma')$.
  
  \item If only $\bfS_2$ is non-empty, let $\mathbf{1}_{y,\gamma}$ (or $\mathbf{1}_{x,\alpha}$) be its top element.
  We distinguish the vectors $\xy \in \overline{\calD}(A,B,\bfw,\alpha,\beta,\gamma)$
  based on the value of their coordinate $y_\gamma$ (or $x_\alpha$).
  This value may not exceed $\max(|A|,|B|)$, which reduces the computation of $\cal(A,B,\bfw,\alpha,\beta,\gamma)$
  to that of sums of the form $\overline{\calD}(A,B,\bfw,\alpha,\beta,\gamma+1)$ or $\overline{\calD}(A,B,\bfw,\alpha+1,\beta,\gamma)$.
  
  \item If both $\bfS_1$ and $\bfS_2$ are non-empty, let $\mathbf{1}_{y,\beta}$ be the top element of $\bfS_1$, and
  let $\mathbf{1}_{y,\gamma}$ (or $\mathbf{1}_{x,\alpha}$) be the top element of $\bfS_2$.
  Let us partition
  the set $\bbN$ of natural integers into its substets $\{0\}, \{1\}, \cdots, \{|B|\}$ and $\{z \in \bbN \mid z > |B|\}$.
  We distinguish the vectors $\xy \in \overline{\calD}(A,B,\bfw,\alpha,\beta,\gamma)$ based on
  which of the entries $y_\beta$ and $y_\gamma$ (or $x_\alpha$) is the smallest.
  If the top element of $\bfS_2$ is $\mathbf{1}_{x,\alpha}$, we further distinguish vectors $\xy$ based on
  which subset of $\bbN$ the integer $\min(y_\beta,x_\alpha)$ belongs to.
  This allows us to reduce the computation of $\calU(A,B,\bfw,\alpha,\beta,\gamma)$ to the computation of finitely many sums
  of the form $\calU(A',B',\bfw',\alpha-1,\beta,\gamma)$, $\calU(A',B',\bfw',\alpha,\beta-1,\gamma)$ or $\calU(A',B',\bfw',\alpha,\beta,\gamma+1)$.
  \end{itemize}

In page~\pageref{explicit}, this computation was explicitly carried in the case where $\bfS_1$ and $\bfS_2$ are non-empty,
with respective top elements $\mathbf{1}_{y,\beta}$ and $\mathbf{1}_{y,\gamma}$, and where $\bfw$ is of the form $\lambda ~ \mathbf{1}_{y,\beta}$.
Since other computations follow the exact same kind of reasoning, we omit detailing them,
and just present here the actual computation that they allow us to perform:
\begin{enumerate}
 \item If $\gamma \leqslant t$, $u \leqslant \beta$ and $\bfw = \lambda ~ \mathbf{1}_{y,\beta}$ for some $\lambda \in \{0,\ldots,\Delta_\beta-1\}$, then we have
 \begin{align*}
 & \calU(A,B,\bfw,\alpha,\beta,\gamma) = (\bfU_1 + \bfU_2 - \bfU_3) / (1-\mu_{y,\gamma}^{\Delta_\beta} \mu_{y,\beta}^{\Delta_\gamma}) \text{, where} \\
 & \bfU_1 = \sum_{k=0}^{\Delta_\beta-1} \mu_{y,\gamma}^k \mu_{y,\beta}^{k \Delta_\gamma/\Delta_\beta}
 \calU(A,B,(\lambda + k \Delta_\gamma) ~ \mathbf{1}_{y,\beta} ,\alpha,\beta,\gamma+1), \\
 & \bfU_2 = \sum_{k=0}^{\Delta_\gamma-1} \mu_{y,\gamma}^{(\lambda + k \Delta_\beta)/\Delta_\gamma} \mu_{y,\beta}^{k + \lambda/\Delta_\beta}
 \calU(A,B, (\lambda + k \Delta_\beta) ~ \mathbf{1}_{y,\gamma},\alpha,\beta-1,\gamma) \text{ and} \\
 & \bfU_3 = \sum_{k=0}^{\Delta_\beta-1} \mu_{y,\gamma}^k \mu_{y,\beta}^{k \Delta_\gamma/\Delta_\beta}
 \mathbf{1}_{k \Delta_\gamma \equiv \lambda \!\!\!\!\!\mod{\Delta_\beta}}
 \calU(A,B,\mathbf{0},\alpha,\beta-1,\gamma+1).
 \end{align*}
 
 \item If $\gamma \leqslant t$, $u \leqslant \beta$ and $\bfw = \lambda ~ \mathbf{1}_{y,\gamma}$ for some $\lambda \in \{0,\ldots,\Delta_\gamma-1\}$, then, symmetrically, we have
 \begin{align*}
 & \calU(A,B,\bfw,\alpha,\beta,\gamma) = (\bfU_1 + \bfU_2 - \bfU_3) / (1-\mu_{y,\gamma}^{\Delta_\beta} \mu_{y,\beta}^{\Delta_\gamma}) \text{, where}  \\
 & \bfU_1 = \sum_{k=0}^{\Delta_\gamma-1} \mu_{y,\gamma}^{k \Delta_\beta/\Delta_\gamma} \mu_{y,\beta}^k
 \calU(A,B,(\lambda - k \Delta_\beta) ~ \mathbf{1}_{y,\gamma} ,\alpha,\beta-1,\gamma), \\
 & \bfU_2 = \sum_{k=0}^{\Delta_\beta-1} \mu_{y,\gamma}^{k + \lambda/\Delta_\gamma} \mu_{y,\beta}^{(\lambda + k \Delta_\gamma)/\Delta_\beta}
 \calU(A,B,(\lambda + k \Delta_\gamma) ~ \mathbf{1}_{y,\beta},\alpha,\beta,\gamma+1) \text{ and} \\
 & \bfU_3 = \sum_{k=0}^{\Delta_\gamma-1} \mu_{y,\gamma}^{k \Delta_\beta/\Delta_\gamma} \mu_{y,\beta}^k
 \mathbf{1}_{k \Delta_\beta \equiv \lambda \!\!\!\!\!\mod{\Delta_\gamma}}
 \calU(A,B,\mathbf{0},\alpha,\beta-1,\gamma+1).
 \end{align*}

 \item If $\gamma > t$, $\alpha \geqslant 1$ and $u \leqslant \beta$, then $\bfw = \lambda ~ \mathbf{1}_{y,\beta}$ for some $\lambda \in \{0,\ldots,\Delta_\beta-1\}$, hence we have
 \begin{align*}
 & \calU(A,B,\bfw,\alpha,\beta,\gamma) = \mathbf{1}_{B < 0} (\mathbf{1}_{\lambda = 0} \bfU_4 + \bfU_5) +
 \mathbf{1}_{B \geqslant 0} (\bfU_6 + \bfU_7) / (1-\mu_{x,\alpha}^{\Delta_\beta} \mu_{y,\beta}) \text{, where} \\
 & \bfU_4 = \calU(A,B,\bfw,\alpha,\beta-1,\gamma) - \calU(A,B,\bfw,\alpha-1,\beta-1,\gamma) \\
 & \bfU_5 = \mu_{x,\alpha} \mu_{y,\beta}^{1/\Delta_\beta}
 \calU(A,B+1,(\lambda + 1) ~ \mathbf{1}_{y,\beta},\alpha,\beta,\gamma) \\
 & \bfU_6 = \mu_{x,\alpha}^\lambda \mu_{y,\beta}^{\lambda/\Delta_\beta}
 (\calU(A,0,\mathbf{0},\alpha,\beta-1,\gamma) - \calU(A,0,\mathbf{0},\alpha-1,\beta-1,\gamma)) \\
 & \bfU_7 = \sum_{k=0}^{\Delta_\beta-1} \mu_{x,\alpha}^k \mu_{y,\beta}^{k/\Delta_\beta}
 \calU(A,0,(\lambda + k) ~ \mathbf{1}_{y,\beta},\alpha-1,\beta,\gamma).
 \end{align*}
 
 \item If $\gamma > t$, $\alpha = 0$ and $u \leqslant \beta$, then $\bfw = \lambda ~ \mathbf{1}_{y,\beta}$ for some $\lambda \in \{0,\ldots,\Delta_\beta-1\}$, hence we have
 \begin{align*}
 & \calU(A,B,\bfw,\alpha,\beta,\gamma) = \mathbf{1}_{A < 0 \text{ or } B < 0} (\mathbf{1}_{\lambda = 0} \bfU_8 + \bfU_9) +
 \mathbf{1}_{A \geqslant 0 \text{ and } B \geqslant 0} \bfU_{10} / (1-\mu_{x,a}^{\Delta_\beta} \mu_{y,\beta})\text{, where} \\
 & \bfU_8 = \calU(A,B,\bfw,\alpha,\beta-1,\gamma) \\
 & \bfU_9 = \mu_{x,a} \mu_{y,\beta}^{1/\Delta_\beta}
 \calU(\min\{A+1,0\},\min\{B+1,0\},(\lambda + 1) ~ \mathbf{1}_{y,\beta},\alpha,\beta,\gamma) \\
 & \bfU_{10} = \mu_{x,a}^\lambda \mu_{y,\beta}^{\lambda/\Delta_\beta}
 \calU(0,0,\mathbf{0},\alpha,\beta-1,\gamma).
 \end{align*}
 
 \item If $\gamma > t$, $\alpha = 0$ and $u > \beta$, then we already mentioned that
  \[\calU(A,B,\bfw,\alpha,\beta,\gamma) =
  \mathbf{1}_{A \geqslant 0 \text{, } B \geqslant 0 \text{ and } \bfw = \mathbf{0}} \bfP \prod_{j=\gamma}^\beta 1/(1-\mu_{y,j}).\]
 
 \item If $\gamma \leqslant t$ and $u > \beta$ then $\bfw = \lambda ~ \mathbf{1}_{y,\gamma}$ for some $\lambda \in \{0,\ldots,\Delta_\gamma-1\}$, hence we have
 \begin{align*}
 & \calU(A,B,\bfw,\alpha,\beta,\gamma) = \mathbf{1}_{A \geqslant 0 \text{ and } B \geqslant 0} (\mathbf{1}_{\lambda = 0} \bfU_{11} + \bfU_{12}) \text{, where} \\
 & \bfU_{11} = \calU(A,B,\bfw,\alpha,\beta,\gamma+1) \\
 & \bfU_{12} = \mu_{x,a}^{-1} \mu_{y,\gamma}^{1/\Delta_\gamma}
 \calU(A-1,B-1,(\lambda + 1) ~ \mathbf{1}_{y,\gamma}),\alpha,\beta,\gamma).
 \end{align*}
 
 \item If $\gamma > t$, $\alpha \geqslant 1$ and $u > \beta$, then $\bfw = \mathbf{0}$, hence we have
 \begin{align*}
 & \calU(A,B,\bfw,\alpha,\beta,\gamma) = \mathbf{1}_{A \geqslant 0 \text{ and } B \geqslant 0} \bfU_{13} \text{, where} \\
 & \bfU_{13} = \calU(A,B,\bfw,\alpha-1,\beta,\gamma) + \mu_{x,a}^{-1} \mu_{x,\alpha} \calU(A-1,B,\mathbf{0},\alpha,\beta,\gamma).
 \end{align*}
\end{enumerate}

Due to Theorem~\ref{theo:ergodic}, we know that since our net is ergodic then every of our denominators was positive,
hence these sums exist and may indeed be computed by using polynomially many arithmetic operations.
\end{proof}

\subsection{Detailed complexity analysis}

We proved above that the normalisation constant $\|\bfv\|$ can be computed
by performing polynomially many arithmetic operations.
In particular, if these operations are carried in constant time (e.g. by using floating-point arithmetic),
then the normalisation constant itself is of course computable in polynomial time.

However, if the firing rates $\lambda_b$ of the Petri net bags are rational numbers, then so is every constant $\mu_p$,
and so will be the normalisation constant $\|\bfv\|$. Hence, we must also look at how much time is needed to compute
an exact value of $\|\bfv\|$ as a rational number.
We assume below that every rate $\lambda_b$ is a rational number whose numerators and denominators
are $\bfK$-bit integers, for some integer $\bfK$.
Then, we will only need to prove that the numerators and denominators of the auxiliary numbers that we compute
are at most exponential in $|P|$, $\|m_0\|$, $\bfW$ and $\bfK$.

First, computing the visit rates $\mathbf{vis}(b)$ for each bag is done by inverting a
$P \times P$ matrix with coefficients chosen from a finite subset of $\bbQ$.
Hence, this is feasible in polynomial time, and both the numerator and
the denominator of each rate $\mathbf{vis}(b)$ are at most exponential in $|P|$ and $\bfK$.
Hence, so are the constants $\mu_p$ for all places $p \in P$: below, we represent them
as fractions $N_p / D_p$.

\begin{lemma}
Each sum $\calS(i,j,k,c,c')$ that we compute
is a rational number whose numerator and denominator
are at most exponential in $|P|$, $\bfG(m_0)$, $\bfW$ and $\bfK$.
\end{lemma}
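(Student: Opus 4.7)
The plan is to bypass any inductive bookkeeping over the recursion that defines $\calS(i,j,k,c,c')$, and instead to read its bit-size directly off the closed form
\[\calS(i,j,k,c,c') \;=\; \sum_{\mathfrak{m} \in \overline{\calC}(i,j,k,c,c')} \prod_{p \in P(i,k)} \mu_p^{\mathfrak{m}(p)},\]
which is immediate from the definitions of $\overline{\calC}(i,j,k,c,c')$ and of $\hat{v}$. First I would bound both the size of the index set and the degrees occurring in each monomial. Since $P(i,k) \subseteq X$ and every class $\mathfrak{m} \in \overline{\calC}(i,j,k,c,c')$ contains some marking $m \in \calR(m_0)$, Lemma~\ref{lem:bounded} supplies a decomposition $m \in \calG + \calL$ with $\calL$ supported in $Y = P \setminus X$; the restriction of $m$ to $P(i,k)$ therefore coincides with that of an element of $\calG$, so that $\mathfrak{m}(p) \le \bfG(m_0)$ for every $p \in P(i,k)$, and in particular $|\overline{\calC}(i,j,k,c,c')| \le (\bfG(m_0)+1)^{|P|}$.

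Next I would clear denominators. Writing each $\mu_p$ as an irreducible fraction $N_p / D_p$, the paragraph preceding the lemma ensures $|N_p|, |D_p| \le 2^{\mathrm{poly}(|P|,\bfK)}$. Setting $D = \prod_{p \in P(i,k)} D_p^{\bfG(m_0)}$, the value of $D$ is at most $2^{|P| \bfG(m_0) \mathrm{poly}(|P|,\bfK)}$, which is exponential in the parameters of the lemma. Multiplying the closed-form expression above by $D$ yields
\[\calS(i,j,k,c,c') \cdot D \;=\; \sum_{\mathfrak{m} \in \overline{\calC}(i,j,k,c,c')} \prod_{p \in P(i,k)} N_p^{\mathfrak{m}(p)} \, D_p^{\bfG(m_0) - \mathfrak{m}(p)},\]
a sum of at most $(\bfG(m_0)+1)^{|P|}$ integer terms, each of absolute value at most $\prod_p \max(|N_p|,|D_p|)^{\bfG(m_0)} \le 2^{|P|\bfG(m_0) \mathrm{poly}(|P|,\bfK)}$. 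Combining these two estimates, $\calS(i,j,k,c,c') \cdot D$ is an integer whose absolute value is bounded exponentially in $|P|$, $\bfG(m_0)$, $\bfW$ and $\bfK$, so that, after reducing the fraction to lowest form, both the numerator and the denominator of $\calS(i,j,k,c,c')$ meet the required bound.

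The main obstacle I anticipate is the (fairly mild) verification that the recursive clauses used in the proof of Proposition~\ref{proposition:compute-c} really evaluate $\sum_{\mathfrak{m}} \hat{v}(\mathfrak{m})$ with no double counting, so that the clean combinatorial description of $\calS(i,j,k,c,c')$ written above is genuinely the quantity produced by the dynamic program. This has to be checked clause by clause, since each clause either partitions $\overline{\calC}(i,j,k,c,c')$ by isolating the value of one coordinate or handles a base case. Once it is checked, the whole argument collapses to the two elementary size bounds sketched above, and one never has to control how precision might grow through the intermediate arithmetic operations performed by the algorithm.
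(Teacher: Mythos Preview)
Your proposal is correct and follows essentially the same approach as the paper: both arguments bypass the recursion entirely, use Lemma~\ref{lem:bounded} to bound every coordinate $\mathfrak{m}(p)$ (for $p \in P(i,k) \subseteq X$) by $\bfG(m_0)$, count the at most $(\bfG(m_0)+1)^{|P|}$ terms, and then bound numerator and denominator via the common denominator $\prod_p D_p^{\bfG(m_0)}$. The only remark is that the ``obstacle'' you flag in your last paragraph is not one: in the proof of Proposition~\ref{proposition:compute-c} the symbol $\calS(i,j,k,c,c')$ is \emph{defined} as the sum $\sum_{\mathfrak{m} \in \overline{\calC}(i,j,k,c,c')} \hat{v}(\mathfrak{m})$, and the recursive identities are derived from that definition, so no separate consistency check is needed for the present lemma.
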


\begin{proof}
It follows easily from Lemma~\ref{lem:bounded} that each set $\mathfrak{M}_{P(i,k)}$ contains
at most $(\bfG(m_0)+1)^{|P(i,k)|}$ classes of markings. Moreover, for each class $\mathfrak{m} \in \mathfrak{M}_{P(i,k)}$
the term $\hat{v}(\mathfrak{m})$ is equal to $\prod_{p \in P(i,k)} \mu_p^{\mathfrak{m}(p)}$.
Since $\mathfrak{m}(p) \leqslant \bfG(m_0)$ for all $p \in P(i,k)$, it follows that
$\hat{v}(\mathfrak{m})$ is a rational number whose denominator divides $\prod_{p \in X} D_p^{\bfG(m_0)}$,
and which is itself bounded above by $\prod_{p \in X} N_p^{\bfG(m_0)}$.
Consequently, every sum $\calS(i,j,k,c,c')$ is a rational number whose denominator divides $\prod_{p \in X} D_p^{\bfG(m_0)}$,
and which is itself bounded above by $\prod_{p \in X} (\bfG(m_0)+1) N_p^{\bfG(m_0)}$,
hence its denominator and numerator are of polynomial size.
\end{proof}

\begin{lemma}
Each sum $\calU(A,B,\mathbf{w},\alpha,\beta,\gamma)$ that we compute
is a rational number whose numerator and denominator
are at most exponential in $|P|$, $\bfG(m_0)$, $\bfW$ and $\bfK$.
\end{lemma}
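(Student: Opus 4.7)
The plan is to imitate, in a slightly more elaborate fashion, the bit-size bookkeeping used in the previous lemma for the sums $\calS(i,j,k,c,c')$. I would proceed by induction along the topological order in which the dynamic-programming cells $\calU(A,B,\mathbf{w},\alpha,\beta,\gamma)$ are produced by the recursion of Proposition~\ref{proposition:compute-d}.

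First, I would collect the base bit-size bounds. By assumption each rate $\lambda_b$ is a rational with $\bfK$-bit numerator and denominator; the visit rates $\mathbf{vis}(b)$ are obtained by inverting a rational $P \times P$ matrix, so their numerators and denominators are at most exponential in $|P|$ and $\bfK$, and each $\mu_p = N_p/D_p$ inherits the same bound. Consequently, the derived constants $\bfP$, $\nu_{x,i}$, the integer powers $\mu_p^k$ and $\nu_{y,j}^{\Delta_j}$, and the denominators $1 - \mu^{k'}$ appearing in the seven cases of the recursion (with $k$ and $k'$ polynomial in $\bfG(m_0)$ and $\bfW$), all have numerators and denominators at most exponential in $|P|$, $\bfG(m_0)$, $\bfW$ and $\bfK$; these denominators are moreover nonzero by Theorem~\ref{theo:ergodic}.

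Next, I would bound the number of DP cells and their dependencies. The parameters $A$ and $B$ vary in intervals of size $O(\bfG(m_0))$, the vector $\mathbf{w}$ admits at most $O(\bfW)$ admissible values (a single coordinate chosen among polynomially many, taken modulo some $\Delta_i \le \bfW$), and $\alpha,\beta,\gamma$ are all at most $|P|$. Each of the seven recursive cases expresses $\calU(A,B,\mathbf{w},\alpha,\beta,\gamma)$ as a rational linear combination of at most $O(\bfW)$ previously-computed cells, with coefficients whose bit-sizes were just bounded. Since addition or multiplication of two rationals only adds, up to constants and a $\log$ in the number of summands, the bit-sizes of their numerators and denominators, an induction along the topological order of the DP yields numerators and denominators at most exponential in the stated parameters for every cell $\calU$.

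The main technical obstacle lies in the fact that the auxiliary constants $\nu_{y,i} = \mu_{y,i}^{1/\Delta_i}/\mu_{x,a}$ are a priori irrational, and the explicit formulas in cases~1--7 contain fractional exponents such as $k/\Delta_\beta$ or $k\Delta_\gamma/\Delta_\beta$. What has to be verified case by case is that after cancellation the surviving coefficients multiplying previously-computed $\calU$-values are genuine rationals of polynomial bit-size; this can be done by grouping the sums over $k$ by the residue of $k$ modulo $\Delta_\gamma\Delta_\beta$ (or $\Delta_\beta$ alone, depending on the case), so that every $\mu_p^{y_i/\Delta_i}$ that survives the sum is an integer power of a rational, and by observing that all resulting denominators are products of previously-handled factors $1 - \mu^{k'}$. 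Once this algebraic reorganisation is made explicit, the inductive bit-size estimate goes through routinely and yields the claim.
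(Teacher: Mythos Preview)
Your inductive bit-size argument has a genuine gap. You write that ``addition or multiplication of two rationals only adds \ldots\ the bit-sizes of their numerators and denominators, [so] an induction along the topological order of the DP yields numerators and denominators at most exponential''. But the recursion of Proposition~\ref{proposition:compute-d} has depth $D$ polynomial in $|P|$ and $\bfG(m_0)$, and at each step a cell is expressed as a combination of up to $O(\bfW)$ predecessor cells. If bit-sizes merely \emph{add} at each step, then after $D$ levels the bit-size can be multiplied by $\bfW^{D}$, i.e.\ it becomes $\bfW^{\mathrm{poly}(|P|,\bfG(m_0))}$; the numbers themselves are then of order $2^{\bfW^{\mathrm{poly}}}$, which is doubly exponential in the stated parameters, not exponential. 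Nothing in your outline rules this out, because you never establish that the denominators of different cells are compatible (share a small common multiple) rather than multiplying freely.

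The paper's proof sidesteps this blow-up by arguing \emph{semantically} rather than syntactically. First, it exhibits a single explicit integer $\bfZ'$ (a product of polynomially many factors of the form $D_{y,i}^{\Delta_j}D_{y,j}^{\Delta_i}-N_{y,i}^{\Delta_j}N_{y,j}^{\Delta_i}$, etc.) and shows by induction that every $\calU(A,B,\lambda\,\mathbf{1}_{y,j},\alpha,\beta,\gamma)$ lies in $\mu_{y,j}^{\lambda/\Delta_j}\cdot\frac{\bbN}{\bfZ'}$. This simultaneously (i) factors out the irrational part uniformly, resolving the issue you flag in your last paragraph, and (ii) gives a global common denominator, so denominators never grow along the recursion. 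Second, to bound the numerator the paper does \emph{not} trace the recursion at all: it bounds the value of every sum directly by $\bfU_0=\sum_{\bfxy\in\bfD_0}\hatt{v}(\bfxy)$ for one fixed superset $\bfD_0$, and evaluates $\bfU_0$ via products of geometric series whose ratios are exactly the ergodicity quantities of Theorem~\ref{theo:ergodic}. The numerator is then at most $\bfU_0\cdot\bfZ'$. Your proposal would need at least the common-denominator observation to be salvageable; without it the induction does not deliver the claimed bound.
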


\begin{proof}
We observe easily, using an induction on $\alpha + \beta - \gamma + |A| + |B|$, that each number
$\calU(A,B,\lambda ~ \mathbf{1}_{y,j},\alpha,\beta,\gamma)$ belongs to the set
$\mu_{y,j}^{\lambda/\Delta_j}  \frac{\bbN}{\bfZ'}$, where
\begin{align*}
& \bfZ' = \left( \prod_{i=\gamma}^t \prod_{j=u}^\beta \bfZ_{y,y}(i,j) \right) \left( \prod_{i=1}^{\alpha} \prod_{j=u}^\beta \bfZ_{x,y}(i,j) \right)
\left(\prod_{i=t+1}^{u-1} \bfZ_y(i) \right) \left(\prod_{p \in Y} \bfZ(p) \right),\end{align*}\vspace{-5mm}
\begin{align*}
& \bfZ_{y,y}(i,j) = D_{y,i}^{\Delta_j} D_{y,j}^{\Delta_i} (D_{y,i}^{\Delta_j} D_{y,j}^{\Delta_i} - N_{y,i}^{\Delta_j} N_{y,j}^{\Delta_i}),
& & \bfZ_y(i) = D_{y,i} - N_{y,i} \text{ and } \\
& \bfZ_{x,y}(i,j) = D_{x,i}^{\Delta_j} D_{y,j}  (D_{x,i}^{\Delta_j} D_{y,j} - N_{x,i}^{\Delta_j} N_{y,j}),
& & \bfZ(p) = (N_p D_p)^{|A|+|B|}.
\end{align*}

In addition, every sum $\calU(A,B,\lambda ~ \mathbf{1}_{y,j},\alpha,\beta,\gamma)$ is bounded above by
the sum $\bfU_0 = \sum_{\bfxy \in \bfD_0} \hatt{v}(\xy)$, where
\begin{align*}
\bfD_0 =~ & \{\bfxy \in \bbN^{a+b-1} \mid |\bfA|+ \sum_{j=u}^b y_j \geqslant \sum_{i=1}^{a-1} x_i + \sum_{j=1}^t y_j\} \\
\subseteq~ & \sum_{\bfxy \in \calF_1} \{k ~ \bfxy \mid 0 \leqslant k \leqslant |\bfA|\} + \sum_{\bfxy \in \calF_2} \{k ~ \bfxy \mid 0 \leqslant k\} \text{, where} \\
\calF_1 =~ & \{\mathbf{1}_{x,i} \mid 1 \leqslant i \leqslant a-1\} \cup \{\mathbf{1}_{y,j} \mid 1 \leqslant j \leqslant t\} \text{ and} \\
\calF_2 =~ & \{\bfxy + \mathbf{1}_{y,j} \mid \bfxy \in \calF_1 \text{ and } u \leqslant j \leqslant b\} \cup \{\mathbf{1}_{y,j} \mid t < j \leqslant b\}.
\end{align*}
It follows immediately that
\[\bfU_0 \leqslant \left( \prod_{\bfxy \in \calF_1} (|\bfA|+1) (1 + \hatt{v}(\bfxy)^{|\bfA|}) \right) \left( \prod_{\bfxy \in \calF_2} (1 - \hatt{v}(\bfxy))^{-1} \right).\]

Moreover, observe that $\hatt{v}(\mathbf{1}_{x,i}) = \mu_{x,i} / \mu_{x,a}$ for all $i \leqslant a-1$
and that $\hatt{v}(\mathbf{1}_{y,j}) \leqslant (1 + \mu_{y,j}) (\mu_{x,a} + 1 / \mu_{x,a})$ for all $j \leqslant b$.
Similarly, for all $\bfxy \in \calF_2$, we have the following upper bounds:
\[(1 - \hatt{v}(\bfxy))^{-1} \leqslant \begin{cases}
\Delta_j / (1- \mu_{x,i}^{\Delta_j} \mu_{y,j}) & \text{if } \bfxy = \mathbf{1}_{x,i} + \mathbf{1}_{y,j} \\
\Delta_i \Delta_j / (1- \mu_{y,i}^{\Delta_j} \mu_{y,j}^{\Delta_i}) & \text{if } \bfxy = \mathbf{1}_{x,i} + \mathbf{1}_{y,j} \\
1 / (1 - \mu_{y,j}) & \text{if } \bfxy = \mathbf{1}_{y,j} \text{ with } t < j < u \\
\Delta_j / (1 - \mu_{x,a}^{\Delta_j} \mu_{y,j}) & \text{if } \bfxy = \mathbf{1}_{y,j} \text{ with } u \leqslant j \leqslant b
\end{cases}\]
Recalling that $|\bfA|$ and the integers $\Delta_j$ are both polynomially bounded in $|P|$, $\bfW$ and $\bfG(m_0)$,
the upper bound $\bfU_0$ itself is at most exponential in $|P|$, $\bfW$, $\bfG(m_0)$ and $\bfK$.
This concludes the proof.
\end{proof}

\end{document}